\newcommand{\p}{\partial}
\newcommand{\sign}{\mathop{\rm sign}\nolimits}
\newcommand{\todo}[1][\null]{\ensuremath{\clubsuit}}
\newcommand{\noprint}[1]{}
\newcommand{\checked}[1][\null]{\ensuremath{\boldsymbol{\surd}}}
\newcounter{tbn}
\newcounter{mcasenum}
\renewcommand{\themcasenum}{{\rm\arabic{mcasenum}}}
\newtheorem{theorem}{Theorem}
\newtheorem{lemma}{Lemma}
\newtheorem{corollary}{Corollary}
\newtheorem{proposition}{Proposition}
\newtheorem*{proposition*}{Proposition}
{\theoremstyle{definition}

\newtheorem{remark}{Remark}
}
\begin{document}

\par\noindent {\LARGE\bf
Extended Group Analysis\\ of Variable Coefficient
Reaction--Diffusion Equations\\  with Exponential Nonlinearities
\par}

{\vspace{4mm}\par\noindent\large O.\,O.~Vaneeva$^{\dag 1}$, R.\,O.~Popovych$^{\dag\ddag 2}$ and C. Sophocleous$^{\S 3}$
\par\vspace{2mm}\par}

{\vspace{2mm}\par\noindent\it
${}^\dag$\ Institute of Mathematics of National Academy of Sciences of Ukraine, \\
$\phantom{{}^\dag}$\ 3 Tereshchenkivska Str., Kyiv-4, 01601 Ukraine\\[2mm]
$^\ddag$\;Wolfgang Pauli Institute, Nordbergstra{\ss}e 15, A-1090 Wien, Austria\\[2mm]
$^\S$\;Department of Mathematics and Statistics, University of Cyprus, Nicosia CY 1678, Cyprus
}

{\par\vspace{2mm}\par\noindent
$\phantom{{}^\dag{}\;}$E-mail: \it$^1$vaneeva@imath.kiev.ua,
$^2$rop@imath.kiev.ua,
$^3$christod@ucy.ac.cy
\par}

{\vspace{5mm}\par\noindent\hspace*{8mm}\parbox{140mm}{\small
The group classification of a class of variable coefficient reaction--diffusion equations with exponential nonlinearities
is carried out up to both the equivalence generated by the corresponding generalized equivalence group and the general point equivalence.
The set of admissible transformations of this class is exhaustively described
via finding the complete family of maximal normalized subclasses and associated conditional equivalence groups.
Limit processes between variable coefficient reaction--diffusion equations with power nonlinearities and those with exponential nonlinearities
are simultaneously studied with limit processes between objects related to these equations (including Lie symmetries, exact solutions and conservation laws).
}\par\vspace{5mm}}

\section{Introduction}

The problem of extended group symmetry analysis of variable coefficient reaction--diffusion equations of the general form
\begin{equation}\label{equation_RD_general}
f(x)u_t=\left(g(x)A(u)u_x\right)_x+h(x)B(u),
\end{equation}
where $fgA\neq0$, was initiated in~\cite{VJPS2007,Vaneeva&Popovych&Sophocleous2009}.
The case of $A$ and $B$ being power functions, i.e., the class of equations having the form
\begin{equation}\label{eqRDfghTwoPower}
f(x)u_t=(g(x)u^nu_x)_x+h(x)u^m,
\end{equation}
where $fg\neq0$ and $(n,mh)\ne(0,0)$, was successfully investigated.
For many reasons, the natural continuation of that study is to consider equations of the form~\eqref{equation_RD_general} with $A$ and $B$ being exponential functions,
\begin{equation} \label{eqRDfghExp}
f(x)u_t=(g(x)e^{nu}u_x)_x+h(x)e^{mu}.
\end{equation}
Here $f=f(x)$, $g=g(x)$ and $h=h(x)$  are arbitrary smooth functions of the variable~$x$,
$fg\neq0$ and $n$ and $m$ are arbitrary constants.
The linear case, which is singled out by the condition $n=m=0$, is excluded from consideration as it is well investigated.
The semilinear equations of the form~\eqref{eqRDfghExp}, which correspond to the constraints $n=0$ and $m\ne0$, were already considered
in~\cite{vaneeva_proc,Vaneeva&Popovych&Sophocleous2011}.
Moreover, equations of the form~\eqref{eqRDfghExp} with $n\ne0$ are not related to linear and semilinear equations
of the same form via point transformations.
This is why in the present paper we study only the class of equations of the form~\eqref{eqRDfghExp} with $fgn\ne0$,
which we briefly call \emph{class}~\eqref{eqRDfghExp}.
Note that the parameter~$n$ can be gauged to~$1$ by a simple scaling of variables from the very beginning
but we will not use this gauge in the present paper and will deal with the general form~\eqref{eqRDfghExp}.

Motivated by the work in~\cite{VJPS2007} as a general outline for similar studies,
we carry out the complete group classification of class~\eqref{eqRDfghExp}.
In order to achieve this in the easiest way, we apply a number of modern tools of group analysis of differential equations:
generalized extended equivalence groups,
conditional equivalence groups,
maximal normalized subclasses,
the method of furcate split,
variable gauges of arbitrary elements by equivalence transformations,
additional equivalence transformations,
etc.
\cite{Ivanova&Popovych&Sophocleous2006I,Nikitin&Popovych2001,
Popovych&Ivanova2004NVCDCEs,Popovych&Kunzinger&Eshraghi2010,
VJPS2007,Vaneeva&Popovych&Sophocleous2009}.

The structure of this paper is as follows.
At first, in Section~\ref{SectionOnPreliminaryStudyOfAdmTrans} using the direct method
we derive the determining equations for admissible point transformations in class~\eqref{eqRDfghExp}.
In Section~\ref{SectionOnEquivGroups}
we find equivalence groups of different kinds for class~\eqref{eqRDfghExp}
and show that the consideration of this class can be simplified by setting the gauge $g=1$.
Moreover, two possibilities for further simplification in the case $m=n$ via using the associated conditional
equivalence group of class~\eqref{eqRDfghExp} are discussed.
In Section~\ref{SectionOnLieSymmetries}
the group classification of class~\eqref{eqRDfghExp} is carried out up to equivalence generated
by the generalized extended equivalence group of this class.
The usage of the method of furcate split for solving the group classification problem is explained in detail.
Admissible point transformations of equations from class~\eqref{eqRDfghExp} are exhaustively described
in Section~\ref{SectionClassificationOfAdmTrans}
in terms of conditional equivalence groups and normalized subclasses.
Section~\ref{SectionOnContractions} is devoted to the study of
contractions (nontrivial limit processes) between equations
from classes~\eqref{eqRDfghTwoPower} and~\eqref{eqRDfghExp}.
Using the derived contractions and the results obtained for class~\eqref{eqRDfghTwoPower} in~\cite{VJPS2007},
we construct exact solutions and conservation laws for equations from class~\eqref{eqRDfghExp}.
The results of the paper are summed up in the conclusion. 
The thorough gauging of constant parameters in the group classification list 
for class~\eqref{eqRDfghTwoPower}, which was not presented in \cite{VJPS2007}, 
is discussed in the appendix.

\section{Preliminary study of admissible transformations}\label{SectionOnPreliminaryStudyOfAdmTrans}

Due to a special structure of equations from
class~\eqref{eqRDfghExp}, the following problem can be solved
completely. To describe all point transformations each of which
connects a pair of equations from class~\eqref{eqRDfghExp}. Such
transformations are called
\emph{form-preserving}~\cite{Kingston&Sophocleous1998} or
\emph{admissible}~\cite{Popovych&Eshraghi2005} or \emph{allowed}~\cite{Winternitz92}
\emph{transformations}. See~\cite{Popovych&Eshraghi2005} for
stronger definitions. They can be naturally interpreted in terms
of the category theory~\cite{Prokhorova2005}.
Note that there exists an infinitesimal equivalent of this notion~\cite{Borovskikh2004}.

At first we make a preliminary study of admissible transformations for class~\eqref{eqRDfghExp}
using the direct method~\cite{Kingston&Sophocleous1998}.
We briefly describe the corresponding procedure.
In what follows we refer to~\eqref{eqRDfghExp} 
as a single equation under assuming that the arbitrary elements $f$, $g$, $h$, $n$ and~$m$ are fixed  
and as a class of equations if the arbitrary elements are varied. 
We use the same agreement for other classes of equations. 
Consider a pair of equations from the class under consideration,
i.e., equation~\eqref{eqRDfghExp} and the equation
\begin{equation}\label{eqRDfghExptilde}
\tilde f(\tilde x){\tilde u}_{\tilde t}=(\tilde g(\tilde x)e^{\tilde n\tilde u}{\tilde u}_{\tilde x})_{\tilde x}
+\tilde h(\tilde x)e^{\tilde m\tilde u},
\end{equation}
and assume that these equations are connected via a point
transformation~$\mathcal T$  of the general form
\[
\tilde t=T(t,x,u), \quad \tilde x=X(t,x,u), \quad \tilde u=U(t,x,u),
\]
where $|\partial(T,X,U)/\partial(t,x,u)|\ne0$.
We have to derive the determining equations for the functions~$T$, $X$ and $U$ and then to solve them.
Simultaneously we have to find the connection between arbitrary elements
of equations~\eqref{eqRDfghExp} and~\eqref{eqRDfghExptilde}. 

After substituting the expressions for the new variables (with tildes)
into~\eqref{eqRDfghExptilde}, we obtain an equation in the old variables (without tildes).
It should be an identity on the manifold~$\mathcal L$ determined by~\eqref{eqRDfghExp}
in the second-order jet space~$J^2$ with the independent variables $(t,x)$ and the dependent variable~$u$.
To involve the constraint between variables of~$J^2$ on the manifold~$\mathcal L$,
we substitute the expression of~$u_t$ implied by equation~\eqref{eqRDfghExp}.
The splitting of this identity with respect to the derivatives $u_{tx}$, $u_{tt}$, $u_{xx}$ and $u_x$
implies the determining equations for the functions~$T$, $X$ and $U$.
In particular, we obtain that $T_u=T_x=X_u=0$ (and hence $T_tX_xU_u\ne0$).
This well agrees with results on more general classes of evolution equations
\cite{Kingston&Sophocleous1998,Magadeev1993,Popovych&Ivanova2004NVCDCEs,Prokhorova2005}.
The restrictions $T_u=T_x=0$ are common for point transformations between $(1+1)$-dimensional evolution equations of order greater than one.
The restriction $X_u=0$ follows from the additional condition that the related equations are quasilinear, i.e., they have the form
$u_t=F(t,x,u)u_{xx}+G(t,x,u,u_x)$ with $F\ne 0$.

We take into account the above determining equations.
Then collecting the coefficients of~$u_{xx}$ gives
$\tilde f gX_x^2e^{nu}=f\tilde g T_te^{\tilde n U}.$
Dividing the last equation by $e^{nu}$ and differentiating the
resulting equation with respect to~$u$,
we obtain that $f\tilde g T_te^{\tilde n U-nu}(\tilde n U_u-n)=0$, i.e., $\tilde nU_u=n$.
Therefore, $U_u$ is a nonvanishing constant, which we denote by~$\delta_3$,
and the component~$U$ can be represented in the form $U=\delta_3 u+\psi(t,x)$
with the smooth function~$\psi$ of~$t$ and~$x$.
The collecting coefficients of~$u_x$ leads to the equation
\[
\left(2\tilde g V_x X_x-\frac{\tilde f}f\,\frac{X_ x^{\,3}}{T_t}g_x+{\tilde g}_{\tilde x}V X_x^{\,2}
-{\tilde g}\,V X_{xx}\right)T_t e^{nu}+{\tilde f}X_x^{\,2}X_t =0,
\]
where by $V$ we denote $e^{\frac{n}{\delta_3} \psi}$, which is a nonvanishing function of~$t$ and~$x$.
The further splitting with respect to~$u$ implies, in particular, that $X_t=0$.
Taking into account the found constraints, we deduce the following expressions for components of the point transformation~$\mathcal T$:
\begin{equation}\label{adm_tr}
T=T(t),\quad X=\varphi(x),\quad U=\delta_3 u+\psi(t,x),\quad \tilde n=\frac{n}{\delta_3},
\end{equation}
where $T$, $\varphi$ and~$\psi$ are smooth functions of their arguments, $\delta_3$ is a constant and $T_t\varphi_x\delta_3\neq0$.
The remainder of determining equations for~$\mathcal T$ is given by
the unused constraints associated with the coefficients~$u_{xx}$ and $u_x$ and
the collection of the term without derivatives of~$u$,
\begin{gather}\label{det_eq_adm_tr1}
f\tilde gVT_t-\tilde f g \varphi_x^{\,2}=0,\quad
2\frac{V_x}V-\frac{g_x}g+\frac{\tilde g_{\tilde x}}{\tilde g} \varphi_x-\frac{\varphi_{xx}}{\varphi_x}=0,
\\\label{det_eq_adm_tr2}
 \frac{T_t}{\varphi_x}\left(\tilde g\frac{ V_x}{\varphi_x}\right)_xe^{nu}
-\tilde f\frac{V_t}{V}-n\frac{\tilde f}fhe^{mu}+
\frac n{\delta_3}T_te^{\tilde m  \psi}\tilde h e^{\tilde m \delta_3 u}=0.
\end{gather}
(We additionally simplify the above system by substituting
the expression for the ratio $\tilde f/f$ from the first equation into the second one.)

\section{Equivalence groups}\label{SectionOnEquivGroups}

In order to solve a group classification problem
it is essential to derive the equivalence transformations which
preserve the differential structure of the class of differential equations under consideration and transform only
its arbitrary elements.
These transformations form a group~\cite{Ovsiannikov1982}, which is called the equivalence group of the class.

There exist several kinds of equivalence groups. The usual equivalence group
consists of the nondegenerate point transformations of independent and dependent variables
as well as arbitrary elements of the class.
Moreover, transformations of independent and dependent variables do not depend on arbitrary
elements. If such dependence arises then the corresponding equivalence group is called {\it generalized} \cite{Meleshko1994}.
If new arbitrary elements are expressed via old ones in some non-point, possibly nonlocal, way (e.g. new
arbitrary elements are determined via integrals of old ones) then
the equivalence transformations are called {\it extended}.
The first examples of a generalized equivalence group and of an extended equivalence group
were presented in~\cite{Meleshko1994} and~\cite{mogran}, respectively.
See a number of examples
on different equivalence groups and their role in solving complicated group classification problems,
e.g., in~\cite{Vaneeva&Popovych&Sophocleous2009,VJPS2007,Ivanova&Popovych&Sophocleous2006I}.

It appears that the generalized extended equivalence group~$\hat G^{\sim}$ of class~\eqref{eqRDfghExp}
is nontrivial, i.e., it is wider than the usual equivalence group of the same class.
We construct the group~$\hat G^{\sim}$ using the constraints derived
in the course of the preliminary study of admissible transformations
and varying arbitrary elements in equations~\eqref{det_eq_adm_tr1} and~\eqref{det_eq_adm_tr2},
which form the complete system of classifying conditions for admissible transformations in class~\eqref{eqRDfghExp}.

For the general values of~$n$ and~$m$ the exponents~$e^{0u}$, $e^{nu}$ and~$e^{mu}$ are linearly independent.
This is why due to varying of arbitrary elements we necessarily have that $\tilde m=m/\delta_3$.
Splitting of equation~\eqref{det_eq_adm_tr2} with respect to $u$ results in $V_t=0$ (i.e., $\psi(t,x)=\psi(x)$) and therefore differentiating
both sides of the first equation of~\eqref{det_eq_adm_tr1}  with respect to $t$ we have
$T_{tt}=0,$ i.e., $T=\delta_1t+\delta_2$.
Then solving equations~\eqref{det_eq_adm_tr1} we obtain
\[
\tilde g=\frac{\delta_0\varphi_x}{V^2}\,g,\quad \tilde f=\frac{\delta_0T_t}{V\varphi_x}\,f,
\]
where $\delta_0$ is a nonvanishing constant.
The remaining equations are
\begin{gather*}
\left(\tilde g\frac{V_x}{\varphi_x}\right)_x=0,\quad
\tilde h fe^{\frac m{\delta_3}\psi}\delta_1 - {\delta_3}\tilde fh=0.
\end{gather*}
Solving these equations, we obtain
\begin{gather}
\tilde h=\frac{\delta_0\delta_3}{V\varphi_x}e^{-\frac m{\delta_3}  \psi}h,\quad V=\left(\delta_4\int \frac{dx}{g(x)}+\delta_5\right)^{-1},
\end{gather}
where $\delta_4$ and $\delta_5$ are arbitrary constants with $(\delta_4,\delta_5)\ne(0,0)$.

The equivalence group of the subclass singled out from class~\eqref{eqRDfghExp} by the constraint $m=n$
is wider than the equivalence group of the whole class~\eqref{eqRDfghExp}.
In other words, the constraint $m=n$ is associated with a nontrivial conditional equivalence group of class~\eqref{eqRDfghExp}.
In this case we have less splitting with respect to~$u$. As a result,
$V$ (or, equivalently, $\psi$) is an arbitrary smooth function of~$t$ and
\[
\tilde h=\frac{\delta_0\delta_3}{nV\varphi_x}\left[\frac{nh}V-\left(\frac{gV_x}{V^2}\right)_x\,\right].
\]

The above results are summarized in Theorems~\ref{equivfgh_exp} and~\ref{equivfgh_exp_m=n}.
In what follows we use the notation $\Psi=1/V=e^{-\frac n{\delta_3}\psi}$ in order to simplify formulas. 

\begin{theorem}\label{equivfgh_exp}
The generalized extended equivalence group~$\hat G^{\sim}$ of
class~\eqref{eqRDfghExp} is formed by the transformations
\[
\begin{array}{l}
\tilde t=\delta_1 t+\delta_2,\quad \tilde x=\varphi(x), \quad
\tilde u=\delta_3u+\psi(x), \\[1ex]
\tilde f=\dfrac{\delta_0\delta_1}{\varphi_x}\Psi f, \quad
\tilde g=\delta_0\varphi_x\Psi^2 g, \quad
\tilde h=\dfrac{\delta_0\delta_3}{\varphi_x} e^{-\frac m{\delta_3}\psi}\Psi h, \quad
\tilde n=\dfrac{n}{\delta_3}, \quad
\tilde m=\dfrac{m}{\delta_3},
\end{array}
\]
where $\varphi$ is an arbitrary smooth function
of~$x$, $\varphi_x\not=0$; $\psi$ and $\Psi$ are determined by the formulas
\[
\psi(x)= -\dfrac{\delta_3}{n}\ln\left|\Psi(x)\right|,\quad\Psi(x)=\delta_4\int \frac{dx}{g(x)}+\delta_5.
\]
$\delta_j,$ $j=0,\dots,5$, are arbitrary constants,
$\delta_0\delta_1\delta_3\not=0$ and $(\delta_4,\delta_5)\ne(0,0)$.
\end{theorem}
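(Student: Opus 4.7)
The plan is to convert the classifying conditions \eqref{det_eq_adm_tr1}--\eqref{det_eq_adm_tr2}, together with the preliminary admissible-transformation form~\eqref{adm_tr}, into a parametric description of~$\hat G^{\sim}$ by treating the arbitrary elements $f,g,h$ and $\tilde f,\tilde g,\tilde h$ as free. Since an equivalence transformation must work for \emph{every} admissible tuple of arbitrary elements, I may split the determining equations with respect to~$u$ using the linear independence of $1$, $e^{nu}$, $e^{mu}$ and $e^{\tilde m\delta_3 u}$ in the generic case.

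Splitting~\eqref{det_eq_adm_tr2} proceeds term by term. The $e^{mu}$ and $e^{\tilde m\delta_3 u}$ summands must match, forcing $\tilde m=m/\delta_3$ and yielding an algebraic relation that later determines~$\tilde h$. The coefficient free of exponentials is $-\tilde f V_t/V$; since $\tilde f\ne0$, this gives $V_t=0$, so $\psi=\psi(x)$. The coefficient of $e^{nu}$ reduces to the ODE
\[
\left(\tilde g\,\frac{V_x}{\varphi_x}\right)_{\!x}=0.
\]
Now in the first equation of~\eqref{det_eq_adm_tr1} every factor apart from $T_t$ is independent of~$t$, so differentiating in~$t$ produces $T_{tt}=0$, i.e.\ $T=\delta_1 t+\delta_2$ with $\delta_1\neq0$.

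Solving the first equation of~\eqref{det_eq_adm_tr1} for $\tilde f$ and using a single scaling constant $\delta_0\ne0$ to parametrise the residual multiplicative freedom, I obtain $\tilde g=\delta_0\varphi_x V^{-2}g$ and $\tilde f=\delta_0\delta_1\varphi_x^{-1}V^{-1}f$. A direct substitution shows that the second equation of~\eqref{det_eq_adm_tr1} collapses to an identity under this parametrisation, so it imposes no extra constraint. The $e^{mu}$-coefficient equation, after inserting the expressions already found, gives $\tilde h=\delta_0\delta_3\varphi_x^{-1}e^{-(m/\delta_3)\psi}\Psi h$. Finally, substituting $\tilde g=\delta_0\varphi_x V^{-2}g$ into the ODE above reduces it to $\bigl(g(1/V)_x\bigr)_x=0$; writing $\Psi:=1/V$ and integrating twice yields
\[
\Psi=\delta_4\int\frac{dx}{g(x)}+\delta_5,\qquad (\delta_4,\delta_5)\neq(0,0),
\]
the nondegeneracy condition being forced by $V\neq0$. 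The formula $\psi=-(\delta_3/n)\ln|\Psi|$ then merely rephrases $V=e^{-(n/\delta_3)\psi}$. Closure of the resulting family under composition and inversion can be checked directly from the explicit parametrisation.

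The main obstacle is structural rather than computational: the expression for~$\Psi$ involves an antiderivative of~$1/g$, so the transformations of~$x$ and~$u$ depend on the arbitrary element~$g$ in a genuinely nonlocal way. This is precisely what forces the equivalence group to be generalized and extended rather than usual, and why the splitting has to be performed at the level of the classifying system~\eqref{det_eq_adm_tr1}--\eqref{det_eq_adm_tr2} rather than inside a preassumed ansatz. Secondary care is needed to keep the special case $m=n$ strictly out of this analysis, since there the linear independence of $e^{nu}$ and $e^{mu}$ breaks down and $V$ is no longer forced to be $t$-independent; that case is the content of the companion Theorem~\ref{equivfgh_exp_m=n}.
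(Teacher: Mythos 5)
Your proposal is correct and follows essentially the same route as the paper: split~\eqref{det_eq_adm_tr2} with respect to~$u$ using the linear independence of $1$, $e^{nu}$, $e^{mu}$ to get $\tilde m=m/\delta_3$, $V_t=0$ and $\bigl(\tilde g V_x/\varphi_x\bigr)_x=0$, then deduce $T_{tt}=0$, solve~\eqref{det_eq_adm_tr1} for $\tilde f$ and~$\tilde g$ with a single integration constant~$\delta_0$, and integrate the residual ODE to obtain~$\Psi=\delta_4\int dx/g+\delta_5$ and the formula for~$\tilde h$. The only presentational difference is that the paper obtains $\tilde g=\delta_0\varphi_x g/V^2$ by integrating the second equation of~\eqref{det_eq_adm_tr1} directly, whereas you introduce $\delta_0$ as the residual multiplicative freedom and verify that equation afterwards; the content is the same.
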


Thus, elements of~$\hat G^{\sim}$ are parameterized by six arbitrary constants and a single arbitrary smooth function of~$x$.
The usual equivalence group $G^{\sim}$ of class~\eqref{eqRDfghExp} is the subgroup of the generalized extended
equivalence group $\hat G^{\sim}$, which is singled out with the condition $\delta_4=0$.

\begin{theorem}\label{equivfgh_exp_m=n}
The generalized extended equivalence group of the class of equations
\begin{gather}\label{eqRDfghExp_m=n}
f(x)u_t=(g(x)e^{nu}u_x)_x+h(x)e^{nu}\quad\mbox{with}\quad nfg\ne0,
\end{gather}
coincides with the usual equivalence group~$G^{\sim}_{m=n}$ of this class and consists of the transformations
\[
\begin{array}{l}
\tilde t=\delta_1 t+\delta_2,\quad \tilde x=\varphi(x),\quad
\tilde u=\delta_3u+\psi(x), \\[2ex]
\tilde f=\dfrac{\delta_0\delta_1}{\varphi_x}\Psi f, \quad
\tilde g=\delta_0\varphi_x \Psi^2 g,\quad
\tilde h=\dfrac{\delta_0\delta_3}{n\varphi_x}\left[nh\Psi+
(g\Psi_x)_x\right]\Psi,
\quad \tilde n=\dfrac{n}{\delta_3},
\end{array}
\]
where $\delta_j,$ $j=0,1,2,3,$ are arbitrary constants,
$\delta_0\delta_1\delta_3\not=0$, $\varphi$ and $\psi$ are arbitrary smooth functions of~$x$ with $\varphi_x\ne0$,
$\Psi(x)=e^{-\frac{n\psi(x)}{\delta_3}}$.
\end{theorem}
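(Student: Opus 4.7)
The plan is to revisit the derivation preceding Theorem~\ref{equivfgh_exp} and trace precisely how the argument changes once the constraint $m=n$ is imposed from the outset. In that earlier derivation, the relation $\tilde m=m/\delta_3$ was obtained by splitting \eqref{det_eq_adm_tr2} with respect to~$u$ under the assumption that the exponents $0$, $n$, $m$ are pairwise distinct. When $m=n$, the three exponentials $e^{nu}$, $e^{mu}$ and $e^{\tilde m\delta_3 u}$ in \eqref{det_eq_adm_tr2} all collapse to $e^{nu}$ (using $\tilde m\delta_3=m=n$), so splitting with respect to $u$ now yields only two independent conditions rather than three, and the relation $\tilde m=m/\delta_3$ need not be imposed separately.

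First I would split \eqref{det_eq_adm_tr2} with respect to $u$. The $u$-independent part reduces to $\tilde f\,V_t/V=0$, which forces $V_t=0$, so $\psi=\psi(x)$ and $V=V(x)$. The coefficient of $e^{nu}$ yields a single equation, linear in $\tilde h$, that will be solved for $\tilde h$ at the end. Next, from the first equation of \eqref{det_eq_adm_tr1} I obtain $\tilde f=\delta_0 T_t f/(V\varphi_x)$ and $\tilde g=\delta_0\varphi_x g/V^2$; since $\tilde f$, $f$, $V$ and $\varphi_x$ depend on $x$ alone, $T_t$ must be a constant, giving $T=\delta_1 t+\delta_2$. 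Substituting the expression for~$\tilde g$ into the second equation of \eqref{det_eq_adm_tr1} reduces it to an identity and contributes no new information.

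Plugging $\tilde f$ and $\tilde g$ back into the rearranged coefficient of $e^{nu}$ and rewriting in terms of $\Psi=1/V$ (noting that $\tilde g V_x/\varphi_x=-\delta_0 g\Psi_x$) gives the closed-form expression $\tilde h=(\delta_0\delta_3/(n\varphi_x))[\,nh\Psi+(g\Psi_x)_x\,]\Psi$ stated in the theorem. Crucially, no further condition on~$V$ appears: the constraint $(\tilde g V_x/\varphi_x)_x=0$ that in Theorem~\ref{equivfgh_exp} forced the restrictive form $\Psi=\delta_4\int dx/g+\delta_5$ is now absorbed into the formula for~$\tilde h$, leaving $\psi$ (equivalently~$\Psi$) as an arbitrary smooth function of~$x$. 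Since $\psi$ enters the transformations of the arbitrary elements only through local operations (no integral of $g$ survives), the group is a \emph{usual} equivalence group, which justifies the notation $G^{\sim}_{m=n}$.

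The main obstacle to anticipate is bookkeeping rather than conceptual: one must keep track of the constants $\delta_0,\dots,\delta_3$ and confirm the compatibility between $V$ as a free function of~$x$ and the requirement that $\tilde f$, $\tilde g$, $\tilde h$ depend only on $\tilde x=\varphi(x)$. Once the form of the transformations is established, group closure is automatic from the characterization as all admissible transformations within the subclass, though if an explicit check were wanted it would amount to substituting the expected composition law into $\Psi$ and verifying via the chain rule that $\tilde\Psi$ has the prescribed form.
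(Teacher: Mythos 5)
Your proposal is correct and follows essentially the same route as the paper: with $m=n$ the exponentials in \eqref{det_eq_adm_tr2} collapse so that splitting with respect to~$u$ yields only $V_t=0$ (hence $T_{tt}=0$) plus a single remaining equation that is solved for~$\tilde h$, so the constraint $(\tilde g V_x/\varphi_x)_x=0$ that forced $\Psi=\delta_4\int dx/g+\delta_5$ in Theorem~\ref{equivfgh_exp} disappears and $\Psi$ stays a free local function of~$x$, making the group a usual one. The only slip is cosmetic: $\tilde g=\delta_0\varphi_x g/V^2$ is obtained by integrating the \emph{second} equation of~\eqref{det_eq_adm_tr1} (which is where the constant $\delta_0$ enters), after which the first equation gives $\tilde f$ — both expressions cannot be read off the first equation alone.
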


Elements of~$G^{\sim}_{m=n}$ are parameterized by four arbitrary constants
and two arbitrary smooth functions of the single variable~$x$.
Therefore, the group~$G^{\sim}_{m=n}$ is really a nontrivial conditional equivalence group of class~\eqref{eqRDfghExp}.

In view of Theorem~\ref{equivfgh_exp}, the family of equivalence transformations
\begin{equation}\label{EqGaugeTransformationForG1}
\tilde t=t,\quad \tilde x=\int^x_{x_0}\frac{dy}{g(y)}, \quad \tilde u=u,
\end{equation}
parameterized by the arbitrary element~$g$ maps class~\eqref{eqRDfghExp} onto its subclass
consisting of equations of the form
\begin{equation*}
\tilde f \tilde u_{\tilde t}=(e^{\tilde n\tilde u}\tilde u_{\tilde x})_{\tilde x}+\tilde h e^{\tilde m\tilde u},
\end{equation*}
with $\tilde g=1$.
The new arbitrary elements are expressed via the old ones in the following way:
\[
\tilde f=fg,\quad
\tilde h=gh, \quad
\tilde m=m,\quad \tilde n =n.
\]
Hence, within the framework of symmetry analysis
it suffices, without loss of generality, to investigate the equations of the general form
\begin{equation} \label{eqRDfghExp2}
f(x)u_t=(e^{nu}u_x)_x+h(x)e^{mu}\quad\mbox{with}\quad nf\ne0
\end{equation}
instead of class~\eqref{eqRDfghExp}
because all results on symmetries, solutions and conservation laws of equations from subclass~\eqref{eqRDfghExp2}
can be extended to the entire class~\eqref{eqRDfghExp} with transformations~\eqref{EqGaugeTransformationForG1}.
In other words, up to $\hat G^{\sim}$-equivalence we can assign the gauge $g=1$ for the arbitrary element~$g$.%
\footnote{Different simple gauges of arbitrary elements by equivalence transformations are possible.
For example, instead of $g=1$ we can set $f=1$ but the gauge $g=1$ is more convenient
because it results in simpler group classification of class~\eqref{eqRDfghExp}.
Simultaneously, we can assign the value~$1$ to the constant arbitrary element~$n$.
It has been noted in the introduction that this gauge is not essential in the course of group classification
and hence it will be used only in the presentation of the final classification list.}

The description of the generalized equivalence group of class~\eqref{eqRDfghExp2}
and the generalized conditional equivalence group of the same class
which is associated with the condition $m=n$ is deduced
from Theorems~1 and~2 by setting $\tilde g=g=1$.

\begin{theorem}\label{equivfgh_exp1}
The generalized equivalence group~$\hat G^{\sim}_1$ of
class~\eqref{eqRDfghExp2} is formed by the transformations
\[
\begin{array}{l}
\tilde t=\delta_1 t+\delta_2,\quad \tilde x=\dfrac{\delta_6
x+\delta_7}{\delta_4 x+\delta_5}, \quad
\tilde u=\delta_3 u-\dfrac{\delta_3}{n}\ln|\delta_4x+\delta_5|, \\[2ex]
\tilde f=\dfrac{\delta_1}{\Delta^2}|\delta_4 x+\delta_5|^3 f, \quad
\tilde h=\dfrac{\delta_3}{\Delta^2}|\delta_4 x+\delta_5|^{\frac mn+3} h, \quad
\tilde n=\dfrac{n}{\delta_3}, \quad
\tilde m=\dfrac{m}{\delta_3},
\end{array}
\]
where  $\delta_j,$ $j=1,\dots,7$, are arbitrary constants such that
$\delta_1\delta_3\not=0$, $\Delta=\delta_5\delta_6-\delta_4\delta_7\ne0$ and
the tuple $(\delta_4,\delta_5,\delta_6,\delta_7)$ is defined up to a nonzero multiplier;
e.g., we can set $\Delta=\pm1$.
\end{theorem}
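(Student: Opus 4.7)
The plan is to derive Theorem~\ref{equivfgh_exp1} as a direct specialization of Theorem~\ref{equivfgh_exp}
to the gauge $\tilde g = g = 1$, which was shown earlier in the section to be admissible up to
$\hat G^{\sim}$-equivalence. With $g=1$ the formula $\Psi(x) = \delta_4 \int dx/g(x) + \delta_5$ from
Theorem~\ref{equivfgh_exp} reduces to $\Psi(x) = \delta_4 x + \delta_5$, the constant of integration being
absorbed into~$\delta_5$. Imposing $\tilde g = 1$ in the relation $\tilde g = \delta_0 \varphi_x \Psi^2 g$
then yields the ODE
\[
\varphi_x = \frac{1}{\delta_0 (\delta_4 x + \delta_5)^2},
\]
whose solution is a linear-fractional (Möbius) function of~$x$.

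The next step is to parameterize this solution cleanly as
$\varphi(x) = (\delta_6 x + \delta_7)/(\delta_4 x + \delta_5)$ with two new constants $\delta_6$, $\delta_7$.
A direct computation gives $\varphi_x = \Delta/(\delta_4 x + \delta_5)^2$ with
$\Delta = \delta_5 \delta_6 - \delta_4 \delta_7$, and matching with the ODE above forces
$\delta_0 = 1/\Delta$; in particular $\Delta \ne 0$. I would then substitute
$\Psi = \delta_4 x + \delta_5$, $\varphi_x = \Delta/(\delta_4 x + \delta_5)^2$, $\delta_0 = 1/\Delta$, and
$\psi(x) = -(\delta_3/n)\ln|\delta_4 x + \delta_5|$ into the formulas of Theorem~\ref{equivfgh_exp}
for $\tilde u$, $\tilde f$, $\tilde h$, $\tilde n$, $\tilde m$. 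Using the identity
$e^{-(m/\delta_3)\psi} = |\delta_4 x + \delta_5|^{m/n}$, these substitutions produce precisely the expressions
claimed in Theorem~\ref{equivfgh_exp1}. The projective redundancy of the tuple
$(\delta_4, \delta_5, \delta_6, \delta_7)$ is the familiar feature of the Möbius parameterization and may
be fixed by the normalization $\Delta = \pm 1$.

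There is no genuine obstacle: the argument is a straightforward reduction of the already-established classification
in Theorem~\ref{equivfgh_exp}. The only points that require a little care are the correct integration of the ODE
for~$\varphi$ and the recognition of the result as a Möbius function, the forced identification
$\delta_0 = 1/\Delta$ obtained by matching the two expressions for $\varphi_x$, and the bookkeeping of absolute
values arising from the logarithm in~$\psi$ when it is exponentiated in the formula for $\tilde h$.
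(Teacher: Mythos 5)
Your proposal is correct and follows exactly the route the paper itself takes: the paper states that Theorem~\ref{equivfgh_exp1} is ``deduced from Theorems~1 and~2 by setting $\tilde g=g=1$,'' and your derivation fills in precisely those omitted details — $\Psi=\delta_4x+\delta_5$, the ODE $\delta_0\varphi_x\Psi^2=1$ forcing $\varphi$ to be M\"obius with the identification $\delta_0=1/\Delta$, and the substitution back into the formulas for $\tilde u$, $\tilde f$, $\tilde h$. The computations check out (e.g., $e^{-\frac m{\delta_3}\psi}=|\delta_4x+\delta_5|^{m/n}$ gives the stated exponent $\frac mn+3$ in $\tilde h$), so no further comment is needed.
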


\begin{theorem}\label{equivfgh_exp_m=n=1}
The class of equations
\begin{gather}\label{eqRDfghExp_m=n=1}
f(x)u_t=(e^{nu}u_x)_x+h(x)e^{nu}\quad\mbox{with}\quad nf\ne0
\end{gather}
admits the generalized equivalence group~$\hat G^{\sim}_{1,m=n}$ consisting of the transformations
\begin{gather}\label{EqEquivTr_m-n}
\tilde t=\delta_1 t+\delta_2,\quad \tilde x=\varphi(x),\quad
\tilde u=\delta_3u+\dfrac{\delta_3}{2n}\ln|\delta_0^2\varphi_x|, \\\nonumber
\tilde f=\delta_0\delta_1|\varphi_x|^{-\frac32} f, \quad
\tilde h=\delta_3\varphi_x^{-2}h+\dfrac{\delta_3}{n}|\varphi_x|^{-\frac32}
(|\varphi_x|^{-\frac12})_{xx},\quad  \tilde n=\dfrac{n}{\delta_3},
\end{gather}
where $\delta_j,$ $j=0,\dots,3,$ are arbitrary constants with $\delta_0\delta_1\delta_3\not=0$
and $\varphi=\varphi(x)$ is an arbitrary smooth function with $\varphi_x\ne0$.
\end{theorem}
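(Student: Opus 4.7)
The plan is to derive Theorem~\ref{equivfgh_exp_m=n=1} as a direct specialization of Theorem~\ref{equivfgh_exp_m=n}, by imposing the gauge $\tilde g=g=1$ that singles out class~\eqref{eqRDfghExp_m=n=1} from the larger class~\eqref{eqRDfghExp_m=n}. This is precisely the reduction strategy flagged in the paragraph preceding the theorem, and it avoids redoing the analysis of the determining equations~\eqref{det_eq_adm_tr1}--\eqref{det_eq_adm_tr2} from scratch.

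First I would start from the relation $\tilde g=\delta_0\varphi_x\Psi^2 g$ of Theorem~\ref{equivfgh_exp_m=n} and impose $\tilde g=g=1$. This turns it into the algebraic constraint $\delta_0\varphi_x\Psi^2=1$. Since $\Psi=e^{-n\psi/\delta_3}>0$, this forces $\delta_0\varphi_x>0$ and
\[
\Psi=(\delta_0\varphi_x)^{-1/2}.
\]
Thus $\Psi$, and hence $\psi$, ceases to be an independent arbitrary element and is completely determined by $\varphi$ (up to an overall sign encoded in~$\delta_0$). This accounts for the drop from \emph{two} arbitrary functions and four constants in Theorem~\ref{equivfgh_exp_m=n} to \emph{one} arbitrary function and four constants in Theorem~\ref{equivfgh_exp_m=n=1}.

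Second I would substitute this expression for $\Psi$ back into the remaining relations of Theorem~\ref{equivfgh_exp_m=n}. The identity $\psi=-(\delta_3/n)\ln\Psi$ gives $\psi=(\delta_3/(2n))\ln(\delta_0\varphi_x)$, which after a cosmetic reparametrization of~$\delta_0$ matches the stated form $(\delta_3/(2n))\ln|\delta_0^2\varphi_x|$. The formula $\tilde f=(\delta_0\delta_1/\varphi_x)\Psi f$ simplifies, using $\sign(\delta_0)=\sign(\varphi_x)$ from the constraint $\delta_0\varphi_x>0$, to $\tilde f=\delta_0\delta_1|\varphi_x|^{-3/2}f$. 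The one nontrivial computation is the $\tilde h$-formula
\[
\tilde h=\frac{\delta_0\delta_3}{n\varphi_x}\bigl[nh\Psi+(g\Psi_x)_x\bigr]\Psi
\]
with $g=1$: the summand $nh\Psi^2$ contributes $\delta_3\varphi_x^{-2}h$ via $\delta_0\Psi^2=1/\varphi_x$, while the summand $\Psi\Psi_{xx}$, after writing $\Psi=|\delta_0|^{-1/2}|\varphi_x|^{-1/2}$ and invoking the sign identity once more, yields exactly $(\delta_3/n)|\varphi_x|^{-3/2}(|\varphi_x|^{-1/2})_{xx}$. The group axioms (closure under composition and inversion) are inherited for free from Theorem~\ref{equivfgh_exp_m=n}, since the gauge condition $\tilde g=1$ is preserved under composition within the parent group.

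The main obstacle I anticipate is the careful bookkeeping of sign factors arising from $\Psi^2=1/(\delta_0\varphi_x)$: a naive substitution would leave spurious $\sign(\cdot)$ factors in the $\tilde h$-formula, and the clean expression in absolute values is only obtained by systematically using the constraint $\sign(\delta_0)=\sign(\varphi_x)$. Everything else reduces to straightforward algebraic manipulation and a harmless reparametrization of~$\delta_0$.
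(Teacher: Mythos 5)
Your proposal is correct and follows essentially the same route as the paper, which obtains Theorem~\ref{equivfgh_exp_m=n=1} precisely by setting $\tilde g=g=1$ in Theorem~\ref{equivfgh_exp_m=n}, solving $\delta_0\varphi_x\Psi^2=1$ for $\Psi$, and substituting; your sign bookkeeping via $\sign(\delta_0)=\sign(\varphi_x)$ and the reparametrization of $\delta_0$ correctly reproduce the stated formulas for $\tilde u$, $\tilde f$ and $\tilde h$.
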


Class~\eqref{eqRDfghExp_m=n=1} can be mapped onto a proper subclass with only one arbitrary element depending on~$x$
using an appropriate family of point transformations from the group~$\hat G^{\sim}_{1,m=n}$.
The most convenient gauges for arbitrary elements are the gauges $f=1$ and $h=0$.

The first gauge can be realized by the transformation
\begin{equation}\label{EqGaugeTransformationForf1}
\tilde t=\sign(f)t,\quad \tilde x=\int^x_{x_0}f(y)^\frac23{dy}, \quad
\tilde u=u+\frac1{3n}\ln|f|,
\end{equation}
which maps an equation of the form~\eqref{eqRDfghExp_m=n=1} to the equation
$\tilde u_{\tilde t}=(e^{\tilde n\tilde u}\tilde u_{\tilde x})_{\tilde x}+\tilde h e^{\tilde n\tilde u},$
i.e., $\tilde f=1$.
The other new arbitrary elements are expressed via the old ones in the following way:
\[
\tilde h=
%f^{-\frac43}\left(h+\frac4{9n}\frac{f_x^2}{f^2}-\frac1{3n}\frac{f_{xx}}{f}\right)
\frac1f\left(f^{-\frac13}h+\frac1{n}(f^{-\frac13})_{xx}\right), \quad
\tilde n =n.\]

\begin{theorem}
The generalized equivalence group~$\hat G^{\sim}_{f=g=1,m=n}$ of the class of equations
\begin{gather}\label{eqRDfghExp_m=n_f=1}
u_t=(e^{nu}u_x)_x+h(x)e^{nu}\quad\mbox{with}\quad n\ne0,
\end{gather}
consists of the transformations
\[
\tilde t=\delta_1 t+\delta_2,\quad \tilde x=\delta_4x+\delta_5,\quad
\tilde u=\delta_3u+\frac{\delta_3}{n}\ln\frac{\delta_4^2}{\delta_1}, \quad
\tilde h=\frac{\delta_3}{\delta_4^2}h, \quad \tilde n=\frac n{\delta_3},
\]
where $\delta_j,$ $j=1,\dots,5$ are arbitrary constants,
$\delta_1\delta_3\delta_4\not=0$, $\delta_1>0$. %($\delta_4=\delta_0^\frac13\delta_1^\frac23$)
\end{theorem}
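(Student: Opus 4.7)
The plan is to prove the statement by deriving it directly from Theorem~\ref{equivfgh_exp_m=n=1} as the subgroup of $\hat G^{\sim}_{1,m=n}$ consisting of those transformations that additionally preserve the gauge $f=1$. Since class~\eqref{eqRDfghExp_m=n_f=1} is exactly the subclass of class~\eqref{eqRDfghExp_m=n=1} singled out by $f=1$, the task reduces to imposing the single extra condition $\tilde f=1$ in the formulas of Theorem~\ref{equivfgh_exp_m=n=1}.

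First I would substitute $f=\tilde f=1$ into $\tilde f=\delta_0\delta_1|\varphi_x|^{-3/2}f$, obtaining $|\varphi_x|^{3/2}=\delta_0\delta_1$. Since the right-hand side is constant, $\varphi_x$ must be a nonzero constant, so $\varphi$ is affine: $\varphi(x)=\delta_4 x+\delta_5$ with $\delta_4\ne 0$. The resulting relation $\delta_0\delta_1=|\delta_4|^{3/2}$ then determines $\delta_0$ in terms of $\delta_1$ and $\delta_4$, eliminating $\delta_0$ as an independent parameter of the group.

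Next I would propagate $\varphi_x=\delta_4$ through the formulas for $\tilde h$ and $\tilde u$ from Theorem~\ref{equivfgh_exp_m=n=1}. The second term of $\tilde h$, namely $(\delta_3/n)|\varphi_x|^{-3/2}(|\varphi_x|^{-1/2})_{xx}$, vanishes because $|\varphi_x|^{-1/2}$ is constant, leaving $\tilde h=\delta_3\delta_4^{-2}h$ as required. In $\tilde u=\delta_3 u+(\delta_3/(2n))\ln|\delta_0^2\varphi_x|$ I would use $\delta_0^2=|\delta_4|^3/\delta_1^2$ to get $|\delta_0^2\varphi_x|=\delta_4^4/\delta_1^2$, so the logarithm simplifies to $(\delta_3/n)\ln(\delta_4^2/|\delta_1|)$, which equals the claimed $(\delta_3/n)\ln(\delta_4^2/\delta_1)$ once the convention $\delta_1>0$ is adopted.

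\textbf{The subtle point} is justifying $\delta_1>0$. After $\tilde f=1$ is enforced, the sign of $\delta_0$ and the sign of $\delta_1$ are coupled by $\delta_0\delta_1>0$; since $\delta_0$ enters the remaining transformation formulas only through $\delta_0^2$, the sign of $\delta_0$ is a gauge redundancy that can be fixed by taking $\delta_0>0$, which forces $\delta_1>0$. Conceptually this also reflects the fact that the class~\eqref{eqRDfghExp_m=n_f=1} does not admit a genuine time reversal, because a sign flip of $u_t$ cannot be absorbed by any admissible affine transformation of~$u$ within the class. Combining $\delta_0\delta_1\delta_3\ne 0$ from Theorem~\ref{equivfgh_exp_m=n=1} with $\delta_4\ne 0$ gives $\delta_1\delta_3\delta_4\ne 0$, completing the statement. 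I do not expect any serious obstacle beyond this sign bookkeeping; the rest is a direct substitution.
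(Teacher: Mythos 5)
Your overall strategy---restricting the group $\hat G^{\sim}_{1,m=n}$ of Theorem~\ref{equivfgh_exp_m=n=1} to the transformations that preserve the gauge $f=1$---is exactly the route the paper takes (implicitly; it states the result without a written proof), and your computations of $\varphi(x)=\delta_4x+\delta_5$, of $\tilde h=\delta_3\delta_4^{-2}h$, and of the logarithmic shift in $\tilde u$ are all correct.

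The one step that does not hold up is your justification of $\delta_1>0$. You argue that ``the sign of $\delta_0$ is a gauge redundancy'' because $\delta_0$ enters the remaining formulas only through $\delta_0^2$; but in~\eqref{EqEquivTr_m-n} the parameter $\delta_0$ enters $\tilde f=\delta_0\delta_1|\varphi_x|^{-3/2}f$ \emph{linearly}, so flipping its sign sends $\tilde f$ to $-\tilde f$ and gives a genuinely different transformation of the arbitrary elements --- and, in the constrained setting, one that violates $\tilde f=1$. Hence the sign of $\delta_0$ is not free: the constraint $\delta_0\delta_1=|\delta_4|^{3/2}$ merely ties $\operatorname{sign}\delta_0$ to $\operatorname{sign}\delta_1$, and taken at face value it admits the pair $\delta_0<0$, $\delta_1<0$, so your argument as written does not exclude time reversals. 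What actually closes the gap is the observation you relegate to a ``conceptual'' aside: tracing Theorem~\ref{equivfgh_exp_m=n=1} back to Theorem~\ref{equivfgh_exp_m=n}, where $\Psi=e^{-n\psi/\delta_3}>0$, shows that the multiplier of $f$ in $\tilde f$ must carry the sign of $\delta_1$ (equivalently, the $\delta_0$ of Theorem~\ref{equivfgh_exp_m=n=1} is intrinsically positive); alternatively, a direct substitution into $u_t=(e^{nu}u_x)_x+he^{nu}$ shows that $\delta_1<0$ would produce $u_t=-(e^{nu}u_x)_x-he^{nu}$, and the minus sign cannot be absorbed since $e^{\tilde n\tilde u}>0$ and $\varphi_x^2=\delta_4^2>0$. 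Promote that check to the main argument and the proof is complete; the rest of your substitution is fine.
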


The gauge $h=0$ for class~\eqref{eqRDfghExp_m=n=1} can be realized by the family of transformations~\eqref{EqEquivTr_m-n},
where $\delta_1=\delta_3=1,$ $\delta_2=0$ and the function $\varphi$ satisfies the ODE $(|\varphi_x|^{-\frac12})_{xx}+nh|\varphi_x|^{-\frac12}=0.$

\begin{theorem}
The generalized equivalence group~$\hat G^{\sim}_{1,h=0}$ of the class of equations
\begin{gather}\label{eqRDfghExp_m=n_h=0}
f(x)u_t=(e^{nu}u_x)_x\quad\mbox{with}\quad nf\ne0,
\end{gather}
is projection of the group $\hat G^{\sim}_1$ on the space $(t, x, u, f, n).$
\end{theorem}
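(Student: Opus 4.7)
The plan is to prove the two inclusions separately. For the easy direction, observe from Theorem~\ref{equivfgh_exp1} that the transformed arbitrary element~$\tilde h$ in $\hat G^{\sim}_1$ is proportional to $h$, namely $\tilde h=\frac{\delta_3}{\Delta^2}|\delta_4x+\delta_5|^{m/n+3}h$. Hence the condition $h=0$ is preserved under every transformation from $\hat G^{\sim}_1$, and the $h=0$ subclass of~\eqref{eqRDfghExp2} is stable. Since $m$ disappears from the equation when $h=0$, the restricted transformation is naturally parameterized only by its action on $(t,x,u,f,n)$, which is precisely the projection of $\hat G^{\sim}_1$ onto this space. In particular, every such projection is an equivalence transformation of class~\eqref{eqRDfghExp_m=n_h=0}.

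For the reverse inclusion, I would redo the direct-method analysis from Section~\ref{SectionOnPreliminaryStudyOfAdmTrans} ab initio for class~\eqref{eqRDfghExp_m=n_h=0}, using the gauge $g=\tilde g=1$ permitted by~\eqref{EqGaugeTransformationForG1} and imposing $h=\tilde h=0$ throughout. The preliminary derivation up to formulas~\eqref{adm_tr} relied only on the principal part of the equation, so it carries over unchanged and gives $T=T(t)$, $X=\varphi(x)$, $U=\delta_3u+\psi(x)$, $\tilde n=n/\delta_3$. With $h=\tilde h=0$, equation~\eqref{det_eq_adm_tr2} collapses to
\[
\frac{T_t}{\varphi_x}\left(\frac{V_x}{\varphi_x}\right)_x e^{nu}-\tilde f\,\frac{V_t}{V}=0,
\]
and splitting in $u$ still forces $V_t=0$ together with $(V_x/\varphi_x)_x=0$. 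Equations~\eqref{det_eq_adm_tr1} still yield $T_{tt}=0$ and the same relation $\tilde f=\delta_0\delta_1\Psi f/\varphi_x$ as before. Combining $\tilde g=1$ with $\tilde g=\delta_0\varphi_x\Psi^2 g=\delta_0\varphi_x V^{-2}$ gives $\varphi_x=V^2/\delta_0$, and $V=(\delta_4 x+\delta_5)^{-1}$ integrates to the fractional-linear form $\tilde x=(\delta_6 x+\delta_7)/(\delta_4 x+\delta_5)$, exactly as in Theorem~\ref{equivfgh_exp1}. Reading off the resulting formulas for $\tilde t,\tilde x,\tilde u,\tilde f,\tilde n$ shows they coincide with the projection of $\hat G^{\sim}_1$.

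The potential subtlety — and the only point worth watching — is that dropping $h$ and $\tilde h$ removes one classifying term from~\eqref{det_eq_adm_tr2}, so a priori the equivalence group of the $h=0$ subclass could be strictly larger. The key observation is that the constraints pinning down the $(t,x,u)$-part of the transformation and the transformation law of $f$ all originate from the principal part (the coefficients of $u_{xx}$, $u_x$ and the $e^{nu}$-term), none of which involve $h$. Therefore no new freedom is introduced by setting $h=0$, and the equivalence group is indeed the projection of $\hat G^{\sim}_1$ onto $(t,x,u,f,n)$.
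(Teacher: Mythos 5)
Your proof is correct and follows essentially the route the paper itself relies on: both inclusions are obtained by specializing the determining equations \eqref{det_eq_adm_tr1}--\eqref{det_eq_adm_tr2} from Section~\ref{SectionOnPreliminaryStudyOfAdmTrans} to $g=\tilde g=1$, $h=\tilde h=0$, and observing that the $h$-equation never constrained the $(t,x,u,f,n)$-components, so no extra freedom appears. Your explicit check that the splitting in $u$ still forces $V_t=0$ and $(V_x/\varphi_x)_x=0$ is exactly the point that makes the "projection" claim nontrivial, and it is handled correctly.
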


It is possible to carry out the complete group classification of equations from class~\eqref{eqRDfghExp_m=n=1}
using either the gauge $f=1$ or the gauge $h=0$.
The choice of the first gauge is justified in Section~\ref{SectionOnLieSymmetries}.

\section{Lie symmetries}\label{SectionOnLieSymmetries}

In general, the group classification of class~\eqref{eqRDfghExp} is carried out within the framework of
the classical Lie approach~\cite{Olver1986,Ovsiannikov1982}.
At the same time, we additionally apply a number of modern tools of symmetry analysis,
which essentially simplify both related calculations and the presentation of final results.
Thus, gauging the arbitrary element~$g$ to~$1$ by equivalence transformations, 
we in fact classify subclass~\eqref{eqRDfghExp2} instead of the entire class. 
The main equivalence relation involved in the consideration 
is generated by the generalized equivalence group~$\hat G^{\sim}_1$ of this subclass,
which contains the usual equivalence group of the same subclass as a proper subgroup. 
In other words, we use $\hat G^{\sim}_1$-equivalence, which is stronger than
$G^{\sim}_1$-equivalence prescribed by the classical Lie approach.
Moreover, for group classification of equations from subclass~\eqref{eqRDfghExp2} with $m=n$
we involve the equivalence relation
which is generated by the conditional generalized equivalence group~$\hat G^{\sim}_{1,m=n}$
and is even stronger than $\hat G^{\sim}_1$-equivalence.
In total, this leads to the reduction of classification cases and lowering
the number of additional equivalence transformations to be constructed.
Lie symmetry extensions are separated using
the method of furcate split~\cite{Ivanova&Popovych&Sophocleous2006I,Nikitin&Popovych2001}.

The reduction of group classification of class~\eqref{eqRDfghExp} to that of its subclass~\eqref{eqRDfghExp2} 
is possible due to combining two facts. 
Namely, 
1) each equation from class~\eqref{eqRDfghExp} is mapped by a transformation from 
the  generalized extended equivalence group~$\hat G^{\sim}$ of this class to 
an equation from  subclass~\eqref{eqRDfghExp2} and 
2) the generalized equivalence group~$\hat G^{\sim}_1$ of class~\eqref{eqRDfghExp2} 
only consists of those transformations from~$\hat G^{\sim}$ which preserve the constraint $g=1$. 
The same remark is true for class~\eqref{eqRDfghExp_m=n} with the usual equivalence group~$G^{\sim}_{m=n}$ 
and class~\eqref{eqRDfghExp_m=n=1} with the generalized equivalence group~$\hat G^{\sim}_{1,m=n}$.

We look for vector fields of the form
\begin{equation*}
Q=\tau(t,x,u)\partial_t+\xi(t,x,u)\partial_x+\eta(t,x,u)\partial_u
\end{equation*}
which generate one-parameter groups of point symmetry transformations of an equation~$\mathcal L$ from class~\eqref{eqRDfghExp2}.
These vector fields form the maximal Lie invariance algebra $A^{\max}=A^{\max}(\mathcal L)$ of the equation~$\mathcal L$.
Any such vector field~$Q$ satisfies the criterion of infinitesimal invariance, i.e.,
the action of the second prolongation~$Q^{(2)}$ of~$Q$ on equation~\eqref{eqRDfghExp2}
results in the conditions being an identity for all solutions of this equation. Namely, we require that
\begin{equation}\label{c1}
Q^{(2)}\{f(x)u_t-e^{nu}u_{xx} -ne^{nu}u_x^2 - h(x)e^{mu}\}=0
\end{equation}
identically, modulo equation~\eqref{eqRDfghExp2}.

After the elimination of $u_t$ due to~\eqref{eqRDfghExp2},
equation~(\ref{c1}) becomes an identity in six variables, $t$, $x$, $u$, $u_x$, $u_{xx}$ and $u_{tx}$.
In fact, equation~(\ref{c1}) is a
multivariable polynomial in the variables $u_x$, $u_{xx}$ and $u_{tx}$. The
coefficients of different powers of these variables must be
zero, giving the determining equations on the coefficients $\tau$, $\xi$ and $\eta$.
Solving these equations, we immediately find that
$
\tau =\tau(t),\ \xi =\xi(t,x).
$
This completely agrees with the general results on point transformations between evolution equations~\cite{Kingston&Sophocleous1998}.
Then the remaining determining equations take the form
\begin{gather*}
\xi_tf=(\xi_{xx}-2n\eta_x)e^{nu},\quad
\xi\frac{f_x}{f}=\tau_t-2\xi_x+n\eta,\\
\eta_tf=\eta_{xx}e^{nu}+\left(\xi h_x+\left(2\xi_x+(m-n)\eta\right)h\right)e^{mu}.
\end{gather*}
Splitting of the first equation with respect to~$u$ and the subsequent integration imply that
\[
\xi=\xi(x), \quad\eta=\frac1{2n}\xi_x+\eta^0(t),
\]
where $\eta^0=\eta^0(t)$ is a smooth function of~$t$.

Finally we obtain the \emph{classifying} equations
which essentially include both the residuary uncertainties in the coefficients of the vector field~$Q$
and the arbitrary elements of the class under consideration:
\begin{gather}\label{determ_eq_Lie1}
\xi\frac{f_x}{f}=\tau_t-\frac32\xi_x+n\eta^0,\\\label{determ_eq_Lie2}
\eta^0_tf=\frac{1}{2n}\xi_{xxx}e^{nu}+\left(\xi h_x+\left(\frac{3n+m}{2n}\xi_x+(m-n)\eta^0\right)h\right)e^{mu}.
\end{gather}

In order to find the common part of Lie symmetries for all equations from class~\eqref{eqRDfghExp2} (resp.\ class~\eqref{eqRDfghExp}),
we should vary the arbitrary elements and split with respect to them in the determining equations.
This results in $\tau_t=\xi=\eta=0$.

\begin{proposition}
The kernel algebra, i.e., the intersection of the maximal Lie invariance algebras of equations
from class~\eqref{eqRDfghExp2} (resp.\ class~\eqref{eqRDfghExp}) is the one-dimensional algebra
$A^\cap=\langle\partial_t\rangle$.
\end{proposition}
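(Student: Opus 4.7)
The plan is to use the already derived classifying equations~\eqref{determ_eq_Lie1}--\eqref{determ_eq_Lie2} under the requirement that they hold for \emph{every} admissible tuple $(f,h,m,n)$ with $nf\ne0$. Any $Q=\tau\partial_t+\xi\partial_x+\eta\partial_u$ lying in the kernel algebra $A^\cap$ of class~\eqref{eqRDfghExp2} has, by the preceding reductions in the text, the form $\tau=\tau(t)$, $\xi=\xi(x)$, $\eta=\frac{1}{2n}\xi_x+\eta^0(t)$. I would successively vary~$f$ and then~$h$ to force $\xi\equiv0$, $\eta^0\equiv0$ and $\tau_t=0$, leaving $Q$ a constant multiple of $\partial_t$.

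First I would address~\eqref{determ_eq_Lie1}, namely $\xi\,f_x/f=\tau_t-\tfrac{3}{2}\xi_x+n\eta^0$. The right-hand side does not involve~$f$, whereas on the left the factor $f_x/f$ can be prescribed as an arbitrary smooth function of~$x$ by suitable choice of the nowhere-vanishing element~$f$. Hence $\xi\equiv0$, and the residual identity $\tau_t+n\eta^0=0$ follows by substituting back.

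Next I would substitute $\xi=0$ into~\eqref{determ_eq_Lie2}, which collapses to $\eta^0_t\,f=(m-n)\eta^0\,h\,e^{mu}$. For generic $m\ne0$ the two sides have incompatible $u$-dependence and split into $\eta^0_t f=0$ and $(m-n)\eta^0 h=0$; the degenerate value $m=0$ is dispatched by splitting instead between the independent arbitrary elements~$f$ and~$h$. From $f\ne0$ we get $\eta^0_t=0$, and by choosing any $h\not\equiv0$ together with an $m\ne n$ we conclude $\eta^0=0$, whence $\tau_t=0$ by the residual identity of the previous paragraph. Thus $Q=c\,\partial_t$ for some constant~$c$. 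Since $\partial_t$ is manifestly a symmetry of every ($t$-autonomous) equation of both class~\eqref{eqRDfghExp2} and class~\eqref{eqRDfghExp}, the kernel algebras equal $\langle\partial_t\rangle$ in both cases; for the larger class~\eqref{eqRDfghExp} the inclusion $A^\cap\subseteq\langle\partial_t\rangle$ comes for free because intersecting symmetry algebras over more equations can only shrink the kernel, and~\eqref{eqRDfghExp2} is the $g=1$ subclass of~\eqref{eqRDfghExp}. The one genuine subtlety in the argument is the careful handling of the degenerate parameter value $m=0$ when splitting~\eqref{determ_eq_Lie2}; otherwise this is a routine application of the variation-of-arbitrary-elements principle.
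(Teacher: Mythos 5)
Your proof is correct and follows essentially the same route as the paper, which simply states that one should ``vary the arbitrary elements and split with respect to them in the determining equations'' to obtain $\tau_t=\xi=\eta=0$; you have merely carried out that variation-and-splitting explicitly on~\eqref{determ_eq_Lie1}--\eqref{determ_eq_Lie2} and added the (correct) containment remark relating the kernel of class~\eqref{eqRDfghExp} to that of its subclass~\eqref{eqRDfghExp2}.
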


The classification of possible extensions of~$A^\cap$ is reduced to
the simultaneous solution of equations~\eqref{determ_eq_Lie1} and~\eqref{determ_eq_Lie2}
with respect to both the residuary uncertainties in the coefficients of the vector field~$Q$
and the arbitrary elements up to $\hat G^{\sim}_1$-equivalence.
Further splitting with respect to $u$ in the last equation depends on the value of $m$.
Namely, we should separately consider three exclusive cases
\[
1)\quad m\neq0,n;\quad\quad 2)\quad m=0;\quad\quad 3)\quad m=n.
\]
The case $h=0$ is special as the value of $m$ is undefined in this case
but it can be included to the case $m=n$ (see Section 2).
This is why in the first two cases we assume that $h\neq0$.

\subsection{General case of $\boldsymbol{m}$}\label{SectionOnLieSymmetriesGenCaseOfM}

In the first (general) case splitting of~\eqref{determ_eq_Lie2} gives the equations $\eta^0_t=0$ and $\xi_{xxx}=0$.
Differentiation of~\eqref{determ_eq_Lie1} with respect to $t$ leads to the equation $\tau_{tt}=0.$
Therefore,
\begin{equation}\label{eq_tau_xi_eta}
\tau=c_4t+c_5,\quad \xi=nc_2x^2+c_1x+c_0,\quad \eta=c_2x+c_3.
\end{equation}
 The classifying equations take the form
\begin{gather}\label{eq_lie_sym_f}
(nc_2x^2+c_1x+c_0)\dfrac{f_x}f=-3nc_2x+nc_3+c_4-2c_1,\\\label{eq_lie_sym_h}
(nc_2x^2+c_1x+c_0)\dfrac{h_x}h=-(3n+m)c_2x+(n-m)c_3-2c_1.
\end{gather}

Further investigation will be carried out using the method of furcate
split~\cite{Ivanova&Popovych&Sophocleous2006I,Nikitin&Popovych2001}.
For any operator $Q$ from $A^{\max}$ the substitution of its
coefficients into equations~\eqref{eq_lie_sym_f} and~\eqref{eq_lie_sym_h} gives some equations on $f$ and $h$
of the general form
\begin{gather}\label{eq_lie_sym_fc}
(nax^2+bx+c)\dfrac{f_x}f=-3nax+d,\\\label{eq_lie_sym_hc}
(nax^2+bx+c)\dfrac{h_x}h=-(3n+m)ax+(n-m)p-2b,
\end{gather}
where $a,$ $b,$ $c,$ $d$ and $p$ are constants which are defined up to nonzero multiplier.
The set~$\mathcal V$ of values of the coefficient tuple $(a,b,c,d,p)$
obtained by varying of an operator from $A^{\max}$ is a linear space.
Note that the first three components of any nonzero element from~$\mathcal V$ form a nonzero triple
(otherwise the corresponding system with respect to~$f$ and~$h$ would be inconsistent).
The dimension $k=k(A^{\max})$ of the space~$\mathcal V$ is not greater than 2
otherwise the corresponding equations form an incompatible system with respect to $f$ and $h$.
The value of $k$ is an invariant of the transformations from $\hat G^\sim_1$.
Therefore, there exist three $\hat G^\sim_1$-inequivalent cases
for the value of $k$: $k = 0,$ $k = 1$ and $k = 2$.
We consider these possibilities separately.

\medskip

\noindent{\bf I.}
The condition $k=0$ means that \eqref{eq_lie_sym_f}--\eqref{eq_lie_sym_h} are not equations with respect to $f$ and $h$ but identities.
Therefore, $f$ and $h$ are not constrained and $c_i=0$, $i=0,\dots,4$.
We obtain nothing but the kernel algebra~$A^\cap$ presented by Case~1 of Table~1.

\medskip

\noindent{\bf II.}
If $k=1$ then we have, up to a nonzero multiplier,
exactly one system of the form~\eqref{eq_lie_sym_fc}--\eqref{eq_lie_sym_hc} with respect to the functions $f$ and $h$.
Integrating this system up to $\hat G^\sim_1$-equivalence gives
\[
f=\exp\left({\int\dfrac{-3nax+d}{nax^2+bx+c}\,{\rm d}x}\right),\quad
h=\varepsilon\exp\left({-\int\dfrac{(3n+m)ax+2b+(m-n)p}{nax^2+bx+c}\,{\rm d}x}\right),
\]
where $\varepsilon=\pm1\bmod\hat G^{\sim}_1$, and the coefficients of~$Q$ take the form
\[
\tau=(d+2b-np)\lambda t+c_5,\quad\xi=(nax^2+bx+c)\lambda,\quad\eta=(ax+p)\lambda.
\]
Here $\lambda$ and $c_5$ are arbitrary constants.
Hence $A^{\max}$ is the two-dimensional algebra given by Case~2 of Table~1.

It is necessary to investigate equivalence of cases with the arbitrary elements $f$ and $h$ of the above form,
corresponding to different values of the tuple $(a,b,c,d,p)$.

\begin{lemma}\label{LemmaOntransOfCoeffsOfClassifyingSystem}
Up to $\hat G^{\sim}_1$-equivalence
the parameter tuple~$(a,b,c,d,p)$ can be assumed to belong to the set
$
\{(0,1,0,\bar d,\bar p),\ (0,0,1,1,\check p),\ (0,0,1,0,1),\ (0,0,1,0,0),\ (1/n,0,1,\hat d,\hat p)\},
$
where $\bar d\geqslant-3/2$ and, if $\bar d=-3/2$, $n\bar p\geqslant1/2$;
$\hat d\geqslant0$ and, if $\hat d=0$, $\hat p\geqslant0$.
\end{lemma}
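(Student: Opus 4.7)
The plan is to exploit the fact that the tuple $(a,b,c,d,p)$ is defined only up to a nonzero overall multiplier, so the problem amounts to classifying the $\hat G^{\sim}_1$-orbits on $\mathbb{RP}^4$. The first three coordinates encode the quadratic $P(x)=nax^2+bx+c$, i.e., (up to overall rescaling) the $\xi$-component of the infinitesimal generator. The Möbius part $\tilde x=(\delta_6x+\delta_7)/(\delta_4x+\delta_5)$ of Theorem~\ref{equivfgh_exp1} induces on $P$ the standard push-forward of the vector field $\xi\p_x$, which coincides with the symmetric-square representation of $\mathrm{PGL}_2(\mathbb{R})$ on degree-$\leq 2$ polynomials. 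My strategy is to first canonicalize $(a,b,c)$ under this action, and then to use the subgroup of $\hat G^{\sim}_1$ fixing the chosen $(a,b,c)$ up to scalar to reduce the remaining pair $(d,p)$.

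For the first step, degree-$\leq 2$ polynomials over $\mathbb R$ modulo scaling and $\mathrm{PGL}_2(\mathbb R)$ are classified by the configuration of roots in $\mathbb{RP}^1$, yielding exactly three orbits: a double root, a pair of distinct real roots, and a conjugate pair of complex roots. Canonical representatives are $P=1$ (double root at infinity, giving $(a,b,c)=(0,0,1)$; note that $P=x^2$ maps here via $\tilde x=1/x$), $P=x$ (roots at $0$ and $\infty$, giving $(a,b,c)=(0,1,0)$; $P=x^2-1$ maps here by a Möbius sending one root to $\infty$), and $P=x^2+1$ after normalizing the leading coefficient $na$ to $1$ (giving $(a,b,c)=(1/n,0,1)$). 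This produces the three $(a,b,c)$-patterns appearing in the statement.

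For the second step, the stabilizer of each canonical $(a,b,c)$ (up to scalar) in $\hat G^{\sim}_1$ and its action on $(d,p)$ are computed by pushing forward the generator $Q=\tau\p_t+\xi\p_x+\eta\p_u$ through the allowed transformations, taking care to include the shift of $\eta$ produced by the $u$-dependence $\tilde u=\delta_3u-(\delta_3/n)\ln|\delta_4x+\delta_5|$ in Theorem~\ref{equivfgh_exp1}. For $P=x$ the stabilizer is generated by dilations $\tilde x=\mu x$ (trivial on $(d,p)$) and the inversion $\tilde x=1/x$; a direct calculation (with subsequent renormalization of the tuple by $-1$ to restore $(0,1,0)$) gives $(d,p)\mapsto(-3-d,\,1/n-p)$, with the sign $\delta_3=\pm1$ acting on $p$ by sign, leading to $\bar d\geq -3/2$ and, at the boundary $\bar d=-3/2$, to $n\bar p\geq 1/2$. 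For $P=x^2+1$ the stabilizer is the one-parameter subgroup of Möbius transformations fixing $\{\pm i\}$, namely $\tilde x=(\alpha x+\beta)/(-\beta x+\alpha)$; tracing its action on $(d,p)$ together with the $\delta_3$-sign freedom yields $\hat d\geq 0$ and, when $\hat d=0$, $\hat p\geq 0$.

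The main obstacle is the constant case $P=1$, where the stabilizer is the full two-dimensional affine group $\tilde x=\alpha x+\beta$, so that $(d,p)$ is acted on by a two-parameter family in addition to the overall scalar on the 5-tuple. Here the classifying equations reduce to $f_x/f=d$ and $h_x/h=(n-m)p$, and the affine-plus-scalar action (together with $\delta_3=\pm1$) is rich enough to force exactly one of the three discrete subcases $(d,p)=(1,\check p)$ with $\check p\in\mathbb R$ residual (case $d\neq 0$, with the scalar used to normalize $d$ to $1$), $(0,1)$ (case $d=0,\ p\neq 0$, with $p$ normalized by the scalar), or $(0,0)$. The careful case split for this constant subcase, together with the book-keeping of the scalar normalization (which couples $(a,b,c)$ and $(d,p)$) across all three orbits, is the main technical hurdle of the lemma.
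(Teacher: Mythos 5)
Your proposal is correct and follows essentially the same route as the paper's proof: the paper also extends the $\hat G^{\sim}_1$-action (combined with an overall nonzero scalar) to the tuple, reduces $(a,b,c)$ to the three canonical forms according to the sign of the discriminant $D=b^2-4nac$ (which is exactly your root-configuration classification of the quadratic under $\mathrm{PGL}_2(\mathbb R)$), and then uses the residual stabilizer of each canonical $(a,b,c)$ — including the inversion giving $\tilde d=-d-3$, $\tilde n\tilde p=-np+1$, the simultaneous scaling of $(d,np)$ in the parabolic case, and the sign flip in the elliptic case — to normalize $(d,p)$. The only differences are presentational (you phrase the first step representation-theoretically, the paper picks explicit $\delta$'s), so no further comment is needed.
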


\begin{proof}
Combined with the multiplication by a nonzero constant,
each transformation from the equivalence group~$\hat G^{\sim}_1$ is extended to the coefficient tuple
of system~\eqref{eq_lie_sym_fc}--\eqref{eq_lie_sym_hc}:
\begin{gather*}
\begin{array}{l}
\tilde n\tilde a=\nu(\delta_5^2na-\delta_4\delta_5b+\delta_4^2c),\quad
\tilde b=\nu(-2\delta_5\delta_7na+(\delta_4\delta_7+\delta_5\delta_6)b-2\delta_4\delta_6c),
\\[1ex]
\tilde c=\nu(\delta_7^2na-\delta_6\delta_7b+\delta_6^2c),\quad
\tilde d=\nu\Delta d+3\nu(\delta_5\delta_7na-\delta_4\delta_7b+\delta_4\delta_6c),\\[1ex]
\tilde n\tilde p=\nu\Delta n p-\nu(\delta_5\delta_7na-\delta_4\delta_7b+\delta_4\delta_6c),
\end{array}
\end{gather*}
where $\Delta=\delta_5\delta_6-\delta_4\delta_7\ne0$, $\nu$ is an arbitrary nonzero constant.

There are only three $\hat G^\sim_1$-inequivalent values of the triple $(a,b,c)$
depending on the sign of $D=b^2-4nac$,
\begin{gather*}
(0,1,0)\quad\mbox{if}\quad D>0, \quad
(0,0,1)\quad\mbox{if}\quad  D=0,\quad
(1/n,0,1)\quad\mbox{if}\quad  D<0.
\end{gather*}
Indeed, if $D>0$ then there exist two linearly independent pairs
$(\delta_4,\delta_5)$ and $(\delta_6,\delta_7)$
such that $\delta_5^2na-\delta_4\delta_5b+\delta_4^2c=0$ and $\delta_7^2na-\delta_6\delta_7b+\delta_6^2c=0$.
For these values of $\delta$'s we have $\tilde a=\tilde c=0$.
In the case $D=0$ we choose values of $\delta_4$, $\delta_5$, $\delta_6$ and $\delta_7$
for which $\delta_5^2na-\delta_4\delta_5b+\delta_4^2c=0$ and
the pair $(\delta_6,\delta_7)$ is not proportional to the pair $(\delta_4,\delta_5)$.
Then we obtain that $\tilde a=0$ and
$\tilde b=\nu\delta_7(\delta_4b-2\delta_5na)+\nu\delta_6(\delta_5b-2\delta_4c)=0$.
The residual coefficient ($\tilde b$ if $D>0$ and $\tilde c$ if $D=0$) is necessarily nonzero
and hence can be scaled to~$1$ by choosing the appropriate value of~$\nu$.
If $D<0$, we have $ac\ne0$ and can set $a>0$.
The matrix \[\left(\begin{array}{cc}na&-b/2\\-b/2&c\end{array}\right)\]
is symmetric and positive.
Hence the corresponding bilinear form is a well-defined scalar product.
We choose $\nu=1$ and pairs $(\delta_4,\delta_5)$ and $(\delta_6,\delta_7)$
which are orthonormal with respect to this product.
Then $\tilde n\tilde a=\tilde c=1$ and $\tilde b=0$.

Certain freedom in varying group parameters is preserved even after fixing one of the above inequivalent forms
for both the tuples $(a,b,c)$ and $(\tilde a,\tilde b,\tilde c)$.
This allows us to set constraints for the coefficients~$d$ and~$p$.

Thus, it follows from the equality $(a,b,c)=(\tilde a,\tilde b,\tilde c)=(0,1,0)$
that $\delta_4\delta_5=\delta_6\delta_7=0$ and $\delta_4\delta_7+\delta_5\delta_6=\nu^{-1}$.
There are two cases for the solution of the above system in~$\delta$'s:
either $\delta_4=\delta_7=0$ and $\Delta=\delta_5\delta_6=\nu^{-1}$
or $\delta_5=\delta_6=0$ and $\Delta=-\delta_4\delta_7=-\nu^{-1}$.
In the first case the coefficients~$d$ and~$np$ are identically transformed.
In the second case the transformation takes the form $\tilde d=-d-3$, $\tilde n\tilde p=-n p+1$.
This is why up to $\hat G^{\sim}_1$-equivalence we can assume
that $\tilde d\geqslant-3/2$ and, if $\tilde d=-3/2$, $\tilde n\tilde p\geqslant1/2$.

The equality $(a,b,c)=(\tilde a,\tilde b,\tilde c)=(0,0,1)$ implies $\delta_4=0$, $\nu\delta_6^2=1$
and $\Delta=\delta_5\delta_6\ne0$.
The transformation of $d$ and $np$ is reduced to simultaneous scaling with the same multipliers,
$\tilde d=\nu\Delta d$ and $\tilde n\tilde p=\nu\Delta n p$.
This allows us either to set $\tilde d=1$ if $d\ne0$ or
to scale~$\tilde p$ if $d=0$ and hence $\tilde d=0$. Therefore, we obtain the following inequivalent tuples
 $(0,0,1,1,\check p)$, if $d\neq0,$  $(0,0,1,0,1)$ if $d=0,$ $p\neq0,$ and
 $(0,0,1,0,0)$ in the case $d=p=0.$ The last tuple corresponds to the constant-coefficient equations~\eqref{eqRDfghExp2} (Case 3 of Table 1).

Setting $(na,b,c)=(\tilde n\tilde a,\tilde b,\tilde c)=(1,0,1)$ results in
$\delta_4^2+\delta_5^2=\delta_6^2+\delta_7^2=\nu^{-1}$ and $\delta_4\delta_6+\delta_5\delta_7=0$.
Hence $\delta_6=\varepsilon\delta_5$ and $\delta_7=-\varepsilon\delta_4$, where $\varepsilon=\pm 1$.
The transformation of the coefficients~$d$ and~$np$ is reduced to the multiplication by~$\varepsilon$,
$\tilde d=\varepsilon d$ and $\tilde n\tilde p=\varepsilon n p$.
This is why we can only set
$\hat d\geqslant0$ and, if $\hat d=0$, $\hat p\geqslant0$.
\end{proof}

Lemma~\ref{LemmaOntransOfCoeffsOfClassifyingSystem} implies that
up to $\hat G^\sim_1$-equivalence the case $k=1$ is partitioned into the three inequivalent subcases:
\begin{enumerate}\itemsep=0ex
\item $(f,h)=(|x|^d,\varepsilon |x|^q), \ q=(n-m)p-2\colon\quad$ $\langle\partial_t,\,(d+2-pn)t\partial_t+
x\partial_x+p\partial_u\rangle$;

\item $(f,h)=(e^{dx}, \varepsilon e^{qx}), \ q=(n-m)p\colon\quad$
$\left\langle\partial_t,\,\left(d-pn\right)t\partial_t+\partial_x+p\partial_u\right\rangle$;

\item $(f,h)=\left((x^2+1)^{-\frac32}e^{d\arctan x},\varepsilon (x^2+1)^{-\frac32-\frac m{2n}}e^{q\arctan x}\right), \ q=(n-m)p\colon$\\
$\langle\partial_t,\,(d-pn)t\partial_t+(x^2+1)\partial_x+(x/n+p)\partial_u\rangle$.

\end{enumerate}
Here $(d,q)\neq(0,0),(-3,-3-m/n)$ and $(d,q)\neq(0,0)$ for the first and second subcases, respectively.
(Otherwise $k=2$, see below.)
It additionally follows from Lemma~\ref{LemmaOntransOfCoeffsOfClassifyingSystem} that
up to $\hat G^\sim_1$-equivalence we can set certain constraints for the parameters~$d$ and~$q$.
(It is convenient to use~$q$ instead of~$p$ as a parameter.)
For the first subcase an exhaustive gauge implied by $\hat G^\sim_1$-equivalence
consists of the inequalities $d\geqslant-3/2$ and, if $d=-3/2$, $q\geqslant-3/2-m/(2n)$.
They can be set using the equivalence transformation
\begin{equation}\label{EqEquivTransForExpFHtoConstCoeff}
\tilde t=t,\quad\tilde x=\frac1x,\quad\tilde u=u-\frac1n\ln|x|,
\end{equation}
whose extension to the parameters~$d$ and $q$ is given by $\tilde d=-d-3$ and $\tilde q=-q-3-m/n$.
In the second subcase the parameters~$d$ and $q$ can be gauged using a scaling of $x$.
More precisely, $d=1\bmod\hat G^\sim_1$ if $d\neq0$ and $q=1\bmod\hat G^\sim_1$ if $d=0$.
In the last subcase we can just simultaneously alternate the signs of~$d$ and~$q$.
Hence the exhaustive gauge is presented by $d\geqslant0$ and, if $d=0$, $q\geqslant0$.

\medskip

\noindent{\bf III.} Let $k=2$.
We choose a basis $\{(a_i,b_i,c_i,d_i,p_i),\ i=1,2\}$
of the space~$\mathcal V$ of tuples $(a,b,c,d,p)$ associated with~$A^{\max}$
and introduce the notation
\begin{gather*}
P_i=na_ix^2+b_ix+c_i,\quad Q_i=-3na_ix+d_i,\quad R_i=-(3n+m)a_ix+(n-m)p_i-2b_i.
\end{gather*}
Then $P_i\neq0,$ $i=1,2$, and the system for the functions~$f$ and~$h$ is written in the form
\begin{gather}\label{EqClassifyingK2}
\dfrac{f_x}f=\frac{Q_1}{P_1}=\frac{Q_2}{P_2},\quad\dfrac{h_x}h=\frac{R_1}{P_1}=\frac{R_2}{P_2}.
\end{gather}

Consider two different cases depending on
whether the first components of all elements from the space~$\mathcal V$ are zero.

If this is true, we can choose a basis of~$\mathcal V$ with $b_1=c_2=1$ and $b_2=c_1=0$.
Then system~\eqref{EqClassifyingK2} obviously implies that $f_x=h_x=0$.
As $fh\ne0$, up to $\hat G^{\sim}_1$-equivalence we obtain Case~3 of Table~1.

If the space~$\mathcal V$ contains tuples with nonzero first components,
we can assume without loss of generality that $a_1=1$ and $a_2=0$.
Then we have $b_2\ne0$ as otherwise system~\eqref{EqClassifyingK2} is not compatible.
We set $b_2=1$ and $b_1=0$ by changing the basis of~$\mathcal V$
and then set $c_2=0$ using a shift of~$x$ allowed by $\hat G^{\sim}_1$-equivalence.
Equations~\eqref{EqClassifyingK2} are consistent if and only if $c_1=0$.
They imply that $(f,h)=(x^{-3},\varepsilon |x|^{-3-\frac mn})\bmod\hat G^\sim_1,$ where $\varepsilon=\pm1$.
The corresponding maximal Lie invariance algebra
\[\langle\partial_t,\,nx^2\partial_x+x\partial_u,\,(m+n)t\partial_t+(m-n)x\partial_x-2\partial_u\rangle\]
is a two-dimensional extension of the kernel algebra~$A^\cap=\langle\p_t\rangle$.
This Lie symmetry extension is not included in Table~1 because it is reduced to Case~3
by the transformation~\eqref{EqEquivTransForExpFHtoConstCoeff} from the equivalence group $\hat G^\sim_1$.

\subsection{Case $\boldsymbol{m=0}$}\label{SectionOnLieSymmetriesCaseOfM0}

If $m=0$ then $\xi=nc_2x^2+c_1x+c_0$ and the classifying equations have the form
\begin{gather}\label{eq_lie_sym_m=0}
\begin{split}
&(nc_2x^2+c_1x+c_0)\frac{f_x}f=\tau_t-3nc_2x-\frac32c_1+n\eta^0,\\
&\eta^0_tf=(nc_2x^2+c_1x+c_0){h_x}+\left(3nc_2x+\frac32c_1-n\eta^0\right)h.
\end{split}
\end{gather}
Differentiating the latter equation with respect to $t$, we obtain the equation $f\eta^0_{tt}=-nh\eta^0_t$.
If $(h/f)_x\neq0$ then $\eta^0$ is a constant and hence the corresponding equation possesses
the same maximal Lie invariance algebra as in the case of general value of~$m$ for the same values of~$f$ and~$h$.
In what follows we assume that $(h/f)_x=0$,
i.e., $\varepsilon=h/f$ is a constant which equals $\pm1$ modulo $\hat G^\sim_1$.
Substituting the condition $h=\varepsilon f$ into~\eqref{eq_lie_sym_m=0},
we obtain that
\[\tau=c_4e^{-\varepsilon n t}+c_5,\quad\xi=nc_2x^2+c_1x+c_0,\quad\eta=c_2x+\varepsilon c_4e^{-\varepsilon n t}+c_3,\]
and the classifying equation on the function $f$ is the following one
\begin{gather*}
(nc_2x^2+c_1x+c_0)\dfrac{f_x}f=-3nc_2x-2c_1+nc_3.
\end{gather*}
It is obvious that this equation has the form~\eqref{eq_lie_sym_fc}.
Applying the method of furcate split in the same way as in Section~\ref{SectionOnLieSymmetriesGenCaseOfM},
we obtain only three $\hat G^\sim_1$-inequivalent Lie symmetry extensions, namely, Cases~4--6 of Table~1.
Case~5 is partitioned into the three $\hat G^\sim_1$-inequivalent subcases:
\begin{enumerate}\itemsep=0ex
\item
$f=|x|^d\colon\quad$
$\langle\partial_t,\,e^{-\varepsilon nt}(\partial_t+\varepsilon \partial_u),\,nx\partial_x+(d+2)\partial_u\rangle$;
\item
$f=e^x\colon\quad$
$\langle\partial_t,\,e^{-\varepsilon nt}(\partial_t+\varepsilon \partial_u),\,n\partial_x+\partial_u\rangle$;
\item
$f=(x^2+1)^{-\frac32}e^{d\arctan x}\colon\quad$
$\langle\partial_t,\,e^{-\varepsilon nt}(\partial_t+\varepsilon \partial_u),\,n(x^2+1)\partial_x+(x+d)\partial_u\rangle$.
\end{enumerate}
Here always $h=\varepsilon f$.
In the first subcase we should assume that $d\neq-3,0$ as otherwise this subcase is reduced to Case~6
with a wider Lie invariance algebra,
and $d\geqslant-3/2\bmod\hat G^\sim_1$, cf.~Lemma~\ref{LemmaOntransOfCoeffsOfClassifyingSystem}
and the transformation~\eqref{EqEquivTransForExpFHtoConstCoeff}.
In the last subcase we can just alternate the sign of~$d$.
Hence the exhaustive gauge up to $\hat G^\sim_1$-equivalence
is presented by $d\geqslant0$.

\subsection{Case $\boldsymbol{m=n}$}

If $m=n$ then $\eta^0$ is a constant and the classifying equations take the form
\begin{gather}\label{determ_eq_Lie}
\xi\frac{f_x}{f}=\tau_t-\frac32\xi_x+n\eta^0,\quad
\frac{1}{2n}\xi_{xxx}+\xi h_x+2\xi_xh=0.
\end{gather}
To carry out the group classification in this case,
it is necessary to use an additional gauge of the arbitrary elements of class~\eqref{eqRDfghExp_m=n=1}.
Consider both the gauges proposed in Section 2, namely, $f=1$ and $h=0$.

Choosing the gauge $f=1$ results in subclass~\eqref{eqRDfghExp_m=n_f=1}.
The first equation of~\eqref{determ_eq_Lie} implies that $\xi_{xx}=0$, i.e.,
$\xi=c_1x+c_2$, where $c_1$ and $c_2$ are constants and
$\eta=\frac1{2n}\xi_x+\eta^0$ is a constant, which is denoted by~$c_3$.
Then $\tau=(2c_1-nc_3)t+c_2$.
The classifying equation on the function $h$ can be written in the form
\[
(c_1x+c_2)h_x+ 2c_1h=0.
\]
It is easy to derive Cases~7, 8, 9 and~10 of Table~1.

Introducing the gauge $h=0$, we replace the study of class~\eqref{eqRDfghExp_m=n=1} by the study of class~\eqref{eqRDfghExp_m=n_h=0}.
Under this gauge, determining equations gives $\xi_{xxx}=0$ and hence $\tau$, $\xi$ and $\eta$ have the form~\eqref{eq_tau_xi_eta}.
The classifying equation on the function $f$ coincides with equation~\eqref{eq_lie_sym_f}.
The solution of~\eqref{eq_lie_sym_f} results in the following cases of Lie symmetries extensions:

\medskip

arbitrary $f$:\quad $\langle\partial_t,\,nt\partial_t-\partial_u\rangle$;

\medskip

$\displaystyle f=\exp\left({\int\frac{-3nax+d}{nax^2+bx+c}\,{\rm d}x}\right)$:\quad
$\langle\partial_t,\,nt\partial_t-\partial_u,\,(d+2b)t\partial_t+(nax^2+bx+c)\partial_x+ax\partial_u\rangle$;

\medskip

$f=1$:\quad $\langle\partial_t,\,nt\partial_t-\partial_u,\,\partial_x,\,2t\partial_t+x\partial_x\rangle$.

\medskip

%3. $f=x^{-3}:$ $\langle\partial_t,\,nt\partial_t-\partial_u,\,nx^2\partial_x+x\partial_u,\,t\partial_t-x\partial_x\rangle$;
%Then it is necessary to study equivalence between the cases obtained with respect to transformations from the
%group~$G^\sim_{g=1,h=0}$. For example, Case~3 is reducible to Case~4 via the transformation
%\[\tilde t=\sign(x)\,t,\quad\tilde x=\frac1x,\quad u=u-\frac1n\ln|x|.\]
%More complicated task is to study equivalence between equations~\eqref{eqRDfghExp_m=n_h=0} with $f$ presented in
% Case~$2$
%with different tuples $(a,b,c,d),$ since transformation of the variable $x$ is bilinear.

The first and third cases obviously correspond to Cases~7 and~10 of Table~1, respectively.

Similarly to the general case of~$m$ and the case~$m=0$,
the second case of Lie symmetry extensions with $m=n$ under the gauge $h=0$
is partitioned into the following three $\hat G^\sim_1$-inequivalent subcases:
\begin{enumerate}\itemsep=0ex
\item $f=|x|^d\colon\quad$
$\langle\partial_t,\,nt\partial_t-\partial_u,\,(d+2)t\partial_t+x\partial_x\rangle$;

\item $f=e^x\colon\quad$
$\langle\partial_t,\,nt\partial_t-\partial_u,\,t\partial_t+\partial_x\rangle$;

\item $f=(x^2+1)^{-\frac32}e^{d\arctan x}\colon\quad$
$\langle\partial_t,\,nt\partial_t-\partial_u,\,ndt\partial_t+n(x^2+1)\partial_x+x\partial_u\rangle$.
\end{enumerate}
In the first subcase we should again assume that $d\neq-3,0$
as the value $d=0$ corresponds to the case of $f=1$ and $h=0$
with a wider Lie invariance algebra (Case~10 of Table~1)
and the value $d=-3$ is $\hat G^\sim_1$-equivalent to the value $d=0$,
cf.\ the consideration for $k=2$ in Section~\ref{SectionOnLieSymmetriesGenCaseOfM}.
Additionally, Lemma~\ref{LemmaOntransOfCoeffsOfClassifyingSystem} implies that
using the transformation~\eqref{EqEquivTransForExpFHtoConstCoeff} we can set the gauge
and $d\geqslant-3/2\bmod\hat G^\sim_1$.
In the last subcase the equivalence transformation alternating the sign of~$x$
leads to the gauge $d\geqslant0\bmod\hat G^\sim_1$.

\begin{remark}\label{RemarkOnTransBetweenClassificationListsInDifferentGauges}
The above subcases are related to cases of Table~1
via point transformations of the form~\eqref{EqGaugeTransformationForf1}
which belong to the equivalence group~$\hat G^{\sim}_{1,m=n}$.
Thus, the first subcase with $d\not\in\{-3,-3/2,0\}$, the second subcase and the third subcase with $d\neq0$
are mapped to Case~8 with
\[\alpha=\frac{d(d+3)}{4n(d+3/2)^2},\quad \alpha=\frac1{4n} \quad\mbox{and}\quad \alpha=\frac{d^2+9}{4nd^2},\]
respectively.
The first subcase with $d=-3/2$ and the third subcase with $d=0$ are mapped to Case~9 with
$\varepsilon=-1/(4n)$ and $\varepsilon=1/n$, respectively.
\end{remark}

\noprint{
\begin{gather*}
x^du_t=(e^{nu}u_x)_x\quad\longrightarrow\quad \tilde u_{\tilde t}=(e^{n\tilde u}\tilde u_{\tilde x})_{\tilde x}+
\alpha \tilde x^{-2}e^{n\tilde u},\quad \alpha=\tfrac {d(d+3)}{4n(d+3/2)^2};\\
e^xu_t=(e^{nu}u_x)_x\quad\longrightarrow\quad \tilde u_{\tilde t}=(e^{n\tilde u}\tilde u_{\tilde x})_{\tilde x}+
\alpha \tilde x^{-2}e^{n\tilde u},\quad \alpha=\tfrac1{4n};\\
(x^2+1)^{-\frac32}e^{d\arctan x}u_t=(e^{nu}u_x)_x\quad\longrightarrow\quad \tilde u_{\tilde t}=(e^{n\tilde u}\tilde u_{\tilde x})_{\tilde x}+
\alpha \tilde x^{-2}e^{n\tilde u},\quad \alpha=\tfrac{d^2+9}{4nd^2};\\
x^{-\frac32}u_t=(e^{nu}u_x)_x\quad\longrightarrow\quad \tilde u_{\tilde t}=(e^{n\tilde u}\tilde u_{\tilde x})_{\tilde x}+
\varepsilon e^{n\tilde u},\quad \varepsilon=-\tfrac {1}{4n};\\
(x^2+1)^{-\frac32}u_t=(e^{nu}u_x)_x\quad\longrightarrow\quad \tilde u_{\tilde t}=(e^{n\tilde u}\tilde u_{\tilde x})_{\tilde x}+
\varepsilon e^{n\tilde u},\quad \varepsilon=\tfrac{1}{n};\\
\end{gather*}
}

Analyzing the group classifications of the class~\eqref{eqRDfghExp_m=n=1} under both the gauges,
we conclude that each of the gauges $f=1$ and $h=0$ has certain advantages and disadvantages.
More precisely, under the gauge $h=0$ equations have a less number of summands and
the group classification list in the case $m=n$ is similar to the group classification lists in the cases of general~$m$ and $m=0$.
The gauge $h=0$ is also convenient in order to look for additional equivalence transformations
between the cases $m=0$ and $m=n$.
At the same time, the gauging transformation,
the corresponding equivalence group and the inequivalent values of the residuary arbitrary element, $h$,
arising in the course of group classification are much simpler for the gauge $f=1$.
In particular, it is very easy to partition into inequivalent cases under this gauge.
Moreover, the group classification list presented in Table~1, which involves the gauge $f=1$ in the case $m=n$, is well consistent with
the group classification list found for class~\eqref{eqRDfghTwoPower} in~\cite{VJPS2007}.
This is important for the study of contractions between cases of Lie symmetry extensions.

\subsection{Classification list and additional equivalence transformations}

Results of group classification for subclasses of class~\eqref{eqRDfghExp} are collected in Table~1.

\begin{table}[th!]\renewcommand{\arraystretch}{1.5}\footnotesize
\begin{center}
\textbf{Table 1.} Results of group classification of class~\eqref{eqRDfghExp} under the gauge $g=1$.
\\[2ex]
\begin{tabular}{|c|c|c|l|}
\hline
no.&$f(x)$&$h(x)$&\hfil Basis of $A^{\max}$ \\
\hline
\multicolumn{4}{|c|}{General case of $m$}\\
\hline
1&$\forall$&$\forall$&$\partial_t$\\
\hline
2&$f_1(x)$&$h_1(x)$&$\partial_t,\,(d+2b-pn)t\partial_t+
(nax^2+bx+c)\partial_x+(ax+p)\partial_u$\\
\hline
%2.1&$x^d$&$x^{(n-m)p-2}$&$\partial_t,\,(d+2-pn)t\partial_t+
%x\partial_x+p\partial_u$\\
%\hline
%2.2&$e^{dx}$&$e^{(n-m)px}$&$\partial_t,\,\left(d-pn\right)t\partial_t+\partial_x
%+p\partial_u$\\
%\hline
%2.3&$\frac{e^{d\arctan x}}{(x^2+1)^{\frac32}}$&$\varepsilon\frac{e^{(n-m)p\arctan x}}{(x^2+1)^{\frac{3n+m}{2n}}}$&$\partial_t,\,(d-pn)t\partial_t+
%(x^2+1)\partial_x+(\frac1nx+p)\partial_u$\\
%\hline
3&$1$&$\varepsilon$&$\partial_t,\,\partial_x,\,
2mt\partial_t+(m-n)x\partial_x-2\partial_u$\\
\hline\multicolumn{4}{|c|}{$m=0$, \quad $h\neq0$, \quad $(h/f)_x=0$}\\
\hline
4&$\forall$&$\varepsilon f(x)$&$\partial_t,\,e^{-\varepsilon nt}(\partial_t+\varepsilon \partial_u)$\\
\hline
5&$f_1(x)$&$\varepsilon f_1(x)$&$\partial_t,\,e^{-\varepsilon nt}(\partial_t+\varepsilon \partial_u),\,
n(nax^2+bx+c)\partial_x+(nax+2b+d)\partial_u$\\
\hline
6&$1$&$\varepsilon$&$\partial_t,\,e^{-\varepsilon nt}(\partial_t+\varepsilon \partial_u),\,\partial_x,\,nx\partial_x+2\partial_u$\\
\hline
\multicolumn{4}{|c|}{$m=n$ \quad or \quad $h=0$}\\
\hline
7&$\forall$&$\forall$&$\partial_t,\,nt\partial_t-\partial_u$\\
\hline
8&$1$&$\alpha x^{-2}$&$\partial_t,\,nt\partial_t-\partial_u,\,nx\partial_x+2\partial_u$\\
\hline
9&$1$&$\varepsilon$&$\partial_t,\,nt\partial_t-\partial_u,\,\partial_x$\\
\hline
10&1&0&$\partial_t,\,nt\partial_t-\partial_u,\,\partial_x,\,nx\partial_x+2\partial_u$\\
\hline
\end{tabular}
\end{center}
Here $n$, $\alpha$ and $\varepsilon$ are nonzero constants, $n=1\bmod\hat G^\sim_1$,
$\varepsilon=\pm1\bmod\hat G^\sim_1,$
\[f_1(x)=\exp\left({\int\frac{-3nax+d}{nax^2+bx+c}\,{\rm d}x}\right),\quad
h_1(x)=\varepsilon\exp\left({-\int\frac{(3n+m)ax+2b+(m-n)p}{nax^2+bx+c}\,{\rm d}x}\right),\]
and up to $\hat G^{\sim}_1$-equivalence
the parameter tuple~$(a,b,c,d,p)$ can be assumed to belong to the set
\[
\{(0,1,0,\bar d,(\bar q+2)/(n-m)),\ (0,0,1,1,\check p),\ (0,0,1,0,1),\ (1/n,0,1,\hat d,\hat p)\},
\]
where $(\bar d,\bar q)\neq(0,0),(-3,-3-m/n)$ and 
modulo $\hat G^{\sim}_1$ we can also set 
$\bar d\geqslant-3/2$ and, if $\bar d=-3/2$, $\bar q\geqslant-3/2-m/(2n)$;
$\hat d\geqslant0$ and, if $\hat d=0$, $\hat p\geqslant0$.
In Case~5 the parameter~$p$ should be neglected.
In Case~7 the arbitrary element~$f$ (resp.~$h$) can be additionally gauged
by transformations from~$\hat G^\sim_{1,m=n}$.
For example, we can set $f=1$.
\end{table}

There exist additional equivalence transformations between classification cases presented in Table~1.
Namely the point transformation
\begin{equation}\label{EqAddEquivTransOfEqRDfghPower}
t'=\frac 1{\varepsilon n}e^{\varepsilon n t},\quad
x'=x,\quad
u'=u-\varepsilon t
\end{equation}
links the equations
$
f(x)u_t=(g(x)e^{nu}u_x)_x+\varepsilon f(x)$
and
$f(x')u'_{t'}=(g(x')e^{nu'}u'_{x'})^{}_{x'}.
$
This transformation belongs to no equivalence group found in Section~2
and reduces Cases~4--6 of Table~1 to the set of cases `$m=n$ or $h=0$'.
For the reduction to be precise to Cases~7--10 of Table~1,
for Case~5 transformation~\eqref{EqAddEquivTransOfEqRDfghPower} should be composed with
an appropriate transformation of the form~\eqref{EqGaugeTransformationForf1}.
Such compositions map subcases of Case~5 to subcases of Case~8 and~9,
cf.\ Remark~\ref{RemarkOnTransBetweenClassificationListsInDifferentGauges}.
The transformations described exhaust additional equivalence transformations
within the classification list from Table~1.
It~is proved in the next section within the framework of admissible transformations.
Thus, transformation~\eqref{EqAddEquivTransOfEqRDfghPower}
can also be included in the framework of conditional equivalence but
the corresponding conditional equivalence group is too complicated.

As a result, we obtain the following assertion.

\begin{theorem}\label{TheoremOnClassificationOfRDfghExpEqsWRTPointTrans}
Up to point transformations, a complete list of Lie symmetry extensions for equations from class~\eqref{eqRDfghExp}
is exhausted by Cases~1--3 and~7--10 of Table~1.
\end{theorem}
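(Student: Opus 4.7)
The plan is to combine the $\hat G^{\sim}_1$-classification summarized in Table~1 with the additional point transformation~\eqref{EqAddEquivTransOfEqRDfghPower}, which lies outside every equivalence group identified in Section~\ref{SectionOnEquivGroups}, in order to collapse Cases~4--6 into Cases~7--10. Since Table~1 already lists all Lie symmetry extensions up to $\hat G^{\sim}_1$-equivalence (and up to $\hat G^{\sim}_{1,m=n}$-equivalence in the $m=n$ block), this supplementary reduction is the only extra ingredient needed to pass from equivalence-group classification to classification up to general point transformations.

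First I would verify by direct substitution that~\eqref{EqAddEquivTransOfEqRDfghPower} sends any equation $f(x)u_t=(e^{nu}u_x)_x+\varepsilon f(x)$ of subclass~\eqref{eqRDfghExp2} with $m=0$ and $h=\varepsilon f$ to the equation $f(x')u'_{t'}=(e^{nu'}u'_{x'})_{x'}$ of the same class with $h=0$, preserving~$f$. Applied to Case~4 this immediately produces Case~7, and applied to Case~6 it produces Case~10. For Case~5 the image lies in the class with $h=0$ but $\tilde f=f\ne 1$, so I would post-compose with a suitable transformation of the form~\eqref{EqGaugeTransformationForf1} from~$\hat G^{\sim}_{1,m=n}$ to restore the gauge $\tilde f=1$. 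The three $\hat G^{\sim}_1$-inequivalent subcases of Case~5 (power, exponential, and rational--arctan forms of~$f$) are then mapped respectively to Cases~8 and~9 according to the parameter dictionary recorded in Remark~\ref{RemarkOnTransBetweenClassificationListsInDifferentGauges}. This calculation, though mechanical, is the central constructive content of the theorem.

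To complete the argument I would establish that the remaining Cases~1--3 and~7--10 are pairwise point-inequivalent. Differences in dimension and structure of $A^{\max}$ rule out most pairings, and the dichotomy between generic~$m$ and the alternative ``$m=n$ or $h=0$'' separates the two blocks at the level of Lie invariants. The residual task is to exclude any further hidden point equivalence between two surviving cases within the class~\eqref{eqRDfghExp}; such an equivalence would necessarily be an admissible point transformation of the class. Invoking the exhaustive description of admissible transformations carried out in Section~\ref{SectionClassificationOfAdmTrans}, I would conclude that the only additional equivalences beyond~$\hat G^{\sim}_1$ are generated by~\eqref{EqAddEquivTransOfEqRDfghPower} composed with elements of the equivalence groups already used, and none of these identifies two distinct entries among Cases~1--3,~7--10.

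The main obstacle is precisely this last step. The mere existence of~\eqref{EqAddEquivTransOfEqRDfghPower} shows that the groups $\hat G^{\sim}$, $\hat G^{\sim}_1$, and $\hat G^{\sim}_{1,m=n}$ do not exhaust the point equivalences within the class, so completeness of the reduced list cannot be proved at the level of equivalence transformations alone and genuinely requires the admissible-transformation analysis of the next section. Once that analysis is in place, the theorem follows by combining the reductions of Cases~4--6 described above with the nonexistence of further admissible identifications.
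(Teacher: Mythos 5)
Your proposal is correct and follows essentially the same route as the paper: the additional point transformation~\eqref{EqAddEquivTransOfEqRDfghPower} (composed with a transformation of the form~\eqref{EqGaugeTransformationForf1} for Case~5) collapses Cases~4--6 into Cases~7--10, and the completeness of the resulting list is then guaranteed by the exhaustive description of admissible transformations in Section~\ref{SectionClassificationOfAdmTrans}. You also correctly identify, as the paper does, that the inequivalence of the surviving cases cannot be settled by equivalence groups alone and genuinely rests on the admissible-transformation analysis.
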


Therefore, the maximal dimension of Lie symmetry algebras for equations from class~(\ref{eqRDfghExp}) equals four,
and for equations with $m\ne0,n$ this dimension equals three.

In the Table~1 we do not present the partitions of Cases~2 and~5 into inequivalent subclasses in detail.
The complete description of these partitions is given
in Sections~\ref{SectionOnLieSymmetriesGenCaseOfM} and~\ref{SectionOnLieSymmetriesCaseOfM0}.

\begin{corollary}
If an equation from class~(\ref{eqRDfghExp}) is invariant with respect to a four-dimensional Lie algebra
then it is reduced using point transformations to the equation $u_t=(e^uu_x)_x$.
\end{corollary}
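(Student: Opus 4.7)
The plan is to read off the claim directly from Table~1 and Theorem~\ref{TheoremOnClassificationOfRDfghExpEqsWRTPointTrans}, which assert that, up to point equivalence, every Lie symmetry extension for an equation in class~\eqref{eqRDfghExp} is exhausted by Cases~1--3 and~7--10. So the first step is simply to inspect the bases of $A^{\max}$ in each of these cases and to tabulate the dimensions.

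A direct check gives: $\dim A^{\max}=1$ in Case~1, $\dim A^{\max}=2$ in Cases~2 and~7, and $\dim A^{\max}=3$ in Cases~3, 8 and~9. The only remaining possibility is Case~10, whose basis $\langle\partial_t,\,nt\partial_t-\partial_u,\,\partial_x,\,nx\partial_x+2\partial_u\rangle$ is four-dimensional and linearly independent for any $n\ne 0$. Hence if an equation from class~\eqref{eqRDfghExp} admits a four-dimensional Lie invariance algebra, then it is point-equivalent to an equation of the form from Case~10, namely $u_t=(e^{nu}u_x)_x$ with some $n\ne0$.

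It remains to normalize $n$. As already mentioned in the footnote following equation~\eqref{eqRDfghExp2}, the constant arbitrary element~$n$ can be scaled to~$1$ by equivalence transformations; concretely, applying the point transformation $\tilde t=n^{-1}t$, $\tilde x=x$, $\tilde u=nu$, which is a member of~$\hat G^\sim_1$ (it corresponds to $\delta_3=n$ together with a compatible scaling of~$t$), maps $u_t=(e^{nu}u_x)_x$ to $\tilde u_{\tilde t}=(e^{\tilde u}\tilde u_{\tilde x})_{\tilde x}$. Combining this with the point transformation furnished by Theorem~\ref{TheoremOnClassificationOfRDfghExpEqsWRTPointTrans}, we obtain a single point transformation that takes the original equation to $u_t=(e^uu_x)_x$, which is what the corollary asserts.

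No step is genuinely hard: the only thing one must be careful about is to make sure none of the Cases~2, 3, 7--9 can accidentally produce a four-dimensional algebra for a degenerate choice of the remaining parameters. For Cases~2, 8, 9 the listed generators are manifestly linearly independent (they involve different monomials in $t,x,u$ in the coefficient functions), and Cases~3 and~7 have explicit bases of sizes three and two. So the dimension count is unambiguous, and the corollary follows.
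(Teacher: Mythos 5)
Your argument is correct and is essentially the paper's own (implicit) reasoning: the corollary is a direct consequence of Theorem~\ref{TheoremOnClassificationOfRDfghExpEqsWRTPointTrans}, a dimension count over Cases~1--3 and~7--10 of Table~1 showing that only Case~10 yields $\dim A^{\max}=4$, and the gauge $n=1$. The one blemish is your explicit normalizing transformation: $\tilde t=n^{-1}t$, $\tilde x=x$, $\tilde u=nu$ does lie in $\hat G^{\sim}_1$, but by Theorem~\ref{equivfgh_exp1} it sends $(f,h,n)=(1,0,n)$ to $(\tilde f,\tilde h,\tilde n)=(1/n,0,1)$, i.e.\ it maps $u_t=(e^{nu}u_x)_x$ to $\tfrac1n\tilde u_{\tilde t}=(e^{\tilde u}\tilde u_{\tilde x})_{\tilde x}$ rather than to the target equation. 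The fix is immediate and even simpler: the pure amplitude scaling $\tilde t=t$, $\tilde x=x$, $\tilde u=nu$ (the element of $\hat G^{\sim}_1$ with $\delta_3=n$ and all other parameters trivial) already gives $\tilde u_{\tilde t}=(e^{\tilde u}\tilde u_{\tilde x})_{\tilde x}$, since $\tilde u_{\tilde t}=nu_t=n(e^{nu}u_x)_x=(e^{\tilde u}\tilde u_{\tilde x})_{\tilde x}$. With that correction the proof is complete and matches the intended one.
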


\begin{corollary}
If an equation from class~(\ref{eqRDfghExp}) with $m\ne0,n$ possesses a three-dimensional Lie invariance algebra
then it is mapped by a point transformation to the equation $u_t=(e^uu_x)_x\pm e^{mu}$.
\end{corollary}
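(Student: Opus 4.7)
The plan is to read the conclusion off from Theorem~\ref{TheoremOnClassificationOfRDfghExpEqsWRTPointTrans}, which asserts that up to point transformations every Lie symmetry extension in class~\eqref{eqRDfghExp} is represented by one of Cases~1--3 and~7--10 of Table~1. Accordingly, I would first identify, among these seven cases, those that (i)~are compatible with the hypothesis $m\ne0,n$ and (ii)~afford a Lie invariance algebra of dimension at least three.

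For step~(i), I would invoke the convention set in the paragraph preceding Section~\ref{SectionOnLieSymmetriesGenCaseOfM}, where the degenerate situation $h=0$ has been absorbed into the block $m=n$ since the value of~$m$ is then immaterial; consequently the hypothesis $m\ne0,n$ forces $h\ne0$ and rules out Cases~7--10 \emph{en bloc} (those cases correspond to either $m=n$ or $h=0$), while Cases~4--6 are already excluded because they live under the heading $m=0$. Only Cases~1, 2 and~3 survive. For step~(ii), the bases in Table~1 show directly that $\dim A^{\max}=1$ in the kernel Case~1, $\dim A^{\max}=2$ in Case~2, and $\dim A^{\max}=3$ in Case~3, so Case~3 is the unique candidate. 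Since the gauge $g=1$ is in effect throughout Table~1, the equation under consideration reduces, up to a point transformation, to $u_t=(e^{nu}u_x)_x+\varepsilon e^{mu}$ with $n\ne0$ and $\varepsilon\ne0$.

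To finish, I would apply a further transformation from~$\hat G^\sim_1$ (Theorem~\ref{equivfgh_exp1}) consisting of a scaling of~$u$ with $\delta_3=n$ together with a compatible scaling of~$t$ and~$x$ that preserves the gauges $f=g=1$ and absorbs $|\varepsilon|$. This rescales $n$ to~$1$, sends $\varepsilon$ to~$\pm1$, and transports $m$ to $m/n$; after relabelling the transported $m$ as the new~$m$ (still different from $0$ and $1$, so the hypothesis is preserved in form), the equation takes the advertised shape $u_t=(e^uu_x)_x\pm e^{mu}$. The heavy lifting is already done by Table~1 and Theorem~\ref{TheoremOnClassificationOfRDfghExpEqsWRTPointTrans}; the only point where I expect to pay attention is the bookkeeping of signs in this final rescaling, namely the interplay between the sign of~$\varepsilon$ and the sign of~$n$ when extracting the $\pm 1$.
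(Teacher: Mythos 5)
Your proposal is correct and follows the same route the paper intends: the corollary is an immediate consequence of Theorem~\ref{TheoremOnClassificationOfRDfghExpEqsWRTPointTrans} together with the gauges $n=1\bmod\hat G^\sim_1$ and $\varepsilon=\pm1\bmod\hat G^\sim_1$ recorded under Table~1, with Case~3 the only surviving three-dimensional case. The one step worth making explicit is that Cases~7--10 cannot be reached from an equation with $m\ne0,n$ because, by Theorem~\ref{TheoremOnRDfghSetOfAdmTrans}, point transformations preserve the ratio $m/n$ outside the exceptional pairs $(m,\tilde m)=(0,\tilde n)$ and $(n,0)$ --- this is precisely what justifies your ``en bloc'' exclusion.
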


\section{Classification of admissible (form-preserving) transformations}\label{SectionClassificationOfAdmTrans}

In contrast to the construction of equivalence groups, in the course of the complete study of admissible transformations
we cannot vary arbitrary elements.
The description of the set of admissible transformations will be presented
in terms of conditional equivalence groups and normalized subclasses.

The consideration of Section~\ref{SectionOnPreliminaryStudyOfAdmTrans} implies
that any admissible transformation in class~\eqref{eqRDfghExp} is of the general form~\eqref{adm_tr},
where the parameters satisfy the inequality $T_t\varphi_x\delta_3\neq0$ and equations~\eqref{det_eq_adm_tr1} and~\eqref{det_eq_adm_tr2}.
Differentiating the first equation of~\eqref{det_eq_adm_tr1} with respect to $t$,
we have that $V=\theta(x)/T_t$. Recall that $V$ denotes the expression $e^{\frac{n}{\delta_3} \psi}$.
Then the determining equations become
\begin{gather}\label{det_eq_adm_tr1a}f\tilde g\theta-\tilde f g \varphi_x^2=0,\quad
2\frac{\theta_x}{\theta}-\frac{g_x}g+\frac{{\tilde g}_{\tilde x}}{\tilde g} \varphi_x  -\frac{\varphi_{xx}}{\varphi_x}=0,
\\\label{det_eq_adm_tr2a}
 \frac{1}{\varphi_x}\left(\tilde g\frac{\theta_x}{\varphi_x}\right)_xe^{nu}
-\tilde fT_t\left(\frac{1}{T_t}\right)_t-n\frac{\tilde f}fhe^{mu}+
\frac n{\delta_3}T_te^{\tilde m  \psi}\tilde h e^{\tilde m \delta_3 u}=0.
\end{gather}
Solving equations~\eqref{det_eq_adm_tr1a}, we express the new arbitrary elements~$\tilde f$ and~$\tilde g$ via the old ones:
\[
\tilde f=\frac{\delta_0}{\theta\varphi_x}\,f,\quad \tilde g=\frac{\delta_0\varphi_x}{\theta^2}\,g,
\]
where $\delta_0$ is an arbitrary nonzero constant.

The splitting of equation~\eqref{det_eq_adm_tr2a} with respect to~$u$
and the subsequent integration of the determining equations appreciably depend on values of~$m$ and $\tilde m$.
Consider possible cases for these values.

If $\tilde m=m/\delta_3\not=0$,
equation~\eqref{det_eq_adm_tr2a} implies
 $T_{tt}=0$, i.e.,  $T= \delta_1t+\delta_2$ and therefore $V=\theta(x)/\delta_1$.
 The other conditions obtained from~\eqref{det_eq_adm_tr2a} result in
\begin{gather*}\tilde m=\frac m{\delta_3}\not=0,\tilde n\colon\quad
\tilde h=\frac{\delta_0\delta_3}{\delta_1\theta\varphi_x}e^{-\frac m{\delta_3}\psi}h,\quad\mbox{where}\quad
\theta=\left(\delta_4\int \frac{dx}{g(x)}+\delta_5\right)^{-1};
\\[1.5ex]
\tilde m=\frac m{\delta_3}=\frac n{\delta_3}\colon\quad
\tilde h=\frac{\delta_0\delta_3}{n\theta\varphi_x}\left[\frac{nh}{\theta}-
\left(\frac{g\theta_x}{\theta^2}\right)_x\right],\quad\mbox{where}\quad
\theta\quad\mbox{is an arbitrary function.}
\end{gather*}
Therefore,  in the case $m\not=0,n$ (resp.\ the case $m=n$) any admissible transformation is induced by
a transformation from the generalized extended equivalence group~$\hat G^{\sim}$ (resp.\ the conditional equivalence group~$G^{\sim}_{m=n}$).

If $\tilde m=m=0$ then equation~\eqref{det_eq_adm_tr2a} implies an expression for~$\theta$ and
an equation connecting the ratios $\tilde h/\tilde f$ and $h/f$,
\[
\theta=\left(\delta_4\int \frac{dx}{g(x)}+\delta_5\right)^{-1},\quad
\frac{\tilde h}{\tilde f}=\frac{\delta_3}{T_t}\frac{h}{f}+\frac{\delta_3}n\left(\frac{1}{T_t}\right)_t.
\]
In view of the simplest differential consequence
\[
\biggl(\frac{\tilde h}{\tilde f}\biggr)_x=\frac{\delta_3}{T_t}\biggl(\frac{h}{f}\biggr)_x
\]
of the last equation we have $(\tilde h/\tilde f)_x\not=0$ and $T_{tt}=0$ if $(h/f)_x\not=0$.
Therefore, any admissible transformation for equations with $m=0$ and $(h/f)_x\not=0$ is induced by
a transformation from the equivalence group~$G^{\sim}_1$.

This is not the case if $(h/f)_x=0$.
Then $(\tilde h/\tilde f)_x=0$, i.e., the condition $(h/f)_x=0$ is invariant with respect to~$G^{\sim}_1$.
We denote the constants $h/f$ and $\tilde h/\tilde f$ by~$\alpha$ and~$\tilde\alpha$, respectively.
Then we get the following equation for the function~$T(t)$:
\[
\left(\frac1{T_t}\right)_t=-n\alpha\frac1{T_t}+\tilde n\tilde\alpha,
\]
where $\tilde n=n/\delta_3.$
We integrate this equation and present the general solution in such a form that
continuous dependence of it on the parameters~$\alpha$ and~$\tilde\alpha$ is obvious:
\begin{gather}\label{EqRDfghCondEquivGrouphfconstTransOfT}
\begin{split}
&\alpha\tilde\alpha\not=0\colon \quad
\frac{e^{\tilde n\tilde\alpha T}-1}{\tilde n\tilde\alpha}=\delta_1\frac{e^{n\alpha t}-1}{n\alpha}+\delta_2,
\qquad
\alpha=0,\ \tilde\alpha\not=0\colon \quad \frac{e^{\tilde n\tilde\alpha T}-1}{\tilde n\tilde\alpha}=\delta_1t+\delta_2,
\\[.5ex]
&\alpha\not=0,\ \tilde\alpha=0\colon \quad T=\delta_1\frac{e^{n\alpha t}-1}{n\alpha}+\delta_2,
\qquad
\alpha=\tilde\alpha=0\colon \quad T=\delta_1t+\delta_2.
\end{split}
\end{gather}
Here $\delta_1\ne0$ as $T_t\ne0$.

\begin{theorem}\label{TheoremOnRDfghCondEquivGrouphfconst}
The generalized extended equivalence group $\hat G^{\sim}_{m=0,(h/f)_x=0}$ of
the subclass of class~\eqref{eqRDfghExp}, which is singled out by the conditions $m=0$ and $(h/f)_x=0$,
consists of the transformations
\[
\begin{array}{l}
\tilde t=T(t),\quad \tilde x=\varphi(x), \quad
\tilde u=\delta_3u+\psi(t,x), \\[1ex]
\tilde f=\dfrac{\delta_0}{\varphi_x}\,\Psi f,\quad
\tilde g=\delta_0\varphi_x\Psi^2g,\quad
\tilde n=\dfrac{n}{\delta_3},
\end{array}
\]
where the smooth function~$T=T(t)$ with $T_t\ne0$ is defined by~\eqref{EqRDfghCondEquivGrouphfconstTransOfT},
$\alpha=h/f$ and $\tilde\alpha=\tilde h/\tilde f$ are constants,
$\varphi$ is an arbitrary smooth function of~$x$ with $\varphi_x\not=0$,
the functions~$\psi$ and~$\Psi$ are defined by the formulas
\[
\psi(t,x)= -\dfrac{\delta_3}{n}\ln\left|T_t(t)\Psi(x)\right|,\quad
\Psi(x)=\delta_4\int \frac{dx}{g(x)}+\delta_5,
\]
$\delta_j,$ $j=0,\dots,5$, are arbitrary constants,
$\delta_0\delta_1\delta_3\not=0$ and $(\delta_4,\delta_5)\ne(0,0)$.
\end{theorem}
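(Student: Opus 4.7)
The plan is to assemble the statement directly from the case analysis that immediately precedes the theorem, and then to complement that analysis with a verification of sufficiency and of closure under composition, which together upgrade the set of maps to an equivalence group.

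First, I would invoke the general form~\eqref{adm_tr} of admissible transformations for class~\eqref{eqRDfghExp} together with the determining system~\eqref{det_eq_adm_tr1a}--\eqref{det_eq_adm_tr2a} derived at the beginning of Section~\ref{SectionClassificationOfAdmTrans}. Specializing to $\tilde m=m=0$ and $(h/f)_x=0$, I would set $\alpha:=h/f$ and $\tilde\alpha:=\tilde h/\tilde f$, both constants (the latter because the condition $(h/f)_x=0$ is preserved by admissible transformations, as already observed in the text). The determining equations~\eqref{det_eq_adm_tr1a} then force $\tilde f=\delta_0 f/(\theta\varphi_x)$ and $\tilde g=\delta_0\varphi_x g/\theta^2$, and integration of the second equation of that pair gives $\theta^{-1}=\delta_4\int dx/g(x)+\delta_5$ with $(\delta_4,\delta_5)\neq(0,0)$. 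Writing $\Psi:=1/\theta$ reproduces the stated formula for $\Psi$.

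Next, I would extract from~\eqref{det_eq_adm_tr2a} under the same constraints the ODE $(1/T_t)_t=-n\alpha/T_t+\tilde n\tilde\alpha$ for $T(t)$ and integrate it in the four regimes $\alpha\tilde\alpha\neq 0$, $\alpha=0\neq\tilde\alpha$, $\alpha\neq 0=\tilde\alpha$, and $\alpha=\tilde\alpha=0$, arriving at the uniform presentation~\eqref{EqRDfghCondEquivGrouphfconstTransOfT}. The relation $V=\theta/T_t$, combined with $V=e^{(n/\delta_3)\psi}$, then yields $\psi(t,x)=-(\delta_3/n)\ln|T_t(t)\Psi(x)|$, so that every admissible transformation in the subclass has the asserted form.

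The converse direction reduces to substituting an arbitrary transformation of the listed form into~\eqref{det_eq_adm_tr1a}--\eqref{det_eq_adm_tr2a} and checking that the equations hold identically; this is routine and I would state it briefly. The main obstacle, and the only step that is not mechanical, is verifying closure of the family under composition, which is what promotes the set of maps to an \emph{equivalence group}. Here I would compose two transformations with parameter pairs $(\alpha_1,\alpha_2)$ and $(\alpha_2,\alpha_3)$, using the fact that the piecewise presentation in~\eqref{EqRDfghCondEquivGrouphfconstTransOfT} has been crafted precisely so that all four regimes glue into a single continuous two-parameter family in $(\alpha,\tilde\alpha)$; one then checks that the composition lands back in this family with the pair $(\alpha_1,\alpha_3)$ and that inversion is obtained by swapping $\alpha\leftrightarrow\tilde\alpha$ together with the obvious inversion of the constants $\delta_i$. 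The condition $(\delta_4,\delta_5)\neq(0,0)$ is preserved under composition because it expresses the nondegeneracy of the projective action on $\Psi$, which is stable under the group law.
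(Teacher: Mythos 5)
Your argument follows the paper's own route: the theorem is read off from the specialization of the determining system \eqref{det_eq_adm_tr1a}--\eqref{det_eq_adm_tr2a} to $\tilde m=m=0$, $(h/f)_x=0$, with the ODE $(1/T_t)_t=-n\alpha/T_t+\tilde n\tilde\alpha$ integrated case by case into \eqref{EqRDfghCondEquivGrouphfconstTransOfT}, and this is correct. Two small remarks: the formula $\Psi=1/\theta=\delta_4\int dx/g+\delta_5$ actually comes from splitting \eqref{det_eq_adm_tr2a} with respect to $e^{nu}$ (giving $(\tilde g\,\theta_x/\varphi_x)_x=0$) combined with the already-derived expression for $\tilde g$ --- the second equation of \eqref{det_eq_adm_tr1a} is satisfied identically at that point --- and the closure under composition that you single out as the main obstacle is automatic once necessity and sufficiency are both established, since admissible transformations trivially compose and invert, so the family inherits the group structure without the explicit parameter computation.
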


%\todo The group $\hat G^{\sim}_{m=0,(h/f)_x=0}$ is a generalized equivalence group even if $n$ is fixed as it contains transformations with respect to~$t$,
%which depend on the (constant) ratio of the arbitrary elements~$h$ and~$f$.
In contrast to $G^{\sim}_{m=n}$, we do not use $\hat G^{\sim}_{m=0,(h/f)_x=0}$
in the course of group classification of class~\eqref{eqRDfghExp2}
because the application of this conditional equivalence group does not have a crucial influence on classification,
and the corresponding system $m=0$, $(h/f)_x=0$ for arbitrary elements is less obvious.
At the same time, transformations from $\hat G^{\sim}_{m=0,(h/f)_x=0}$ play the role of
additional equivalence transformations after completing the classification (see the previous section).

If $h\tilde h\not=0$, $m=0$ and $\tilde m=\tilde n$ then the
determining equations yield
\[
\frac hf=\frac1n\frac{T_{tt}}{T_t}, \quad\mbox{i.e.,}\quad \left(\frac{h}{f}\right)_x=0,
\qquad
\tilde h=-\frac{\delta_0\delta_3}{\varphi_x\theta}\left(\frac{\theta_x}{\theta^2}g\right)_x.
\]
Hence both equations~\eqref{eqRDfghExp}
and~\eqref{eqRDfghExptilde} are mapped to the subclass `$h=0$' of
the class under consideration. These mappings are realized by the
transformations from the corresponding conditional equivalence groups. 
We can assume that their images coincide. 
Therefore, the admissible transformation is composition of a
transformation~$\mathcal T_1$ of~\eqref{eqRDfghExp} to the
equation $f(x)u_t=(g(x)e^{nu}u_x)_x$ and a transformation~$\mathcal T_2$ of
the equation $f(x)u_t=(g(x)e^{nu}u_x)_x$ to~\eqref{eqRDfghExptilde}. The
transformation~$\mathcal T_1$ belongs to
$\hat G^{\sim}_{m=0,(h/f)_x=0}$ and the transformation~$\mathcal T_2$
belongs to $G^{\sim}_{m=n}$, and we can choose $\varphi=x$
in~$\mathcal T_1$.

The case $h\tilde h\not=0$, $m=n$ and $\tilde m=0$ is considered in similar way.

All the possible cases are exhausted.
We summarize the investigation of admissible transformations
in class~\eqref{eqRDfghExp} in the following assertion.

\begin{theorem}\label{TheoremOnRDfghSetOfAdmTrans}
Let the equations
\[
f(x)u_t=(g(x)e^{nu}u_x)_x + h(x)e^{mu}
\quad\mbox{and}\quad
\tilde f(\tilde x)u_{t}=(\tilde g(\tilde x) e^{\tilde n\tilde u}u_x)_x+\tilde h(\tilde x)e^{\tilde m\tilde u}
\]
be connected via a point transformation~$\mathcal T$ in the variables~$t$, $x$ and~$u$.
Then
\[
\mbox{either}\quad \frac{\tilde m}{\tilde n}=\frac mn \quad\mbox{or}\quad (m,\tilde m)=(0,\tilde n) \quad\mbox{or}\quad (m,\tilde m)=(n,0).
\]
The transformation~$\mathcal T$ is induced by a transformation from

\medskip

a) $\hat G^{\sim}$ if either $m\not=0,n$ or $m=0$, $(h/f)_x\not=0$;

b) $G^{\sim}_{m=n}$ if $m=n$ and $\tilde m\ne0$, then also $\tilde m=\tilde n$;

c) $\hat G^{\sim}_{m=0,(h/f)_x=0}$ if $m=\tilde m=0$, $(h/f)_x=0$, then also $(\tilde h/\tilde f)_x=0$.

\medskip

If $m=0$ and $\tilde m=\tilde n$ then $(h/f)_x=0$ and the transformation~$\mathcal T$ is the composition of
two transformations, from $\hat G^{\sim}_{m=0,(h/f)_x=0}$ and $G^{\sim}_{m=n}$, with
the intermediate equation having $h=0$.

The case with $m=n$ and $\tilde m=0$ is similar to the previous one.
\end{theorem}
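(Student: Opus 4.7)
The plan is to assemble into a single unified argument the case analysis that was developed earlier in Section~\ref{SectionClassificationOfAdmTrans}. The starting point is the preliminary reduction of Section~\ref{SectionOnPreliminaryStudyOfAdmTrans}, which already forces any admissible transformation~$\mathcal T$ to have the form~\eqref{adm_tr} with $T_t\varphi_x\delta_3\ne 0$ and to satisfy~\eqref{det_eq_adm_tr1}--\eqref{det_eq_adm_tr2}; differentiating the first equation of~\eqref{det_eq_adm_tr1} in~$t$ gives $V=\theta(x)/T_t$ and reduces the system to~\eqref{det_eq_adm_tr1a}--\eqref{det_eq_adm_tr2a}. From~\eqref{det_eq_adm_tr1a} one immediately reads off the expressions for~$\tilde f$ and~$\tilde g$, so all remaining content of the theorem is concentrated in~\eqref{det_eq_adm_tr2a}.

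First, I would derive the trichotomy for~$(m,\tilde m)$. Equation~\eqref{det_eq_adm_tr2a} involves three exponential functions of~$u$ with exponents $n$, $m$ and $\tilde m\delta_3$; splitting against their linear combinations forces each of the nonzero exponents to match one of the others. The generic match $\tilde m\delta_3=m$ (i.e., $\tilde m/\tilde n=m/n$) is the first alternative, and the only degenerations compatible with $n\ne 0$ are $(m,\tilde m)=(0,\tilde n)$ and $(m,\tilde m)=(n,0)$, yielding the remaining two alternatives.

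Next, for each alternative I would integrate the determining equations. When $\tilde m=m/\delta_3\ne 0$, splitting~\eqref{det_eq_adm_tr2a} with respect to~$u$ gives $T_{tt}=0$, and $\theta$ either solves $(\tilde g\theta_x/\varphi_x)_x=0$ (when $m\ne n$, yielding the explicit $\theta$ and thus a transformation from $\hat G^{\sim}$ as in Theorem~\ref{equivfgh_exp}) or satisfies a weaker relation that leaves $\theta$ free (when $m=n$, yielding a transformation from $G^{\sim}_{m=n}$ as in Theorem~\ref{equivfgh_exp_m=n}). When $\tilde m=m=0$, the equation reduces to a scalar identity linking $\tilde h/\tilde f$ and $h/f$; differentiating in~$x$ shows that the condition $(h/f)_x=0$ is invariant, whence one recovers $\hat G^{\sim}$ whenever $(h/f)_x\ne 0$ (forcing $T_{tt}=0$) and the new group $\hat G^{\sim}_{m=0,(h/f)_x=0}$ with $T$ as in~\eqref{EqRDfghCondEquivGrouphfconstTransOfT} when $(h/f)_x=0$.

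Finally, the two mixed cases must be treated. In the case $(m,\tilde m)=(0,\tilde n)$ the determining equations force $(h/f)_x=0$ together with an explicit expression for~$\tilde h$ in terms of~$g$ and~$\theta$, so both the source and the target equations admit conditional-equivalence maps into the subclass $h=0$: the source via a transformation from $\hat G^{\sim}_{m=0,(h/f)_x=0}$ (in which $\varphi$ may be fixed as $\varphi=x$, since the $x$-component is uncoupled on this side) and the target via a transformation from $G^{\sim}_{m=n}$. The original~$\mathcal T$ is then exhibited as the composition of these two; the symmetric case $(m,\tilde m)=(n,0)$ is handled identically after exchanging the roles of the two equations. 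The main obstacle I anticipate is exactly this last step: one must verify that the intermediate equation with $h=0$ still belongs to the class under study, that its choice is consistent with the two factor transformations coming from the respective conditional equivalence groups, and that the freezing $\varphi=x$ in the first factor does not cost any generality because the $x$-dependence is absorbed into the second factor. Once this is checked, all possibilities are exhausted and the theorem follows.
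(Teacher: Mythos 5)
Your proposal is correct and follows essentially the same route as the paper: reduce to the system~\eqref{det_eq_adm_tr1a}--\eqref{det_eq_adm_tr2a} via $V=\theta(x)/T_t$, obtain the trichotomy for $(m,\tilde m)$ by matching exponents of $u$ in~\eqref{det_eq_adm_tr2a}, then integrate case by case to identify the relevant (conditional) equivalence group, handling the mixed cases by composing through the intermediate subclass $h=0$ with $\varphi=x$ fixed in the first factor. No gaps.
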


\begin{corollary}\label{CorollaryOnRDfghSetOfAdmTrans}
Class~\eqref{eqRDfghExp} is represented as the union of its three maximal normalized subclasses separated by the conditions
\[
(h\not=0, \ m\not=0,n)\quad\mbox{or}\quad (m=0,\ (h/f)_x\not=0);\qquad
m=0, \ (h/f)_x=0;\qquad
m=n.
\]
Only the latter two subclasses have a non-empty intersection, and the intersection being the normalized subclass `$h=0$'.
\end{corollary}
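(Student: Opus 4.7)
The plan is to derive the corollary directly from Theorem~\ref{TheoremOnRDfghSetOfAdmTrans}, which already catalogues every admissible transformation in class~\eqref{eqRDfghExp}. Four items need to be checked: coverage of the class by the three listed subfamilies, normalization of each one, maximality of each one, and the structure of the pairwise intersections. All four will reduce to a bookkeeping exercise on the cases of Theorem~\ref{TheoremOnRDfghSetOfAdmTrans}.

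For coverage I argue by cases on $m$ and $h$. If $h\ne0$, then $m$ is uniquely determined by the equation, and the trichotomy $m\ne0,n$ vs $m=0$ vs $m=n$ is exhaustive; when $m=0$ I split further according to whether $(h/f)_x$ vanishes. If instead $h=0$, the equation does not determine $m$, and it can be assigned either $m=0$ (putting the equation in the second subclass, since $(h/f)_x=0$ trivially) or $m=n$ (putting it in the third). This freedom is exactly what produces the common intersection identified in the statement.

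The normalization of each subfamily follows from the corresponding clause of Theorem~\ref{TheoremOnRDfghSetOfAdmTrans}. Clause (a) shows that an admissible transformation starting from the first subfamily has target in the same subfamily (the ratio $\tilde m/\tilde n=m/n$ preserves the dichotomy \emph{nonzero-and-not-$1$} versus \emph{zero}, and the condition $(h/f)_x\ne0$ transforms consistently), and is induced by $\hat G^\sim$. Clauses (b) and (c) give the analogous statements for the third subfamily (generated by $G^\sim_{m=n}$) and the second subfamily (generated by $\hat G^{\sim}_{m=0,(h/f)_x=0}$). The residual cases $(m,\tilde m)=(0,\tilde n)$ and $(m,\tilde m)=(n,0)$ of the same theorem force $h=0$ or $\tilde h=0$, so they occur only inside the intersection of the last two subfamilies; the theorem already expresses such transformations as compositions of one map from $\hat G^{\sim}_{m=0,(h/f)_x=0}$ and one from $G^\sim_{m=n}$ linking the two equations through an intermediate equation with $h=0$, which is exactly what normalization of the intersection subclass requires.

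Maximality then comes by contradiction: a strictly larger normalized extension of any of the three subfamilies would contain an equation violating its defining condition, and Theorem~\ref{TheoremOnRDfghSetOfAdmTrans} would provide for that equation an admissible transformation outside the equivalence group of the subfamily. Finally, intersecting the defining conditions pairwise, the first subclass is disjoint from each of the others since its constraints on $m$ and on $(h/f)_x$ are directly incompatible with theirs, while the second and third together demand $m=0$, $m=n$ and $(h/f)_x=0$, which in view of $n\ne0$ is equivalent to $h\equiv0$ with the two admissible conventions for $m$. The subclass $h=0$ is itself normalized by the same argument, as admissible transformations between two equations with $h=0$ preserve the condition $h=0$ and, by Theorem~\ref{TheoremOnRDfghSetOfAdmTrans}, lie in $\hat G^\sim_{m=0,(h/f)_x=0}\cap G^\sim_{m=n}$ up to the convention for $m$. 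I expect the only delicate step to be precisely this simultaneous interpretation of the same $h=0$ equation under two different values of $m$, which must be handled carefully so that the composition statement in Theorem~\ref{TheoremOnRDfghSetOfAdmTrans} translates cleanly into the normalization of the intersection.
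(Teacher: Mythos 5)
Your proposal is correct and follows essentially the same route as the paper, which states this corollary without a separate proof as a direct reading-off of Theorem~\ref{TheoremOnRDfghSetOfAdmTrans}: coverage via the trichotomy on $m$ together with the convention that $m$ is undefined (hence assignable as $0$ or $n$) when $h=0$, normalization of each subclass from clauses (a)--(c), and the intersection structure from the composition statement for the mixed cases $(m,\tilde m)=(0,\tilde n)$ and $(n,0)$. Your explicit attention to the double interpretation of $h=0$ equations is exactly the delicate point the paper handles by the same convention (cf.\ the remark in Section~4 that the case $h=0$ "can be included to the case $m=n$").
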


Recall that the class of differential equations is called normalized if any
admissible transformation in this class belongs to its equivalence group.
See~\cite{Popovych&Eshraghi2005} for strong definitions.

The subsets of equations appearing in Corollary~\ref{CorollaryOnRDfghSetOfAdmTrans} 
are really subclasses of class~\eqref{eqRDfghExp} 
since they are singled out from class~\eqref{eqRDfghExp} 
by usual systems of differential equations and/or inequalities with respect to arbitrary elements.  
This is not obvious only for the first subclass. 
The condition corresponding to it is in fact equivalent to a single inequality, 
$(m-n)h\big(m^2+((h/f)_x)^2\big)\ne0$.

\section{Contractions}\label{SectionOnContractions}

Examples of nontrivial limits between equations admitting Lie symmetry extensions are known for a long time.
For instance, in~\cite{Bluman&Reid&Kumei1988} equations with exponential nonlinearities were excluded
from the group classification list of nonlinear diffusion equations as a separate case
and were just considered as a {\it limiting case} of equations with power nonlinearities.
At the same time, it looks more convenient to include such cases to classification lists
and then indicate connections between different classification cases via limiting processes.
Using the analogy with theory of Lie algebras such connections are called {\it contractions}.
A theoretical background on contractions of differential equations, their Lie symmetry algebras and solutions
was first discussed in~\cite{Ivanova&Popovych&Sophocleous2006II}.

In this section we relate, via contractions, the group classification lists obtained for classes~\eqref{eqRDfghExp} and~\eqref{eqRDfghTwoPower} 
in the present paper and in~\cite{VJPS2007}, respectively. 
For convenience of the presentation, inequivalent cases of Lie symmetry extension in class~\eqref{eqRDfghTwoPower} 
are collected in Table~2. 
Necessary explanations on the equivalence involved and thorough gauges of involved parameters by equivalence transformations 
are given in the appendix. 
Then contractions are used to construct exact solutions of equations from class~\eqref{eqRDfghExp} 
using known solutions of equations from class~\eqref{eqRDfghTwoPower}.
In Section~\ref{SectionOnContractionOfCLs} we demonstrate three different ways 
of a similar consideration for conservation laws: 
in terms of contractions of associated characteristic or conserved vectors or divergence expressions themselves.

\begin{table}[th!]\footnotesize
\renewcommand{\arraystretch}{1.6}
\begin{center}
\textbf{Table 2.} Results of group classification of class~\eqref{eqRDfghTwoPower}  under the gauge $g=1$ \cite{VJPS2007}.
\\[2ex]
\begin{tabular}{|c|c|c|c|l|}
\hline
no.&$n$&$f(x)$&$h(x)$&\hfil Basis of $A^{\max}$ \\
\hline
\multicolumn{5}{|c|}{General case}\\
\hline 1&$\forall$&$\forall$& $\forall$&
$\partial_t$\\
\hline 2&$\forall$&$f_1(x)$& $h_1(x)$&
$\partial_t, (d+2b-pn)t\partial_t+((n+1)ax^2+bx+c)\partial_x+(ax+p)u\partial_u$\\
\hline
3&$\forall$&1&$\varepsilon$&$\partial_t, \partial_x, 2(1-m)t\partial_t+(1+n-m)x\partial_x+2u\partial_u$ \\
\hline
\multicolumn{5}{|c|}{$m=1$, \quad $h\neq0$, \quad $(h/f)_x=0$}\\
\hline
4&$\forall$&$\forall$&$\varepsilon f$&$\partial_t, e^{-\varepsilon nt}(\partial_t+\varepsilon u\partial_u)$ \\
\hline 
5&$\forall$&$f_1(x)$&$\varepsilon f$&$\partial_t,
e^{-\varepsilon nt}(\partial_t+\varepsilon u\partial_u), n((n+1)ax^2+bx+c)\partial_x+(nax+2b+d)u\partial_u$ \\
\hline 
6&$\neq -\frac 43$&1&$\varepsilon$&$\partial_t, \partial_x,
e^{-\varepsilon nt}(\partial_t+\varepsilon u\partial_u), nx\partial_x+2u\partial_u$ \\
\hline 
7&$-\frac 43$&1&$\varepsilon$&$\partial_t, \partial_x,
e^{\frac 43\varepsilon t}(\partial_t+\varepsilon u\partial_u),
-\frac 43 x\partial_x+2u\partial_u,
-\frac 13x^2\partial_x+xu\partial_u$ \\
\hline
\multicolumn{5}{|c|}{$m=n+1$ \quad or \quad $h=0$}\\
\hline
8&$\forall$&$\forall$&$\forall$&$\partial_t, nt\partial_t-u\partial_u$ \\
\hline 
9&$\neq -\frac 43$&1&$\alpha x^{-2}$&$\partial_t,
nt\partial_t-u\partial_u, 2t\partial_t+x\partial_x$  \\
\hline 
10&$\neq -\frac 43$&1&$\varepsilon$&$\partial_t,
nt\partial_t-u\partial_u, \partial_x$  \\
\hline
11&$\neq -\frac 43$&1&0&$\partial_t, \partial_x, nt\partial_t-u\partial_u, 2t\partial_t+x\partial_x$ \\
\hline 12&$-\frac 43$&$e^x$&$\alpha$&$\partial_t,
t\partial_t+\frac34u\partial_u, \partial_x-\frac34u\partial_u$  \\
\hline 13&$-\frac 43$&1&0&$\partial_t, \partial_x,
\frac 43 t\partial_t+u\partial_u, 2t\partial_t+x\partial_x, -\frac{1}{3}x^2\partial_x+xu\partial_u$ \\
\hline
\end{tabular}
\end{center}
Here  $\alpha$ is arbitrary constant, $\alpha\not=0$ in Case~9, $\varepsilon=\pm1$,
\[
f_1(x)={\rm exp}\left [\int \frac{-(3n+4)ax+d}{(n+1)ax^2+bx+c}dx\right ],\quad
h_1(x)={\rm exp}\left [\int \frac{-(3(n+1)+m)ax+(n-m+1)p-2b}{(n+1)ax^2+bx+c}dx\right ],
\]
and it can be assumed up to $G^{\sim}_{g=1}$-equivalence  (see Theorem~\ref{TheoremOnEquivGroupg1}) 
that, if $n\neq-1$,
the parameter tuple~$(a,b,c,d,p)$ takes only the following inequivalent values:
\[
\{(0,1,0,\bar d,(\bar q+2)/(n+m-1)),\ (0,0,1,1,\check p),\ (0,0,1,0,1),\ (1/(n+1),0,1,\hat d,\hat p)\},
\]
where $\check p$ is an arbitrary constant; 
$\hat d\geqslant0$ and, if $\hat d=0$, $\hat p\geqslant0$;
\[
(\bar d,\bar q)\ne(0,0),\left(-\frac{3n+4}{n+1},-3-\frac m{n+1}\right);\quad
\bar d\geqslant-\frac{3n+4}{2(n+1)}\ \mbox{and, if\ } 
\bar d=-\frac{3n+4}{2(n+1)},\  \bar q\geqslant-\frac32-\frac m{2(n+1)}. 
\]
If $n=-1$, up to $G^{\sim}_{g=1}$-equivalence
the parameter tuple~$(a,b,c,d,p)$ can be assumed to belong to the set
\[
\{(0,1,0,d',p'),\ (0,0,1,0,1),\ (\varepsilon,0,1,0,p'')\},
\]
where
$d'$  and $p'$ are arbitrary constants, $p''\geqslant0$.
In Case~5 the parameter~$p$ should be neglected.
In Case~8 the parameter-functions~$f$ and~$h$ can be additionally
gauged with equivalence transformations from $G^{\sim}_{g=1,m=n+1}$ (see Theorem~\ref{CorollaryOnEquivGroupMN1}).
For example, we can put $f=1$ if $n\not=-4/3$ and $f=e^x$
otherwise.
\end{table}

\subsection{Contractions of equations and of Lie invariance algebras}

At first we apply the equivalence transformation
\begin{equation}\label{contraction}
\tilde t ={\delta} t,\quad \tilde x={\sqrt\delta}x,\quad
\tilde u=\delta(u-1),\quad\tilde n=\frac n{\delta},\quad \tilde m=\frac m{\delta}
\end{equation}
parameterized by a positive constant parameter $\delta$ to the equation from class~\eqref{eqRDfghTwoPower} 
with the values arbitrary elements $g=1$ and $f$ and $h$ presented in Case~2 of Table 2.
The constant parameters $a$, $b$, $c$, $d$ and $p$ are transformed in the following way
\begin{equation}\label{contraction_parameters}
\tilde a=a,\quad \tilde b=\frac b{\sqrt\delta},\quad \tilde c=c,\quad\tilde d=\frac d{\sqrt\delta},\quad\tilde p={\sqrt\delta}p,
\quad\tilde\alpha=\delta\alpha
\end{equation}
wherever this is relevant, i.e., we change parameters if and only if they appear in the values of arbitrary elements of the initial equation.
Then, we take the imaged equation and proceed to the limit $\delta\rightarrow +\infty$.
This results in the equation from class~\eqref{eqRDfghExp2} with the values of the arbitrary elements $f$ and $h$ presented in
Case~2 of Table 1. The same procedure establishes a contraction between the associated Lie algebras of vector fields.
The corresponding notation will be $2.2\rightarrow1.2$, where the first numbers indicate the numbers of the tables and the second numbers indicate the numbers of cases within the tables.
We present the complete list of contractions which replace power nonlinearities by exponential ones and, therefore,
connect cases of Lie symmetry extensions for classes~\eqref{eqRDfghExp} and~\eqref{eqRDfghTwoPower}:
\begin{gather*}2.1\rightarrow1.1,\quad2.2\rightarrow1.2,
\quad2.3\rightarrow1.3,\quad2.4\rightarrow1.4,\quad2.5\rightarrow1.5,\\
2.6\rightarrow1.6,\quad2.8\rightarrow1.7,\quad2.9\rightarrow1.8,\quad2.10\rightarrow1.9,\quad2.11\rightarrow1.10.
\end{gather*}

\subsection{Contractions of Lie reductions and exact solutions}

In~\cite{VJPS2007} we carried out Lie reductions and constructed Lie exact solution for equations from class~\eqref{eqRDfghTwoPower} with the values of arbitrary elements presented in Cases~9 and 12 of Table 2, which admit three-dimensional Lie symmetry algebras.
It is shown in the previous subsection that there is a contraction of Case~9 of Table 2 to Case~8 of Table~1.
The corresponding equations from classes~\eqref{eqRDfghExp} and~\eqref{eqRDfghTwoPower} are
\begin{gather}
\label{eq_exp_case8}
\tilde u_{\tilde t}=\left(e^{\tilde n\tilde u}\tilde u_{\tilde x}\right)_{\tilde x}+\tilde\alpha\tilde x^{-2}e^{\tilde n\tilde u},
\\ \label{eq_power_case9}
u_t=(u^nu_x)_x+\alpha x^{-2}u^{n+1},
\end{gather}
whose maximal Lie invariance algebras~$\tilde{\mathfrak g}$ and~$\mathfrak g$ are generated by the vector fields
$\tilde X_1=\p_{\tilde t}$, $\tilde X_2=n\tilde t\p_{\tilde t}-\p_{\tilde u}$,
$\tilde X_3=n\tilde x\p_{\tilde x}+2\p_{\tilde u}$ and $X_1=\p_t$, $ X_2=nt\p_t-u\p_u$,
$ X_3=nx\p_x+2u\p_u$,
respectively.
The contraction $2.9\rightarrow1.8$ can be realized using the  simpler transformation
\begin{equation}\label{contraction1} \tilde t =t,\quad \tilde x= x,\quad
\tilde u=\delta(u-1),\quad\tilde n=\frac n{\delta},\quad \tilde \alpha={\delta}\alpha
\end{equation}
than transformation~\eqref{contraction}.
In the course of this contraction the algebra~$\mathfrak g$
is contracted to the algebra~$\tilde{\mathfrak g}$
as a Lie algebra of vector fields in the space of~$(t,x,u)$.
Namely, $X_1\rightarrow\tilde X_1$, $X_2\rightarrow\tilde X_2$ and $X_3\rightarrow\tilde X_3$.

Let us study the related contractions of Lie reductions of equation~\eqref{eq_power_case9}
to ones of equation~\eqref{eq_exp_case8}.
Inequivalent Lie reductions of these equations with respect to one-dimensional subalgebras
of the corresponding maximal Lie invariance algebras are exhausted by those presented in Tables~3 and~4.
For convenience we omit tildes in Table~4.
The transformations of the invariant independent and dependent variables,
which are induced by transformation~\eqref{contraction1},
take the form $\tilde \varphi=\delta(\varphi-1)$ and $\tilde\omega=\omega$
in all cases of Table~3.

\begin{table}[th!]\begin{center}\footnotesize\renewcommand{\arraystretch}{1.6}
\textbf{Table~3.} Lie reductions for Case~9 of Table~2.
\\[2ex]
\begin{tabular}{|r|c|c|c|c|}\hline
N\hfil\null& $X$ & $\omega$ & $u =$ & Reduced ODE \\
\hline
1 & $X_2 - \mu X_3$ & $x|t|^\mu$ &
$|t|^{-\frac{1+2\mu}{n}}\varphi(\omega)$
& $(\varphi^{n}\varphi_\omega)_\omega -\mu\omega\varepsilon\varphi_\omega+\frac{1+ 2\mu}{n}\varepsilon\varphi+
\alpha\omega^{-2}\varphi^{n+1}=0$, $\varepsilon=\sign t$
\\
2& $X_3$ & $t$ & $|x|^\frac 2n \varphi(\omega)$ &
 $n^2\varphi_\omega-(\alpha n^2+2n+4)\varphi^{n+1}=0\quad$
\\
3& $X_3\pm X_1$ & $xe^{\mp t}$ & $e^{\pm\frac {2t}n}\varphi(\omega)$ & 
$(\varphi^{n}\varphi_\omega)_\omega \pm\omega\varphi_\omega\mp\frac 2n\varphi+\alpha\omega^{-2}\varphi^{n+1}=0$
\\
4& $X_1$&$x$&$\varphi(\omega)$&
$(\varphi^{n}\varphi_\omega)_\omega+\alpha\omega^{-2}\varphi^{n+1}=0$
\\
\hline
\end{tabular}
\end{center}
\end{table}

\begin{table}[th!]\begin{center}\footnotesize\renewcommand{\arraystretch}{1.6}
\textbf{Table~4.} Lie reductions for Case~8 of Table~1.
\\[2ex]
\begin{tabular}{|r|c|c|c|c|}\hline
N\hfil\null& $X$ & $\omega$ & $u =$ & Reduced ODE \\
\hline
1 & $X_2 - \mu X_3$ & $x|t|^\mu$ &
$\varphi(\omega)-\frac{1+2\mu}n\ln|t|$
& $(e^{n\varphi})_{ \omega\omega}-\mu n\varepsilon\omega\varphi_\omega+(1+2\mu)\varepsilon+\alpha  n\omega^{-2}e^{n\varphi}=0$, 
$\varepsilon=\sign t$
\\
2& $X_3$ & $t$ & $\varphi(\omega)+\frac2n\ln|x|$ &
 $n\varphi_\omega-(\alpha n+2)e^{n\varphi}=0$
\\
3& $X_3\pm X_1$ & $xe^{\mp t}$ & $\varphi(\omega)\pm\frac2{n}t$ & $(e^{n\varphi})_{\omega\omega}\pm n\omega\varphi_\omega
\mp2+
 \alpha  n\omega^{-2}e^{  n \varphi}=0$
\\
4& $X_1$&$x$&$\varphi(\omega)$&
$(e^{n\varphi})_{\omega\omega}+\alpha n\omega^{-2}e^{n\varphi}=0$
\\
\hline
\end{tabular}
\end{center}
\end{table}

Consider Case~1 of Table~3 in detail.
The transformed version and the corresponding limit of the ansatz $u=|t|^{-\frac{1+2\mu}{n}}\varphi(\omega)$ are
\[
\left(1+\frac{\tilde u}{\delta}\right)^{\delta\tilde n}=
|\tilde t|^{-(1+2\mu)}\left(1+\frac{\tilde \varphi}{\delta}\right)^{\delta\tilde n}
\quad\rightarrow\quad 
e^{\tilde n\tilde u}=|\tilde t|^{-(1+2\mu)}e^{\tilde n\tilde \varphi}\quad\mbox{at}\quad
\delta \rightarrow +\infty.
\]
Therefore, the contracted ansatz is $\tilde u =\tilde\varphi(\tilde\omega) -\frac{1+2\mu}{\tilde n} \ln |t|$.
The reduced equation from Case~1 of Table~3 is mapped by transformation~\eqref{contraction1}
to the equation
\[
\dfrac{\delta \tilde n}{\delta \tilde n+1}
\left[\left(1+\frac{\tilde \varphi}{\delta}\right)^{\delta\tilde n+1}\right]_{\tilde\omega\tilde\omega} 
-\mu\tilde n\varepsilon\tilde\omega{\tilde\varphi}_{\tilde\omega}
+(1+ 2\mu)\varepsilon\left(1+\frac{\tilde \varphi}{\delta}\right)+
\frac{\tilde\alpha\tilde n}{\tilde\omega^{2}}\left(1+\frac{\tilde \varphi}{\delta}\right)^{\delta\tilde n+1}
=0.
\]
Then the limit process at $\delta \rightarrow +\infty$ leads to the equation
\[
\left(e^{\tilde n\tilde\varphi}\right)_{\tilde\omega\tilde\omega} 
-\mu\tilde n\varepsilon\tilde\omega{\tilde\varphi}_{\tilde\omega}
 +(1+2\mu)\varepsilon+
\frac{\tilde\alpha\tilde n}{\tilde\omega^{2}}e^{\tilde n\tilde\varphi}=0
\]
which is also obtained from equation~\eqref{eq_exp_case8}
by the reduction with respect to the contracted ansatz and presented by Case 1 of Table~4.

Analogously we obtain contractions of reductions $3.2\rightarrow4.2$, $3.3\rightarrow4.3$ and $3.4\rightarrow4.4$.

For Cases~2 and~4 of Table~3 exact solutions of reduced equations were found in~\cite{VJPS2007}.
The substitution of these solutions to the respective ansatzes results in
the following exact solutions of  equation~\eqref{eq_power_case9}:
\begin{gather}\label{sol1}
u=\left|\frac{x^2}{C-(\alpha n+2+4n^{-1})t}\right|^{\frac1n},
\\[1.5ex] \label{sol2}
u=\begin{cases}
\big|C_1\sqrt x\ln x+C_2\sqrt x\big|^\frac1{n+1},
&\text{if}\quad \alpha'=0,\\[.5ex]
\big|C_1x^{\varkappa_1}+C_2x^{\varkappa_2}\big|^\frac1{n+1},
&\text{if}\quad \alpha'>0,\\[.5ex]
\big|C_1\sqrt x\sin(\sigma\ln x)+C_2\sqrt x\cos(\sigma\ln x)\big|^\frac 1{n+1},
&\text{if}\quad \alpha'<0,
\end{cases}
\end{gather}
where
\[
\alpha'=1-4\alpha(n+1),
\quad\varkappa_{1,2}=\frac{1\pm\sqrt{\alpha'}}2,
\quad \sigma=\frac{\sqrt{-\alpha'}}{2}.
\]
Here and in what follows $C$, $C_1$ and $C_2$ are arbitrary constants.
Applying transformation~\eqref{contraction1} to solution~\eqref{sol1} and proceeding with
the limit $\delta\rightarrow+\infty$, we obtain
\begin{gather*}
\left(1+\frac{\tilde u}\delta\right)^{\delta\tilde n}=
\tilde x^2\left(C-\Bigl(\tilde\alpha\tilde n+2+\frac4{\tilde n\delta}\Bigr)\tilde t\right)^{-1}
\quad\rightarrow\quad
e^{\tilde n\tilde u}={\tilde x}^2\left(C-\left(\tilde\alpha\tilde n+2\right)\tilde t\right)^{-1}.
\end{gather*}
As a result, we construct the exact solution
\begin{gather*}
\tilde u=
\frac1{\tilde n}\ln\left|\frac{{\tilde x}^2}{C-\left(\tilde\alpha\tilde n+2\right)\tilde t}\right|.
\end{gather*}
for equation~\eqref{eq_exp_case8}.
Applying the same technique to solutions~\eqref{sol2} leads to the steady-state exact solutions of~\eqref{eq_exp_case8}:
\begin{gather*}
\tilde u=\begin{cases}
\frac1{\tilde n}\ln\big|C_1\sqrt{\tilde x}\ln\tilde x+C_2\sqrt{\tilde x}\big|,
&\text{if}\quad \tilde\alpha'=0,\\[.5ex]
\frac1{\tilde n}\ln\big|C_1\tilde x^{\varkappa_1}+C_2\tilde x^{\varkappa_2}\big|,
&\text{if}\quad \tilde\alpha'>0,\\[.5ex]
\frac1{\tilde n}\ln\big|C_1\sqrt{\tilde x}\sin(\sigma\ln{\tilde x})+C_2\sqrt{\tilde x}\cos(\sigma\ln{\tilde x})\big|,
&\text{if}\quad \tilde\alpha'<0,
\end{cases}
\end{gather*}
where
\[
\quad \tilde\alpha'=1-4\tilde\alpha\tilde n,
\quad\varkappa_{1,2}=\frac{1\pm\sqrt{\tilde\alpha'}}2,
\quad \sigma=\frac{\sqrt{-\tilde\alpha'}}{2}.
\]
Another way for finding this solution is to integrate the reduced equation of Case~4 from Table~4.
By the obvious transformation $\hat\varphi=e^{n\varphi}$ the reduced equation is mapped
to the Euler equation $\omega^2\hat\varphi_{\omega\omega}+\alpha n\hat\varphi=0$.

\subsection{Contractions of conservation laws}\label{SectionOnContractionOfCLs}

We use contractions in order to construct conservation laws of equations
from class~\eqref{eqRDfghExp} with $g=1$ using results obtained in~\cite{VJPS2007}
for equations from class~\eqref{eqRDfghTwoPower} with the same gauge of~$g$. 
Note that the consideration can be easily extended 
to the entire classes~\eqref{eqRDfghExp} and~\eqref{eqRDfghTwoPower} 
using transformations from the corresponding equivalence groups.

Roughly speaking, a conservation law of a system~$\mathcal L$ of differential equations is a divergence expression that
vanishes on solutions of this system. 
Thus, in the case of two independent variables~$t$ and~$x$ and one unknown function~$u$ 
the general form of conservation laws is
$D_tF(t,x,u_{(r)})+D_xG(t,x,u_{(r)})=0$ whenever $u$ is a solution of~$\mathcal L$. 
Here $D_t$ and $D_x$ are the operators of total differentiation
with respect to $t$ and $x$, respectively, and $u_{(r)}$ denotes the set of all the
derivatives of the functions $u$ with respect to $t$ and $x$ of order not
greater than~$r$, including $u$ as the derivative of the zero order.
The components $F$ and $G$ of the
conserved vector~$(F,G)$ are called the {\em density} and the {\em
flux} of the conservation law. Two conserved vectors $(F,G)$ and
$(F',G')$ are equivalent if there exist such functions~$\hat F$,
$\hat G$ and~$H$ of~$t$, $x$ and derivatives of~$u$ that $\hat F$
and $\hat G$ vanish for all solutions of~$\mathcal{L}$~and
$F'=F+\hat F+D_xH$, $G'=G+\hat G-D_tH$. 
A conserved vector is called trivial if it is equivalent to the zero conserved vector. 

It is found in~\cite{VJPS2007} that there are only two subclasses of equations of form~\eqref{eqRDfghTwoPower}
which admit nontrivial conserved vectors.
Thus, assuming the gauge $g=1$,
each equation from class~\eqref{eqRDfghTwoPower} with $m=n+1$, i.e., an equation of the form
$
f(x)u_t=(u^nu_x)_x+h(x)u^{n+1},
$
admits two linearly independent conservation laws 
with the following conserved vectors $(F^i,G^i)$ and the characteristics $\lambda^i,$ $i=1,2$:
\begin{gather*}
n\neq-1\colon\quad \Bigl(\, \varphi^ifu,\,
-\varphi^iu^nu_x+\varphi^i_x\tfrac{u^{n+1}}{n+1}\,\Bigr),\quad
\lambda^i=\varphi^i,\ i=1,2;\\
n=-1\colon\quad (\, xfu,\ -xu^{-1}u_x+\ln u\,),\quad\lambda^1= x;\qquad (\, fu,\
-u^{-1}u_x\,),\quad\lambda^2=1.
\end{gather*}
Here $\beta_1$ and $\beta_2$ are arbitrary constants.
The functions $\varphi^i=\varphi^i(x)$, $i=1,2$, form a
fundamental set of solutions of the second-order linear
ordinary differential equation
$
\varphi_{xx}+(n+1)h\varphi=0.
$

The other subclass of equations admitting nontrivial conserved vectors is singled out from~\eqref{eqRDfghTwoPower}
under the gauge~$g=1$ by the conditions $m=1$ and $h=\alpha f$, where $\alpha$ is an arbitrary constant, i.e., it consists of equations of the form
\begin{equation}\label{eq_cons_laws}
f(x)u_t=(u^nu_x)_x+\alpha f(x)u.
\end{equation}
The corresponding conserved vectors and characteristics are
\begin{gather}\label{CL}
n\neq-1 \colon\quad\begin{array}{l}\bigl(\,xe^{-\alpha t}fu,\
e^{-\alpha t}\bigl(-xu^nu_x+\tfrac{u^{n+1}}{n+1}\bigr)\,\bigr),\quad
\lambda^1=xe^{-\alpha t},\\ [.5ex]
(\,e^{-\alpha t}fu,\ - e^{-\alpha t}u^nu_x\,),\quad \lambda^2=e^{-\alpha t};
\end{array}\\\nonumber
n=-1\colon\quad\begin{array}{l} (\,xe^{-\alpha t}fu,\ e^{-\alpha t}(-xu^{-1} u_x+\ln
u)\,),\quad \lambda^1=xe^{-\alpha t}, \\[1ex] (\,e^{-\alpha t}fu,\ -e^{-\alpha
t}u^{-1}u_x\,),\quad \lambda^2=e^{-\alpha t}.
\end{array}
\end{gather}

In order to contract equations from class~\eqref{eqRDfghTwoPower} to equations from class~\eqref{eqRDfghExp},
we should vary the arbitrary element~$n$.
This is why only the case of general $n$ is appropriate for contractions.
There are three different ways in order to realize contractions of conservation laws.
Namely, we can contract characteristics of conservation laws,
their conserved vectors or conservation laws as divergent expressions themselves.

We illustrate these possibilities in detail using equations~\eqref{eq_cons_laws} with $n\neq-1$
and their conservation laws associated the same characteristic $\lambda^1=xe^{-\alpha t}$. 
The corresponding conserved vectors are presented in~\eqref{CL}.
At first we apply equivalence transformation~\eqref{contraction1} to
equation~\eqref{eq_cons_laws} and proceed to the limit $\delta\rightarrow+\infty$.
As a result, we obtain the class of  equations (tildes are omitted)
\begin{equation}\label{eq_cons_laws_exp}
f(x)u_t=(e^{nu}u_x)_x+\alpha f(x),
\end{equation}
i.e., equations from class~\eqref{eqRDfghExp2} with $m=0$ and $h=\alpha f$.

For the image $\tilde\lambda^1$ of the characteristic $\lambda^1=xe^{-\alpha t}$
with respect to transformation~\eqref{contraction1} we have that
$\tilde\lambda^1\,\rightarrow\, x$ if $\delta\rightarrow+\infty$.
Now we are able to construct the corresponding conservation law of~\eqref{eq_cons_laws_exp}
using the characteristic obtained as an integrating factor.
After the multiplication by $x$ the equation~\eqref{eq_cons_laws_exp}
can be written in divergent form as
\begin{equation}\label{cons_law_axample}
D_t\left(x f u-\alpha xf t\right)+D_x\left(-xe^{nu}u_x+
\frac1n e^{nu}\right)=0.
\end{equation}
Therefore, we construct conservation law~\eqref{cons_law_axample} of equation~\eqref{eq_cons_laws_exp}
via carrying out a limiting process of characteristics.
Another way is to directly deal with divergence expressions.
Thus the conservation law
\[
D_t\left(xe^{-\alpha t}fu\right)+
D_x\left( -e^{-\alpha t}xu^nu_x+e^{-\alpha t}\frac{u^{n+1}}{n+1}\right)=0
\]
of~\eqref{eq_cons_laws} with $n\neq-1$
is transformed by~\eqref{contraction1} to
\[
D_{\tilde t}\left(\tilde x e^{-\frac{\tilde\alpha}{\delta} \tilde t}f\left(\frac{\tilde u}{\delta}+1\right)\right)+
D_{\tilde x}\left(
-e^{-\frac{\tilde\alpha}{\delta}\tilde t}\tilde x\left(\frac{\tilde u}{\delta}+1\right)^{\tilde n\delta}\frac{{\tilde u}_{\tilde x}}{\delta}
+e^{-\frac{\tilde\alpha}{\delta}\tilde t}\frac1{\tilde n\delta+1}{\left(\frac{\tilde u}{\delta}+1\right)^{\tilde n\delta+1}}\right)=0.
\]
Multiplying the obtained expression by $\delta$ and adding the term $-\tilde xf\delta$ under $D_{\tilde t}$,
we proceed to the limit $\delta\rightarrow+\infty$.
(As the term $-\tilde xf\delta$ depends only on $\tilde x$, it is negligible in view of the differentiation with respect to~$\tilde t$.)
Omitting tildes, we exactly obtain the conservation law~\eqref{cons_law_axample}.

Conserved vectors can be contracted in the same way using the fact that we can add expression which
depends on $\tilde x$ only to the density component up to equivalence of conserved vectors.

Contracting  the conservation laws obtained in~\cite{VJPS2007} jointly with the corresponding equations,
we obtain the following assertion.

\begin{theorem}\label{TheoremOnClassificationCLsOfEqRDf1h}
A complete list of equations from class~\eqref{eqRDfghExp2} possessing
nontrivial conservation laws is exhausted by the following ones.
\setcounter{mcasenum}{0}

\vspace{1ex}

$\makebox[6mm][l]{\refstepcounter{mcasenum}\themcasenum\label{CaseCLsRDfghBinA}.}
m=n:$
\nopagebreak\\[1ex]\hspace*{3\parindent}%
$\bigl(\, \varphi^ifu,\,
-\varphi^ie^{nu}u_x+\frac{1}{n}\varphi^i_xe^{nu}\,\bigr),\
\varphi^i,\ i=1,2.$

$\makebox[6mm][l]{\refstepcounter{mcasenum}\themcasenum\label{CaseCLsRDfghBinA1}.}
m=0,\quad h=\alpha f:$
\nopagebreak\\[1ex]\hspace*{3\parindent}%
$\bigl(\,xf(u-\alpha t),\ - x
e^{nu}u_x+\frac{1}{n}e^{nu}\,\bigr),\ x;$ \
$(\,f(u-\alpha t),\ -e^{nu}u_x\,),\ 1.$

\vspace{1ex}

\noindent Here the functions $\varphi^i=\varphi^i(x)$, $i=1,2$, form a
fundamental set of solutions of the second-order linear
ordinary differential equation $\varphi_{xx}+nh\varphi=0$ and $\alpha$ is an arbitrary constant.
\end{theorem}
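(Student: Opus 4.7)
The plan is to establish the theorem in two halves: first, produce the listed conserved vectors by contracting, via the transformation~\eqref{contraction1}, the two families of nontrivial conservation laws already classified for class~\eqref{eqRDfghTwoPower} in~\cite{VJPS2007}; second, argue that no further nontrivial conservation laws exist by mimicking the determining-equation analysis used there for the power case.

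For the existence half, I would begin with the $m=n+1$ subfamily $fu_t=(u^nu_x)_x+hu^{n+1}$ and treat $n$ as a varying parameter. Applying~\eqref{contraction1}, rescaling $h$ as $\tilde h=\delta h$, multiplying the equation by~$\delta$, and using $(1+\tilde u/\delta)^{\tilde n\delta}\to e^{\tilde n\tilde u}$ and $\tilde m=(n+1)/\delta=\tilde n+1/\delta\to\tilde n$, one obtains the generic equation of Case~\ref{CaseCLsRDfghBinA} in the limit. The defining ODE $\varphi_{xx}+(n+1)h\varphi=0$, after the same rescaling, contracts to $\varphi_{xx}+\tilde n\tilde h\varphi=0$. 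For the conserved vector $(\varphi^i fu,\,-\varphi^i u^nu_x+\varphi^i_x u^{n+1}/(n+1))$, multiply by~$\delta$ and subtract $\delta\varphi^i f$ from the density (a function of $\tilde x$ alone, hence trivial); the density becomes $\varphi^i f\tilde u$ and the flux becomes $-\varphi^i e^{\tilde n\tilde u}\tilde u_{\tilde x}+\tilde n^{-1}\varphi^i_x e^{\tilde n\tilde u}$, exactly Case~\ref{CaseCLsRDfghBinA}. The same procedure applied to the $m=1$, $h=\alpha f$ subfamily (with $\tilde\alpha=\delta\alpha$), performed in detail in Section~\ref{SectionOnContractionOfCLs} for the characteristic~$\lambda^1=xe^{-\alpha t}$ and producing~\eqref{cons_law_axample}, works identically for~$\lambda^2=e^{-\alpha t}$ and yields both conserved vectors of Case~\ref{CaseCLsRDfghBinA1}. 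Direct verification of each contracted vector on the corresponding equation gives an independent check of the existence half.

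For the completeness half, one sets up the determining system $D_tF+D_xG=0\bmod\eqref{eqRDfghExp2}$ for a conserved vector $(F,G)$, uses the standard reduction that for quasilinear second-order evolution equations every conservation law is equivalent to one with a characteristic depending only on $(t,x,u)$, then substitutes and splits with respect to the jet variables $u_x,u_{xx},u_{tx}$. The resulting classifying system in the pair $(f,h)$ is the exponential analogue of the system analysed in~\cite{VJPS2007} and branches according to whether $m\in\{0,n\}$ or $m\not\in\{0,n\}$, exactly mirroring the furcate split in Section~\ref{SectionOnLieSymmetries}. In each branch one recovers precisely the conditions $m=n$ with arbitrary $h$ (giving the Wronskian pair of characteristics $\varphi^i(x)$) or $m=0$, $h=\alpha f$ (giving the characteristics $x$ and $1$ of~\eqref{eq_cons_laws_exp}); no other branch admits a nontrivial characteristic.

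The main obstacle is the completeness half: the contraction argument alone is purely existential and will miss any conservation law of~\eqref{eqRDfghExp2} that does not arise as a limit of a conservation law of~\eqref{eqRDfghTwoPower}. The safe route is therefore the direct determining-equation computation sketched above, which is routine but technical; the subtlety is that in the case $m=n$ the characteristic space is two-dimensional and parametrised by solutions of a linear ODE whose coefficient involves the free arbitrary element $h$, so care is needed to distinguish genuine two-parameter CL families from Lie-symmetry-enforced coincidences, and to verify that adding a function of~$(t,x)$ to the density never produces a new inequivalent conservation law beyond those listed.
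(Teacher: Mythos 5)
Your existence half coincides with the paper's own argument: the paper derives the theorem precisely by contracting, via~\eqref{contraction1}, the two families of nontrivial conservation laws classified for class~\eqref{eqRDfghTwoPower} in~\cite{VJPS2007} (it works out the $m=0$, $\lambda^1=xe^{-\alpha t}$ case in detail, arriving at~\eqref{cons_law_axample}, and states that the remaining cases are handled the same way), and your limit computations for the $m=n+1\to m=n$ family and for $\lambda^2$ are carried out correctly, including the rescaling $\tilde h=\delta h$ that turns $\varphi_{xx}+(n+1)h\varphi=0$ into $\varphi_{xx}+\tilde n\tilde h\varphi=0$. Where you genuinely depart from the paper is the completeness half: the paper simply asserts the theorem after the phrase ``contracting the conservation laws \dots jointly with the corresponding equations,'' leaving implicit that the classification for the exponential class is inherited from the completeness of the classification in~\cite{VJPS2007}; you correctly observe that a contraction is only an existence mechanism and cannot by itself exclude conservation laws of~\eqref{eqRDfghExp2} that are not limits of conserved vectors of~\eqref{eqRDfghTwoPower}, and you propose the direct determining-equation computation (characteristics of these second-order quasilinear evolution equations reduce to functions of $t$ and $x$, the classifying system branches on $m\in\{0,n\}$, and one recovers exactly $\lambda_{xx}+nh\lambda=0$ in the case $m=n$ and $\lambda\in\langle 1,x\rangle$ with $h=\alpha f$ in the case $m=0$). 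That computation is routine and does confirm the list, so your version is, if anything, more self-contained than the paper's; the only thing still to be done to make it a complete proof is to actually carry out the splitting you sketch rather than appeal to the analogy with~\cite{VJPS2007}.
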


Simultaneously with constraints on the arbitrary elements
we also present conserved vectors and characteristics of the basis
elements of the corresponding space of conservation laws.

\section{Conclusion}
In this paper we carry out the extended Lie symmetry analysis
of equations from class~\eqref{equation_RD_general}, where $A$ and $B$ are
exponential functions of $u$. In this sense the present paper continues the series of papers~\cite{VJPS2007,
Vaneeva&Popovych&Sophocleous2009,vaneeva_proc,Vaneeva&Popovych&Sophocleous2011}, where
equations~\eqref{equation_RD_general} with power nonlinearities and semilinear equations of this form
with exponential source were studied.

Lie symmetries, admissible transformations and conservation laws of equations from class~\eqref{eqRDfghExp} 
are exhaustively classified.
This difficult task is achieved due to the usage of generalized equivalence groups instead of usual ones, 
nontrivial gauging of arbitrary elements 
and the representation of class~\eqref{eqRDfghExp} as a union of normalized subclasses.
Moreover, in the course of group classification of equations with $m=n$
the associated conditional equivalence group is used,
which essentially simplified calculations and the final result.

We also study limit processes, called contractions,
between equations from classes~\eqref{eqRDfghTwoPower} and~\eqref{eqRDfghExp}
jointly with limit processes between the corresponding Lie invariance algebras.
This allows us to predict and then to check the results of group classification for class~\eqref{eqRDfghExp}.
Moreover, we construct conservation laws and exact solutions for equations from class~\eqref{eqRDfghExp}
using contractions and the related results obtained for equations of the form~\eqref{eqRDfghTwoPower} 
in~\cite{VJPS2007}.

The feature of this paper in comparison with the preceding ones is that the method of furcate split
applied to classify Lie symmetries classification as well as the usage of the direct method 
for finding equivalence and admissible transformations are described in detail.

Existing examples on group classification of variable--coefficient
diffusion--convection equations~\cite{Ivanova&Sophocleous2006,Ivanova&Popovych&Sophocleous2006I,Popovych&Ivanova2004NVCDCEs}
show that the subclasses with power or exponential nonlinearities $A$ and $B$ are usually the most complicated
to be classified.
In a forthcoming paper we intend to complete Lie symmetry analysis of equations from class~\eqref{equation_RD_general},
where the nonlinearities $A$ or $B$ are not power or exponential functions.

\appendix

\section{Gauging of parameters in classification list\\ with power nonlinearities}

We briefly describe results on the group classification of class~\eqref{eqRDfghTwoPower} 
obtained in~\cite{VJPS2007}, which are needed for the proper understanding of 
the final list of inequivalent Lie symmetry extensions collected in Table~2. 
In order to attain to the complete similarity to the presentation of the group classification 
of class~\eqref{eqRDfghExp} in the present paper, we carry out certain modifications and enhancements 
of results from~\cite{VJPS2007}. 
In particular, we thoroughly gauge parameters appearing in Cases~2 and~5 of Table~2 and 
partition these cases into simpler inequivalent subcases in the same way as this is done 
for class~\eqref{eqRDfghExp} in Section~\ref{SectionOnLieSymmetries}.

Via gauging of the arbitrary element~$g$ by transformations from the corresponding equivalence group,
the group classification of class~\eqref{eqRDfghTwoPower} is reduced to that of its subclass 
singled out by the constraint $g=1$. 

\begin{theorem}\label{TheoremOnEquivGroupg1}
The generalized equivalence group~$G^{\sim}_{g=1}$ of the subclass of equations
having the form~\eqref{eqRDfghTwoPower} with $g=1$ consists of the transformations
\[
\begin{array}{l}
\tilde t=\delta_1 t+\delta_2,\quad \tilde x=\dfrac{\delta_6
x+\delta_7}{\delta_4 x+\delta_5}, \quad
\tilde u=\delta_3|\delta_4 x+\delta_5|^{-\frac1{n+1}} u, \\[2ex]
\tilde f=\dfrac{\delta_1{\delta_3}^n}{\Delta^2}|\delta_4 x+\delta_5|^{\frac{3n+4}{n+1}} f,
\quad \tilde
h=\dfrac{{\delta_3}^{n-m+1}}{\Delta^2}|\delta_4 x+\delta_5|^{\frac{m+3n+3}{n+1}} h, \quad
\tilde n=n, \quad \tilde m=m,
\end{array}
\]
where $\delta_j,$ $j=1,\dots,7$, are arbitrary constants such that
$\delta_1\delta_3\not=0$, $\Delta=\delta_5\delta_6-\delta_4\delta_7\ne0$ and
the tuple $(\delta_4,\delta_5,\delta_6,\delta_7)$ is defined up to a nonzero multiplier, 
e.g., we can set $\Delta=\pm1$.
The arbitrary element~$n$ is assumed to be unequal to~$-1$.
For $n=-1$ transformations from the group~$G^{\sim}_{g=1}$ take the form
\[
\begin{array}{l}
\tilde t=\delta_1 t+\delta_2,\quad \tilde x=\delta_4 x+\delta_5,
\quad
\tilde u=\delta_3e^{\delta_6 x} u, \\[1ex]
\tilde f=\dfrac{\delta_1}{{\delta_4}^2\delta_3e^{\delta_6 x}}  f,
\quad \tilde h=\dfrac{1}{{\delta_4}^2{\delta_3}^m e^{m\delta_6 x}}
h, \quad \tilde n=n, \quad \tilde m=m,
\end{array}
\]
where  $\delta_j,$ $j=1,\dots,6$, are arbitrary constants, $\delta_1\delta_3\delta_4\not=0$.
\end{theorem}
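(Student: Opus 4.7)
The plan is to obtain this theorem by specializing the generalized extended equivalence group~$G^{\sim}$ of the whole class~\eqref{eqRDfghTwoPower} (with arbitrary~$g$) to the subgroup that preserves the gauge $g=1$. This mirrors exactly the way Theorem~\ref{equivfgh_exp1} is deduced from Theorem~\ref{equivfgh_exp} in the exponential case, and the hard work has effectively already been done in~\cite{VJPS2007}.

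First, I would invoke the description of~$G^{\sim}$ from~\cite{VJPS2007}. It has the familiar shape $\tilde t=\delta_1 t+\delta_2$, $\tilde x=\varphi(x)$, $\tilde u=\delta_3\chi(x)u$, where (for $n\neq-1$) the factor $\chi(x)$ is a specific power of $\Psi(x)=\delta_4\int dx/g(x)+\delta_5$ (and $\chi(x)=e^{\delta_6 x}$ in the degenerate case $n=-1$), and the arbitrary elements transform by explicit formulas involving $\varphi_x$, $\chi$ and $\Psi$. In particular, the formula for~$\tilde g$ expresses $\tilde g$ as a product of $g$, a power of $\varphi_x$, and a power of $\chi$ (equivalently, of $\Psi$).

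Second, I would impose $g=\tilde g=1$. This turns the expression for~$\tilde g$ into an algebraic relation between $\varphi_x$ and~$\Psi$; together with the observation that under $g=1$ the integral $\Psi$ reduces to the affine function $\delta_4 x+\delta_5$, this determines $\varphi_x$ explicitly as a rational power of $|\delta_4 x+\delta_5|$. Integrating produces $\varphi(x)=(\delta_6 x+\delta_7)/(\delta_4 x+\delta_5)$ up to an overall multiplicative scaling of the tuple $(\delta_4,\delta_5,\delta_6,\delta_7)$ (which may be absorbed into~$\delta_0$), and the non-degeneracy condition $\Delta=\delta_5\delta_6-\delta_4\delta_7\neq0$ is equivalent to $\varphi_x\neq0$.

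Third, substituting the resulting~$\varphi$ and $\Psi=\delta_4 x+\delta_5$ into the original formulas for $\tilde u$, $\tilde f$, $\tilde h$ and collecting powers of $|\delta_4 x+\delta_5|$ yields the stated transformation rules. The exponent bookkeeping is delicate but mechanical: for instance, the exponent $(3n+4)/(n+1)$ appearing in the formula for~$\tilde f$ arises from combining the Jacobian-type factor $\varphi_x\sim|\delta_4 x+\delta_5|^{-2}$ with the contribution $\Psi^{-2/(n+1)}$ coming from $\chi^{-n}$. The case $n=-1$ must be handled separately, by repeating the derivation from scratch: the exponent $1/(n+1)$ is singular, so $\chi$ takes the exponential form $e^{\delta_6 x}$, the compatibility relation forces $\varphi_x=\const$, and the transformation formulas for $\tilde f$ and $\tilde h$ then follow by direct substitution.

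The main obstacle will be the accurate exponent arithmetic in the bookkeeping step and, more conceptually, the degeneracy at $n=-1$, where the power structure of the equivalence transformations collapses and a qualitatively different exponential-in-$x$ family replaces it. Verifying that no additional equivalence transformations arise beyond those listed, i.e.\ that the reduction from the infinite-dimensional $\varphi$-parametrized group to the finite-dimensional Möbius (resp.\ affine) group is exhaustive, amounts to checking that the ODE for $\varphi_x$ implied by the gauge condition has general solution precisely given by the stated two-parameter family.
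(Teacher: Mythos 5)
Your proposal is correct and follows essentially the same route the paper takes: the theorem is obtained by restricting the full (generalized extended) equivalence group of class~\eqref{eqRDfghTwoPower} from~\cite{VJPS2007} to the transformations preserving the gauge $g=\tilde g=1$, exactly mirroring how Theorem~\ref{equivfgh_exp1} is deduced from Theorems~\ref{equivfgh_exp} and~\ref{equivfgh_exp_m=n} in the exponential case. Your exponent bookkeeping checks out (e.g.\ $\varphi_x=\Delta(\delta_4x+\delta_5)^{-2}$ combined with $\chi^n\sim|\delta_4x+\delta_5|^{-n/(n+1)}$ and the factor $\varphi_x^{-2}$ indeed yields $(3n+4)/(n+1)$), and your treatment of the degenerate case $n=-1$, where the constraint $\chi^{2(n+1)}\propto\varphi_x$ collapses to $\varphi_x=\const$ and $\chi$ becomes exponential, is the right one.
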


\begin{theorem}\label{CorollaryOnEquivGroupMN1}
The conditional equivalence group~$G^{\sim}_{g=1,\,m=n+1}$ of class~\eqref{eqRDfghTwoPower}
associated with the constraints $g=1$ and $m=n+1$ is formed by the transformations
\begin{gather*}
\tilde t=\delta_1 t+\delta_2,\quad \tilde x=\varphi(x),\quad\tilde u=\psi(x)u,  \\
\tilde f=\frac{\delta_0^2\delta_1}{|\psi|^{3n+4}}f,\quad \tilde
h=\delta_0^2\frac{h-|\psi|^{n+1}[|\psi|^{-(n+2)}\psi_x]_x}{|\psi|^{4n+4}},\quad
\tilde n=n.
\end{gather*}
Here~$\varphi$ and~$\psi$ are arbitrary smooth functions of~$x$
and $\delta_j$, $j=0,1,2$, are arbitrary constants
satisfying the conditions $\delta_0\varphi_x=\psi^{2n+2}$ and $\delta_0\delta_1\psi\ne0$.
\end{theorem}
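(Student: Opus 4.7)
The plan is to proceed in direct analogy with the derivation of Theorems~\ref{equivfgh_exp} and~\ref{equivfgh_exp_m=n} in Section~\ref{SectionOnEquivGroups}, but now applied to class~\eqref{eqRDfghTwoPower} with the two simultaneous constraints $g=\tilde g=1$ and $\tilde m=m=n+1$. For class~\eqref{eqRDfghTwoPower}, the preliminary study via the direct method (performed in~\cite{VJPS2007}) yields the same structural conclusions as in Section~\ref{SectionOnPreliminaryStudyOfAdmTrans}: namely, $T_u=T_x=X_u=0$, so the transformation has the form $\tilde t=T(t)$, $\tilde x=\varphi(x)$, together with $\tilde u=\psi(x)u$ (here the multiplicative structure replaces the additive $\delta_3 u+\psi$ of the exponential case, because of the power nonlinearity $u^n$). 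The classifying conditions that remain are analogues of equations~\eqref{det_eq_adm_tr1} and~\eqref{det_eq_adm_tr2}: one from the coefficient of $u_{xx}$, one from the coefficient of $u_x$, and one from the source term.

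First I would impose the gauge $\tilde g=g=1$. The coefficient-of-$u_{xx}$ equation, which in general links $\tilde g$, $g$, $\varphi_x$, $\psi$ and $T_t$, then collapses to a single relation tying $\psi$ and $\varphi_x$ together (up to a multiplicative constant~$\delta_0$ coming from the overall scaling freedom): this gives precisely $\delta_0\varphi_x=\psi^{2n+2}$. Combined with the $u_x$-coefficient equation, this also fixes $\tilde f$ in terms of $f$, $\psi$ and $T_t$, and forces $T_{tt}=0$ after varying arbitrary elements, so $T=\delta_1 t+\delta_2$. The resulting $\tilde f$ formula then simplifies, using $\delta_0\varphi_x=\psi^{2n+2}$, to $\tilde f=\delta_0^2\delta_1|\psi|^{-(3n+4)}f$.

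Next I would impose $\tilde m=m=n+1$. In this case the exponents in the source term become $u^{n+1}$ and $U^{n+1}=\psi^{n+1}u^{n+1}$, which is proportional to $u\cdot u^n$ and therefore does \emph{not} split independently from the diffusion term under variation of $u$, exactly as happens in the exponential case $m=n$. This missing splitting is what allows $\psi$ to remain an arbitrary function of $x$ rather than a constant, and it also generates the extra inhomogeneous term in the transformation law for $\tilde h$. Carrying out the residual equation coming from the source-term coefficient, I would solve explicitly for $\tilde h$ in terms of $h$ and $\psi$; the derivative terms combine, after use of the gauge constraint $\delta_0\varphi_x=\psi^{2n+2}$, into the expression $|\psi|^{n+1}[|\psi|^{-(n+2)}\psi_x]_x$ appearing in the statement.

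The main obstacle I anticipate is precisely the last algebraic simplification: substituting $\delta_0\varphi_x=\psi^{2n+2}$ everywhere and rearranging the derivative terms so that the inhomogeneity in $\tilde h$ packs neatly into the form $h-|\psi|^{n+1}[|\psi|^{-(n+2)}\psi_x]_x$, divided by the correct power of $|\psi|$. This is an identity of the same flavour as the $(g\Psi_x)_x$ term in Theorem~\ref{equivfgh_exp_m=n}, but the bookkeeping with fractional powers of $\psi$ and with the absolute values (coming from the possibility of sign changes in $\psi$) is more delicate. Once this identity is in hand, the group property and the counting of parameters ($\delta_0,\delta_1,\delta_2$, plus the two arbitrary functions $\varphi$ and $\psi$ linked by the single relation $\delta_0\varphi_x=\psi^{2n+2}$, effectively giving one arbitrary function and three constants) follow immediately, matching the statement of the theorem.
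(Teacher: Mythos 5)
Your proposal is correct and follows essentially the same route the paper uses for its equivalence-group computations (the paper itself only imports this theorem from~\cite{VJPS2007}, but the method is exactly the direct-method derivation of Sections~\ref{SectionOnPreliminaryStudyOfAdmTrans}--\ref{SectionOnEquivGroups} adapted to the power case): the coefficient of $u^nu_x$ with $g=\tilde g=1$ forces $2(n+1)\psi_x/\psi=\varphi_{xx}/\varphi_x$, i.e.\ $\delta_0\varphi_x=\psi^{2n+2}$; the coefficient of $(u^nu_x)_x$ gives $\tilde f=T_tf\psi^n/\varphi_x^2=\delta_0^2\delta_1f|\psi|^{-(3n+4)}$; and the $u^{n+1}$ terms, which indeed do not split from the diffusion part when $m=n+1$, yield $\tilde h=h/\varphi_x^2-(\psi^{n+1}\varphi_x)^{-1}(\psi^n\psi_x/\varphi_x)_x$, which the gauge constraint converts precisely into the stated expression. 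All the steps you flag as potentially delicate do go through as you anticipate.
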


In the course of group classification in the general case $m\ne1,n+1$ we derive that 
the maximal Lie invariance algebra of an equation from class~\eqref{eqRDfghTwoPower} with $g=1$ 
is a proper extension of the kernel algebra $\langle\p_t\rangle$ only if 
the corresponding value of the arbitrary elements~$f$ and~$h$ satisfy a system of the form 
\begin{gather}\label{eqRDfghTwoPowerABCDP}
\begin{split}
&((n+1)ax^2+bx+c)\dfrac{f_x}f=-(3n+4)ax+d,\\
&((n+1)ax^2+bx+c)\dfrac{h_x}h=-(3(n+1)+m)ax+(1+n-m)p-2b,
\end{split}
\end{gather}
where $a$, $b$, $c$, $d$ and $p$ are constants which are not simultaneously equal to zero.  

\begin{lemma}\label{LemmaOntransOfCoeffsOfClassifyingSystem2}
If $n\ne-1$, up to $G^{\sim}_{g=1}$-equivalence
the parameter tuple~$(a,b,c,d,p)$ can be assumed to belong to the set
\[
\{(0,1,0,\bar d,\bar p),\ (0,0,1,1,\check p),\ (0,0,1,0,1),\ (0,0,1,0,0),\ (1/(n+1),0,1,\hat d,\hat p)\},
\]
where $\hat d\geqslant0$ and, if $\hat d=0$, $\hat p\geqslant0$; $\check p$ is an arbitrary constant; 
\begin{equation}\label{inequalities_dp}
\bar d\geqslant-\frac{3n+4}{2(n+1)}\quad\mbox{and,\quad if\quad}
\bar d=-\frac{3n+4}{2(n+1)},\quad \bar p\geqslant\frac{1}{2(n+1)}.
\end{equation}
If $n=-1$, $G^{\sim}_{g=1}$-inequivalent values of the parameter tuple~$(a,b,c,d,p)$ 
are exhausted by elements of the set
\[
\{(0,1,0,d',p'),\ (0,0,1,0,1),\ (0,0,1,0,0),\ (\varepsilon'',0,1,0,p'')\},
\]
where
$d'$ and $p'$ are arbitrary constants, $\varepsilon''=\pm1$ and $p''\geqslant0$.
\end{lemma}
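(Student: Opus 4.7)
The plan is to mirror the proof of Lemma~\ref{LemmaOntransOfCoeffsOfClassifyingSystem} from the exponential case. First, composing a transformation from Theorem~\ref{TheoremOnEquivGroupg1} with the scaling freedom of multiplying system~\eqref{eqRDfghTwoPowerABCDP} by a nonzero constant~$\nu$, I would derive explicit formulas for the induced action of $G^{\sim}_{g=1}$ on the classifying tuple $(a,b,c,d,p)$. The key observation is that the substitution $n\mapsto n+1$ in the coefficient of $x^2$ turns $na$ from the exponential formulas into $(n+1)a$, so for $n\ne-1$ the form $P=(n+1)ax^2+bx+c$ transforms as a binary quadratic form under the natural action of the projective parameters $(\delta_4,\delta_5,\delta_6,\delta_7)$ and its discriminant $D=b^2-4(n+1)ac$ has a well-defined sign modulo positive scalars.

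For $n\ne-1$, I would classify the triple $(a,b,c)$ according to the sign of~$D$: if $D>0$, two linearly independent choices of $(\delta_4,\delta_5)$ and $(\delta_6,\delta_7)$ each annihilate the quadratic factor, reducing to $(0,1,0)$; if $D=0$, the repeated root is sent to infinity, yielding $(0,0,1)$; if $D<0$, the symmetric positive-definite bilinear form determined by $(a,b,c)$ (after fixing $a>0$) can be orthonormalized, producing $(1/(n+1),0,1)$. Then, for each canonical $(a,b,c)$, I would identify the stabilizer of this triple inside the parameter space and use the residual freedom to gauge $(d,p)$. In the $D>0$ case the stabilizer contains an involution that swaps the two roots, acting on $(d,p)$ by $\tilde d=-d-(3n+4)/(n+1)$ together with a matching transformation of $p$; this forces the symmetric constraints~\eqref{inequalities_dp}. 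In the $D=0$ case a residual simultaneous scaling of $d$ and $p$ lets us normalize $d$ to~$1$ if $d\ne0$ or, when $d=0$, scale~$p$ to~$1$ or to~$0$, which produces the three subcases $(0,0,1,1,\check p)$, $(0,0,1,0,1)$ and $(0,0,1,0,0)$. In the $D<0$ case only an orthogonal sign involution remains, yielding $\hat d\geqslant0$ and, when $\hat d=0$, $\hat p\geqslant0$.

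The case $n=-1$ must be treated separately because $P$ collapses to the degree-at-most-one polynomial $bx+c$ while the parameter~$a$ persists as an independent coefficient on the right-hand sides of~\eqref{eqRDfghTwoPowerABCDP}, and the equivalence group simultaneously reduces to the affine form from Theorem~\ref{TheoremOnEquivGroupg1}. I would then split on the vanishing pattern of $(b,c)$: if $b\ne0$, I translate and rescale $x$ to reach $(0,1,0)$, after which the stabilizer of this triple under affine maps in~$x$ acts trivially on $(d,p)$, giving $(0,1,0,d',p')$ with unconstrained $d',p'$; if $b=0$, $c\ne0$, I rescale $x$ to reach $(a,0,1)$, and consistency of the resulting system forces $d=0$, after which the remaining scaling of $u$ either normalizes $p$ to~$1$ (or $0$ in the trivial subcase) when $a=0$, or rescales $a$ to $\varepsilon''=\pm1$ when $a\ne0$, with a surviving sign involution giving $p''\geqslant0$.

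The main obstacle will be bookkeeping: in each case one must verify that the listed canonical forms are simultaneously exhaustive (every $G^{\sim}_{g=1}$-orbit meets the list) and inequivalent (no two listed tuples lie in the same orbit), which requires carefully chasing the residual involutions and scalings through the explicit transformation formulas. A secondary subtlety is keeping track of which residual scalings act on $d$ and $p$ with the same multiplier versus independent multipliers, since this is what distinguishes the $(0,0,1,1,\check p)$ subcase (where $\check p$ remains free) from the $(0,0,1,0,1)$ subcase (where $p$ has been normalized).
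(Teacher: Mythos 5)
Your overall strategy is exactly the paper's: extend the $G^{\sim}_{g=1}$-action to the tuple $(a,b,c,d,p)$ with the auxiliary multiplier $\nu$, reduce $(a,b,c)$ to one of three canonical forms according to the sign of the discriminant of $(n+1)ax^2+bx+c$ when $n\ne-1$, then gauge $(d,p)$ by the residual stabilizer; and treat $n=-1$ separately with the degenerate (affine-in-$x$, exponential-in-$u$) form of the group. The identification of the residual involutions and scalings in each of the three branches for $n\ne-1$, and the resulting constraints on $(\bar d,\bar p)$, $\check p$, $(\hat d,\hat p)$, all match the paper.

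There is one concrete flaw in the $n=-1$ branch. You claim that after normalizing $(a,b,c)$ to $(a,0,1)$ ``consistency of the resulting system forces $d=0$.'' This is false: with $(a,b,c)=(0,0,1)$ the first equation of~\eqref{eqRDfghTwoPowerABCDP} reads $f_x/f=d$, which is perfectly consistent for any $d$ (e.g.\ $f=e^{dx}$). The correct mechanism is a gauge, not a consistency constraint: for $n=-1$ the group of Theorem~\ref{TheoremOnEquivGroupg1} contains the transformation $\tilde u=\delta_3e^{\delta_6x}u$, whose parameter $\delta_6$ enters the induced action~\eqref{eq_tr_abcdp_pow_n=-1} as $\tilde d=\nu(d-\delta_6)$, $\tilde p=\nu(p+\delta_6)$ (for $(a,b,c)=(0,0,1)$), so $\delta_6$ shifts $d$ to zero while only the sum $d+p$ is an invariant up to scaling. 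The same exponential factor is also what eliminates $a$ in the $b\ne0$ subcase ($\tilde a=\nu(a+\delta_6b)/\delta_4$); your phrase ``translate and rescale $x$'' is not enough there, since no point transformation of $x$ alone touches $a$ when $n=-1$. Once these two steps are attributed to the $e^{\delta_6x}$ gauge freedom rather than to consistency or to affine maps of $x$, the rest of your $n=-1$ analysis (including $p''\geqslant0$ from the surviving sign involution) goes through as in the paper.
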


\begin{proof}
Combined with the multiplication by a nonzero constant,
each transformation from the equivalence group~$G^{\sim}_{g=1}$ is extended to the coefficient tuple
of the system~\eqref{eqRDfghTwoPowerABCDP}. 
The extension takes the form 
\begin{gather*}
\begin{array}{l}
(n+1)\tilde a=\nu(\delta_5^2(n+1)a-\delta_4\delta_5b+\delta_4^2c),\\[1ex]
\tilde b=\nu(-2\delta_5\delta_7(n+1)a+(\delta_4\delta_7+\delta_5\delta_6)b-2\delta_4\delta_6c),
\\[1ex]
\tilde c=\nu(\delta_7^2(n+1)a-\delta_6\delta_7b+\delta_6^2c),
\quad\tilde d=\nu\Delta d+\dfrac{3n+4}{n+1}\nu(\delta_5\delta_7(n+1)a-\delta_4\delta_7b+\delta_4\delta_6c),\\[1ex]
\tilde p=\nu\Delta p-\dfrac1{n+1}\nu(\delta_5\delta_7(n+1)a-\delta_4\delta_7b+\delta_4\delta_6c)
\end{array}
\end{gather*}
if $n\neq-1$ and 
\begin{gather}\label{eq_tr_abcdp_pow_n=-1}
\begin{array}{l}
\tilde a=\dfrac {\nu}{\delta_4}{(a+\delta_6b)},\quad \tilde b
=\nu b,\quad \tilde c=\nu({\delta_4c-\delta_
5b}),\\[1ex]  \tilde d=\nu d+\dfrac{\nu}{\delta_4}(\delta_5a+\delta_5\delta_6b-\delta_6\delta_4c),
\quad
\tilde p=\nu p-\dfrac{\nu}{\delta_4}(\delta_5a+\delta_5\delta_6b-\delta_6\delta_4c)
\end{array}
\end{gather}
if $n=-1$.
Here $\Delta=\delta_5\delta_6-\delta_4\delta_7\ne0$ and $\nu$ is an arbitrary nonzero constant.

If $n\neq-1$, the proof is similar to the proof of Lemma~\ref{LemmaOntransOfCoeffsOfClassifyingSystem}. 
In this case there are only three $\hat G^\sim_1$-inequivalent values of the triple $(a,b,c)$
depending on the sign of $D=b^2-4nac$,
\begin{gather*}
(0,1,0)\quad\mbox{if}\quad D>0, \quad
(0,0,1)\quad\mbox{if}\quad  D=0,\quad
(1/(n+1),0,1)\quad\mbox{if}\quad  D<0.
\end{gather*}
Indeed, if $D>0$ then there exist two linearly independent pairs
$(\delta_4,\delta_5)$ and $(\delta_6,\delta_7)$
such that $\delta_5^2(n+1)a-\delta_4\delta_5b+\delta_4^2c=0$ and $\delta_7^2(n+1)a-\delta_6\delta_7b+\delta_6^2c=0$.
For these values of $\delta$'s we have $\tilde a=\tilde c=0$.
In the case $D=0$ we choose values of $\delta_4$, $\delta_5$, $\delta_6$ and $\delta_7$
for which $\delta_5^2(n+1)a-\delta_4\delta_5b+\delta_4^2c=0$ and
the pair $(\delta_6,\delta_7)$ is not proportional to the pair $(\delta_4,\delta_5)$.
Then we obtain that $\tilde a=0$ and
$\tilde b=\nu\delta_7(\delta_4b-2\delta_5(n+1)a)+\nu\delta_6(\delta_5b-2\delta_4c)=0$.
The residual coefficient ($\tilde b$ if $D>0$ and $\tilde c$ if $D=0$) is necessarily nonzero
and hence it can be scaled to~$1$ using the multiplication by the appropriate value of~$\nu$.
If $D<0$, we have $ac\ne0$ and can set $a>0$.
As the matrix \[\left(\begin{array}{cc}(n+1)a&-b/2\\-b/2&c\end{array}\right)\]
is symmetric and positive, 
the corresponding bilinear form is a well-defined scalar product.
Choosing $\nu=1$ and pairs $(\delta_4,\delta_5)$ and $(\delta_6,\delta_7)$
which are orthonormal with respect to this product, 
we obtain $(n+1)\tilde a=\tilde c=1$ and $\tilde b=0$.

Now for each of the above inequivalent form
for both the tuples $(a,b,c)$ and $(\tilde a,\tilde b,\tilde c)$
we look for possible gauges of the coefficients~$d$ and~$p$.

Thus, from $(a,b,c)=(\tilde a,\tilde b,\tilde c)=(0,1,0)$ we derive
$\delta_4\delta_5=\delta_6\delta_7=0$ and $\delta_4\delta_7+\delta_5\delta_6=\nu^{-1}$.
This system in~$\delta$'s has two solutions, 
$\delta_4=\delta_7=0$ with $\Delta=\delta_5\delta_6=\nu^{-1}$ and 
$\delta_5=\delta_6=0$ with $\Delta=-\delta_4\delta_7=-\nu^{-1}$.
The first solution leads to the identical transformation of the coefficients~$d$ and~$p$.
For the second solution the transformation takes the form $\tilde d=-d-(3n+4)/(n+1)$, 
$\tilde p=-p+1/(n+1)$.
This is why up to $G^{\sim}_{g=1}$-equivalence we can assume
that $\tilde d$ and  $\tilde p$ satisfy~\eqref{inequalities_dp}.

Setting $(a,b,c)=(\tilde a,\tilde b,\tilde c)=(0,0,1)$ results in $\delta_4=0$, $\nu\delta_6^2=1$
and $\Delta=\delta_5\delta_6\ne0$.
The transformation of $d$ and $p$ is reduced to simultaneous scaling with the same multipliers,
$\tilde d=\nu\Delta d$ and $\tilde p=\nu\Delta p$. 
This allows us either to set $\tilde d=1$ if $d\ne0$ or 
to scale~$\tilde p$ if $d=0$ and hence $\tilde d=0$. 
As a result, we obtain the tuples $(0,0,1,1,\check p)$ if $d\neq0$,  
$(0,0,1,0,1)$ if $d=0$ and $p\neq0$, and $(0,0,1,0,0)$ in the case $d=p=0.$ 
The last tuple corresponds to equations of the form~\eqref{eqRDfghTwoPower} with constant coefficients 
(Case 3 of Table 2).

The equality $((n+1)a,b,c)=((n+1)\tilde a,\tilde b,\tilde c)=(1,0,1)$ implies 
$\delta_4^2+\delta_5^2=\delta_6^2+\delta_7^2=\nu^{-1}$ and $\delta_4\delta_6+\delta_5\delta_7=0$.
Hence $\delta_6=\tilde\varepsilon\delta_5$ and $\delta_7=-\tilde\varepsilon\delta_4$, 
where $\tilde\varepsilon=\pm 1$.
The transformation of the coefficients~$d$ and~$p$ is reduced to the multiplication by~$\tilde\varepsilon$,
$\tilde d=\tilde\varepsilon d$ and $\tilde p=\tilde\varepsilon p$.
This is why we can only set
$\hat d\geqslant0$ and, if $\hat d=0$, $\hat p\geqslant0$.

Consider the case $n=-1$. 
As the tuple $(a,c,b,d,p)$ is nonzero, system~\eqref{eq_tr_abcdp_pow_n=-1} implies that $(b,c)\ne(0,0)$. 
If the tuple $(a,b,c)$ satisfies the condition $b\ne0$ (resp. $b=a=0$, resp. $b=0$ and $a\ne0$) 
then it is $G^{\sim}_{g=1}$-equivalent to the tuple $(0,1,0)$ (resp. $(0,0,1)$, resp. $(\varepsilon'',0,1)$). 
Similarly to the case $n\neq-1$, now we look for possibilities of gauging the parameters~$d$ and~$p$
after fixing one of the above inequivalent forms for both the tuples $(a,b,c)$ and $(\tilde a,\tilde b,\tilde c)$. 
 
The equality $(a,b,c)=(\tilde a,\tilde b,\tilde c)=(0,1,0)$ implies $\nu=1$ and $\delta_5=\delta_6=0.$
Therefore, the parameters $d$ and $p$ are identically transformed.

Setting $(a,b,c)=(\tilde a,\tilde b,\tilde c)=(0,0,1)$ in~\eqref{eq_tr_abcdp_pow_n=-1} results in
$\delta_4=\nu^{-1}$, $\tilde d=\nu(d-\delta_6)$ and $\tilde p=\nu(p+\delta_6)$. 
Therefore, we can set $(\tilde d,\tilde p)$ to be equal $(0,0)$ or $(0,1)$. 

It follows from~\eqref{eq_tr_abcdp_pow_n=-1} with $(a,b,c)=(\tilde a,\tilde b,\tilde c)=(\varepsilon'',0,1)$, 
where $\varepsilon''=\pm1$,  
that $\nu=\delta_4=\pm1$, $\tilde d=\nu d+\varepsilon''\delta_5-\nu\delta_6$ 
and $\tilde p=\nu p-\varepsilon''\delta_5+\nu\delta_6$.
This allows for setting one of the parameters $d$ and $p$ to zero. 
After fixing the zero value of~$d$, we can additionally alternate the sign of $\tilde p$.
Hence we assume $(\tilde d,\tilde p)=(0,p'')$, where $p''\geqslant0$.
\end{proof}

Lemma~\ref{LemmaOntransOfCoeffsOfClassifyingSystem2} implies that
up to $G^\sim_{g=1}$-equivalence Case 2 of Table~2 is partitioned into three inequivalent subcases,
\begin{enumerate}\itemsep=0ex
\item 
$(f,h)=(|x|^d,\varepsilon|x|^q), \ q=(1+n-m)p-2\colon\quad$ $\langle\partial_t,\,(d+2-pn)t\partial_t+
x\partial_x+pu\partial_u\rangle$;
\item 
$(f,h)=(e^{dx}, \varepsilon e^{qx}), \ q=(1+n-m)p\colon\quad$
$\left\langle\partial_t,\,\left(d-pn\right)t\partial_t+\partial_x+pu\partial_u\right\rangle$;
\item 
$n\neq -1$, $(f,h)=\left((x^2+1)^{-\frac{3n+4}{2(n+1)}}e^{d\arctan x},\varepsilon (x^2+1)^{-\frac{3n+3+m}{2(n+1)}}e^{q\arctan x}\right),
 \ q=(1+n-m)p\colon$\\
$\langle\partial_t,\,(d-pn)t\partial_t+(x^2+1)\partial_x+(x/(n+1)+p)u\partial_u\rangle$;\\[1ex]
$n=-1$, $(f,h)=\left(e^{\frac12\varepsilon''x^2},\varepsilon e^{\frac{m}2\varepsilon''x^2+qx}\right), \ q=-mp\colon$
$\langle\partial_t,\,qt\partial_t-m\partial_x+(q+m\varepsilon''x)u\partial_u\rangle$. 
\end{enumerate}
Here $\varepsilon,\varepsilon''=\pm1$, $(d,q)\neq(0,0)$ for the first and second subcases 
and additionally $(d,q)\neq\left(-(3n+4)/(n+1),-3-m/(n+1)\right)$ in the first subcase with $n\ne-1$ 
and $(d,q)\neq(0,0)$ in the third subcase with $n=-4/3$; 
otherwise we have cases of Lie symmetry extensions of greater dimensions. 
It follows from Lemma~\ref{LemmaOntransOfCoeffsOfClassifyingSystem2} that
up to $G^\sim_{g=1}$-equivalence we can set certain constraints for the parameters~$d$ and~$q$.
(It is convenient to use~$q$ instead of~$p$ as a parameter.) 
These constraints are different for the cases $n\ne-1$ and $n=-1$.

If $n\neq-1$, for the first subcase an exhaustive gauge implied by $G^\sim_{g=1}$-equivalence
consists of the inequalities 
\[
d\geqslant-\frac{3n+4}{2(n+1)}\quad\mbox{and,\quad if}\quad d=-\frac{3n+4}{2(n+1)},\quad 
q\geqslant-\frac{3n+3+m}{2(n+1)}.
\]
They can be set using the equivalence transformation
\[
\tilde t=t,\quad\tilde x=\frac1x,\quad \tilde u=|x|^{-\frac1{n+1}}u,
\]
whose extension to the parameters~$d$ and $q$ is given by 
$
\tilde d=-d-(3n+4)/(n+1)
$ 
and
$
\tilde q=-q-3-m/(n+1).
$
In the second subcase the parameters~$d$ and $q$ can be gauged using a scaling of $x$.
More precisely, $d=1\bmod G^\sim_{g=1}$ if $d\neq0$ and $q=1\bmod G^\sim_{g=1}$ if $d=0$.
In the last subcase we can just simultaneously alternate the signs of~$d$ and~$q$.
Hence the exhaustive gauge is presented by $d\geqslant0$ and, if $d=0$, $q\geqslant0$.

Consider $n=-1$. Then in the first subcase 
the parameters $d$ and $q$ are not changed by transformations from $G^\sim_{g=1}$. 
In the second subcase the parameters~$d$ and $q$ can be gauged 0 and~1, respectively. 
In the last subcase we can alternate the sign of~$q$ and assume $q\geqslant0\bmod G^\sim_{g=1}$.

A similar partition can be also carried out for Case~5 of Table~2. 
Namely, we have the following inequivalent subcases:

\begin{enumerate}\itemsep=0ex
\item 
$f=|x|^d\colon\quad$ 
$\langle\partial_t,\,e^{-\varepsilon nt}(\partial_t+\varepsilon u\partial_u),\,
nx\partial_x+(d+2)u\partial_u\rangle$;
\item 
$f=e^x\colon\quad$
$\langle\partial_t,\,e^{-\varepsilon nt}(\partial_t+\varepsilon u\partial_u),\,
n\partial_x+u\partial_u\rangle$;
\item 
$n\neq -1$, 
$f=(x^2+1)^{-\frac{3n+4}{2(n+1)}}e^{d\arctan x}\colon$\\
$\langle\partial_t,\,e^{-\varepsilon nt}(\partial_t+\varepsilon u\partial_u),\,
n(x^2+1)\partial_x+(nx/(n+1)+d)u\partial_u\rangle$;

$n=-1$, $f=e^{\frac12\varepsilon''x^2}\colon\quad$
$\langle\partial_t,\,e^{\varepsilon'' t}(\partial_t+\varepsilon'' u\partial_u),\,
\partial_x+(\varepsilon'' x-d)u\partial_u\rangle$.
\end{enumerate}
In the first subcase $d\ne0$ and, if $n\ne-1$, $d\ne-(3n+4)/(n+1)$.
In the third subcase $d\ne0$ if $n=-4/3$ and $\varepsilon''=\pm1$ if $n=-1$. 
The gauges for the parameter~$d$ modulo $G^\sim_{g=1}$-equivalence  
coincide with the gauges of~$d$ in the respective subcases of Case~2 from~Table~2.

\subsection*{Acknowledgements}

OOV and ROP express their gratitude to the hospitality
shown by University of Cyprus during their visits to the University.
The research of ROP was supported by the Austrian Science Fund (FWF), project P20632.


\begin{thebibliography}{99}\itemsep=0.2ex
\footnotesize

\bibitem{Bluman&Reid&Kumei1988}
G.W. Bluman, G.J. Reid and S. Kumei, 
New classes of symmetries for partial differential equations,
{\it J.~Math. Phys.} {\bf 29} (1988), 806--811.

\bibitem{Borovskikh2004}
A.V.~Borovskikh,
Group classification of the eikonal equation for a 3-dimensional inhomogeneous medium,
{\it Mat. Sb.} {\bf 195}, no. 4 (2004), 23--64 (Russian); translation in {\it Sb. Math.} {\bf 195} (2004), 479--520.

\bibitem{Clarkson&Mansfield1993}
P. Clarkson and E. Mansfield, 
Symmetry reductions and exact solutions of a class of nonlinear heat equations,
{\it Physica D} {\bf 70} (1993), 250--288.

\bibitem{Dorodnitsyn1979}
V.A. Dorodnitsyn, {\it Group properties and invariant solutions of a
nonlinear heat equation with a source or a sink}, Preprint N~57,
Moscow, Keldysh Institute of Applied Mathematics of Academy of
Sciences USSR, 1979.

\bibitem{Dorodnitsyn1982}
V.A. Dorodnitsyn, 
On invariant solutions of non-linear heat equation with a source, 
{\it Zhurn. Vych. Matemat. Matemat. Fiziki} {\bf 22} (1982), 1393--1400 (in Russian).

\bibitem{Frey&Glockle&Nonnenmacher1993}
H.-D. Frey, W.G. Gl\"{o}ckle and T.F. Nonnenmacher, 
Symmetries and integrability of generalized diffusion reaction equations,
{\it J.~Phys. A: Math. Gen.} {\bf 25} (1993), 665--679.

\bibitem{fushchich1991}
W.I. Fushchich, Conditional symmetry of equations of nonlinear mathematical physics,
{\it Ukr. Mat. Zh.} {\bf 43} (1991), 1456--1470.

\bibitem{Ivanova&Sophocleous2006}
N.M. Ivanova, C. Sophocleous,
On the group classification of variable coefficient nonlinear diffusion--convection equations,
{\it J. Comp. Appl. Math.} {\bf 197} (2006), 322--344.

\bibitem{mogran}
N.M. Ivanova, R.O. Popovych and C. Sophocleous,
Conservation laws of variable coefficient diffusion--convection equations,
{\it Proc. of Tenth International Conference in Modern Group Analysis (October 24-31, 2004, Larnaca, Cyprus)} (2005), 107--113;
arXiv:math-ph/0505015.

\bibitem{Ivanova&Popovych&Sophocleous2006I}
N.M. Ivanova, R.O. Popovych and C. Sophocleous,
Group analysis of variable coefficient diffusion--convection equations. I. Enhanced group classification,
{\it Lobachevskii J. Math.} {\bf 31} (2010), 100--122, arXiv:0710.2731.


\bibitem{Ivanova&Popovych&Sophocleous2006II}
N.M. Ivanova, R.O. Popovych and C. Sophocleous,
Group analysis of variable coefficient diffusion--convection equations. II. Contractions and exact solutions,
arXiv:0710.3049, 19 pp.

\bibitem{Kingston&Sophocleous1998}
J.G. Kingston and C. Sophocleous, 
On form-preserving point transformations of partial differential equations, 
{\it J.~Phys. A: Math. Gen.} {\bf 31} (1998), 1597--1619.

\bibitem{Magadeev1993}
B.A. Magadeev,
Group classification of nonlinear evolution equations,
{\it Algebra i Analiz} {\bf 5} (1993), 141--156 (in Russian);
translation in {\it St. Petersburg Math. J.} {\bf 5} (1994), 345--359.

\bibitem{Meleshko1994}
S.V. Meleshko, Group classification of equations of two-dimensional
gas motions, {\it Prikl. Mat. Mekh.} {\bf 58} (1994), 56--62 (in
Russian); translation in {\it J.~Appl. Math. Mech.} {\bf 58} (1994),
629--635.

\bibitem{Nikitin&Popovych2001}
A.G. Nikitin and R.O. Popovych, 
Group classification of nonlinear Schr\"odinger equations,
{\it Ukr. Mat. Zh.} {\bf 53} (2001), 1053--1060 (Ukrainian); translation in {\it Ukr.~Math.~J.} {\bf 53} (2001), 1255--1265; arXiv:math-ph/0301009.

\bibitem{Olver1986}
P. Olver, {\it Applications of Lie groups to differential equations},
New-York, Springer-Verlag, 1986.

\bibitem{Ovsiannikov1982}
L.V. Ovsiannikov, {\it Group analysis of differential equations}, 
New York, Academic Press, 1982.

\bibitem{Popovych&Eshraghi2005}
R.O. Popovych and H. Eshraghi, 
Admissible point transformations of nonlinear Schrodinger equations, 
{\it Proc. of 10th International Conference in Modern Group Analysis (MOGRAN X)} (October 24-31, 2004, Larnaca, Cyprus) (2005), 168--176.

\bibitem{Popovych&Ivanova2004NVCDCEs}
R.O. Popovych and N.M. Ivanova,
New results on group classification of nonlinear diffusion--convection equations,
{\it J. Phys. A: Math. Gen.} {\bf 37} (2004), 7547--7565; arXiv:math-ph/0306035.

\bibitem{Popovych&Kunzinger&Eshraghi2010}
R.O. Popovych, M.~Kunzinger and H.~Eshraghi,
Admissible transformations and normalized classes of nonlinear Schr\"{o}dinger equations,
{\it Acta Appl. Math.}, {\bf 109} (2010), 315--359; arXiv:math-ph/0611061.

\bibitem{Prokhorova2005}
M. Prokhorova, The structure of the category of parabolic equations,
arXiv:math.AP/0512094, 24 pp.

\bibitem{vaneeva_proc}
O.O. Vaneeva, Group classification via mapping between classes:
an example of semilinear reaction-diffusion equations with exponential nonlinearity,
{\it Proc. of the 5th Math. Phys. Meeting: Summer School and Conf. on Modern Mathematical Physics}
(July 6--17, 2008, Belgrade, Serbia) (2009), 463--471; arXiv:0811.2587.

\bibitem{VJPS2007}
O.O. Vaneeva, A.G. Johnpillai, R.O. Popovych and C. Sophocleous,
Enhanced group analysis and conservation laws of variable coefficient reaction--diffusion equations with power nonlinearities,
{\it J. Math. Anal. Appl.} {\bf 330} (2007), 1363--1386; arXiv:math-ph/0605081.

\bibitem{Vaneeva&Popovych&Sophocleous2009}
O.O. Vaneeva, R.O. Popovych and C. Sophocleous,
Enhanced group analysis and exact solutions of variable coefficient semilinear diffusion equations with a power source,
{\it Acta Appl. Math.} {\bf 106} (2009), 1--46; arXiv:0708.3457.

\bibitem{Vaneeva&Popovych&Sophocleous2011}
O.O. Vaneeva, R.O. Popovych and C. Sophocleous,
Reduction operators of variable coefficient semilinear diffusion equations with an exponential source,
Proceedings of 5th Workshop "Group Analysis of Differential Equations and Integrable Systems" (June 6-10, 2010, Protaras, Cyprus), 2011, 207-219,
arXiv:1010.2046.

\bibitem{Winternitz92}
P. Winternitz and J.P. Gazeau,
Allowed transformations and symmetry classes of variable coefficient Korteweg-de Vries equations,
{\it Phys. Lett. A} {\bf 167} (1992), 246--250.

\end{thebibliography}
\end{document}